\setlist[enumerate]{leftmargin=.5in}
\setlist[itemize]{leftmargin=.5in}
\crefname{hypothesis}{Hypothesis}{Hypotheses}
\title{The benefits of acting locally: Reconstruction algorithms for sparse in levels signals with stable and robust recovery guarantees\thanks{Submitted to the editors on \today\funding{BA and MKR acknowledge the support of the PIMS CRG “High-dimensional Data Analysis”, SFU’s Big Data Initiative “Next Big Question” Fund and by NSERC through grant R611675. SB acknowledges NSERC through grant RGPIN-2020-06766 and the Faculty of Arts and Science of Concordia University.}}}
\author{Ben Adcock\thanks{Simon Fraser University, Burnaby, BC, Canada. 
  (\email{ben\_adcock@sfu.ca})}
\and Simone Brugiapaglia\thanks{Concordia University, Montr\'eal, QC, Canada. 
  (\email{simone.brugiapaglia@concordia.ca}) }
\and Matthew King-Roskamp\thanks{Simon Fraser University, Burnaby, BC, Canada. 
  (\email{mkingros@sfu.ca}})}
\DeclareMathOperator*{\argmin}{\arg\!\min}
\DeclareMathOperator*{\argmax}{\arg\!\max}
\begin{document}

\maketitle

% REQUIRED
\begin{abstract}
The sparsity in levels model recently inspired a new generation of effective acquisition and reconstruction modalities for compressive imaging. Moreover, it naturally arises in various areas of signal processing such as parallel acquisition, radar, and the sparse corruptions problem. Reconstruction strategies for sparse in levels signals usually rely on a suitable convex optimization program. Notably, although iterative and greedy algorithms can outperform convex optimization in terms of computational efficiency and have been studied extensively in the case of standard sparsity, little is known about their generalizations to the sparse in levels setting. In this paper, we bridge this gap by showing new stable and robust uniform recovery guarantees for sparse in level variants of the iterative hard thresholding and the CoSaMP algorithms. Our theoretical analysis generalizes recovery guarantees currently available in the case of standard sparsity and favorably compare to sparse in levels guarantees for weighted $\ell^1$ minimization. In addition, we also propose and numerically test an extension of the orthogonal matching pursuit algorithm for sparse in levels signals.
\end{abstract}

% REQUIRED
\begin{keywords}
sparsity in levels, compressed sensing, iterative and greedy methods, stability and robustness
\end{keywords}

% REQUIRED
\begin{AMS}
  	65K99, % Numerical methods for mathematical programming, optimization and variational techniques
  	94A08, % Image processing (compression, reconstruction, etc.) in information and communication theory
  	90C25 % Convex programming
\end{AMS}

\setcounter{page}{1}

\section{Introduction}

The model of \textit{sparse vectors} has proven exceptionally useful in a wide range of mathematical and engineering  applications. This underlying low-dimensional structure can be taken advantage of by compressed sensing techniques to recover an $s$-sparse vector $x \in \mathbb{C}^{N}$ from noisy linear measurements $y = Ax+e \in \mathbb{C}^{m}$. Furthermore, many applications exhibit structure beyond classical sparsity. Hence, there has been study on more complex \textit{structured sparsity} models such as group or block sparsity, joint sparsity, weighted sparsity, connected tree sparsity and numerous others. In fact, many of these more sophisticated models can lead to boosted practical performance  \cite{baraniuk2010model,DuarteEldarStructuredCS,TraonmilinGribonvalRIP}. 

This paper focuses on the so-called \textit{sparsity in levels} model, which has been shown to provide significant theoretical and practical gains over the standard sparsity model \cite{AHPRBreaking, BastounisHansen}. Sparse in levels vectors exhibit a local sparsity pattern, specified by a vector $(s_{1}, \ldots, s_{r})$, as opposed to a single sparsity $s$. This simple generalization leads to a rich theory of compressed sensing extending naturally from the classical setting \cite{AHPRBreaking,LiAdcockRIP}. To date, the sparse in levels model has been exploited using convex optimization-based decoders. This paper generalizes the corresponding theory of recovery algorithms for sparse vectors, and provides theoretical guarantees of stability and robustness for the recovery of sparse in levels vectors. We focus on  three standard iterative and greedy algorithms: Iterative Hard Thresholding (IHT), Orthogonal Matching Pursuit (OMP) and Compressive Sampling Matching Pursuit (CoSaMP), which are natural first algorithms of interest \cite{BlumensathDavies2008, NeedellTropp2008}. For each, we derive and study a suitable generalization to the sparse in levels model.

\subsection{Motivations}

The sparsity in levels model arises naturally in various contexts. For instance, it can be used to model so-called \textit{sparse and distributed} or \textit{sparse and balanced} vectors, which occur in parallel acquisition problems \cite{ChunAdcock16ITW,AdcockChunParallel} and radar \cite{Dorsch2016}.  The specific case of two levels also arises in the \textit{sparse corruptions} problem \cite{BAEtAlCorruptions,LiCorruptionsConstrApprox}, in which, rather than standard Gaussian or uniformly bounded noise, a small fraction of the measurements of a signal is substantially corrupted. Another natural context of interest is the problem of compressive imaging, where sparse in levels vectors model the wavelet coefficients of natural images. This model allows one to design better sampling strategies over those optimized for standard sparse models, which leads to enhanced recovery performance \cite{AHPRBreaking,OptimalSamplingQuest,AsymptoticCS}. As noted, sparsity in levels has hitherto been exploited using optimization-based decoders. Yet, it is well known that such decoders have some limitations. For instance, they can be computationally intensive. Moreover, a decoder based on minimizing a convex optimization problem is not a method \textit{per se}, as it requires a secondary algorithm to actually compute a solution. Therefore, as noted in \cite{BASBMKRCSwavelet}, there is a gap between compressed sensing theory based on minimizers of optimization problems and its practical performance. With this in mind, the  primary motivation for this work is to derive algorithms for the sparsity in levels model in compressed sensing which are provably stable and robust, and which are also provably polynomial time in $m$ and $N$.

\subsection{Contributions}

 The main contributions of this work are the robustness and stability guarantees for the levels-based algorithms IHTL and CoSaMPL. These are presented in Theorem~\ref{thm:IHTFinalTheorem} and Theorem~\ref{thm:CoSamPlMainTheorem} respectively. These results determine an error bound in certain weighted $\ell^1$-norms depending on the approximate sparsity in levels, and the noise level.
 %finding a solution to a linear system via these iterative approaches gives an approximation that is near-optimally close to the true solution in a weighted norm. 
They directly generalize known results for the sparse case, and  require no stricter assumptions on the corresponding restricted isometry constant. Interestingly, in the equivalent result for the optimization-based Quadratically-Constrained Basis Pursuit (QCBP) decoder with the sparsity in levels model (Theorem~\ref{thm:simplerQCBPresult}), the condition on the restricted isometry constant scales with the number of levels -- a phenomenon which does not occur in either IHTL or CoSaMPL.
%Further, we prove a recovery guarantee on weighted quadratically constrained basis pursuit only depending on the local sparsities, Theorem~\ref{thm:simplerQCBPresult}. This illustrates that the measurement conditions on QCBP scale implicitly with the number of levels, which does not occur for IHTL or CoSaMPL. 
We also propose a generalization of OMP to the levels setting, and examine the numerical performance of these iterative algorithms. Generally, we find that the levels based generalizations IHTL, CoSaMPL improve over their non-local counterparts, whereas OMPL shows situational improvement.

\subsection{Previous work}

\label{s:previous_work}

The IHT and CoSaMP algorithms were introduced to compressed sensing in \cite{BlumensathDavies2008} and \cite{NeedellTropp2008} respectively. Their theoretical analysis can be found, for instance, in \cite{FoucartRauhutCSbook}. The \textit{Iterated Shrkinage} methods~\cite{elad2007wide} served as a precursor for IHT, which was introduced in the context of compressed sensing in the late 2000s~\cite{BlumensathDavies2008,BlumensathDavies2009}. Accelerating IHT using variable stepsize was examined later~\cite{Blumensath2012,BlumensathDavies2010}. The IHTL and CoSaMPL algorithms have been previously examined numerically in \cite{adcock2019iterative} by the authors, wherein these algorithms were first introduced. However, this previous work contained no theoretical analysis, and did not consider OMPL. 
The sparsity in levels model was first introduced in \cite{BastounisHansen}. Nonuniform recovery guarantees for the $\ell^{1}$ minimization decoder were proven first in \cite{AHPRBreaking}, with uniform guarantees later in \cite{LiAdcockRIP} and \cite{AAHWalshWavelet}. This analysis is in a form that is not immediately comparable to our results for IHTL or CoSaMPL. Hence we also provide a slightly modified analysis herein.

As noted, sparsity in levels is a particular type of structured sparsity model. General structured sparsity models are typically formalized within the setting of a \textit{union of subspaces} model \cite{baraniuk2010model,lu2008theory,blumensath2009sampling,eldar2009robust,baraniuk2010model,duarte2011structured}. Arguably starting with the work of \cite{baraniuk2010model}, there has been a substantial amount of research on extending compressed sensing to such general models. Theoretical guarantees for general models can be found in \cite{TraonmilinGribonvalRIP,dirksen2016dimensionality,junge2017generalized} and references therein. Practically, such structure is often promoted through either convex decoders \cite{TraonmilinGribonvalRIP,bach2012structured,micchelli2013regularizers,yu2011solving} or structured iterative or greedy algorithms \cite{baraniuk2010model}. In particular, generalizations of IHT and CoSaMP to the union of subspaces model were introduced and analyzed in \cite{baraniuk2010model,HedgeIndykSchmidt2015}.
 %Similary to IHT, extensions of CoSaMP to the union of subspaces model were developed and analyzed in the context of model-based compressed sensing~\cite{baraniuk2010model, HedgeIndykSchmidt2015}.
%\cite{baraniuk2010model introduced generalizations of IHT and CoSaMP to 
%A generalization of IHT to the union of subspaces model was studied in the context of model-based compressed sensing~\cite{baraniuk2010model,HedgeIndykSchmidt2015}. 
%Similary to IHT, extensions of CoSaMP to the union of subspaces model were developed and analyzed in the context of model-based compressed sensing~\cite{baraniuk2010model, HedgeIndykSchmidt2015}.

Sparsity in levels is a particular type of union of subspaces model.
Note that the IHTL and CoSaMPL algorithms we consider in this paper are special cases of those introduced in \cite{baraniuk2010model} corresponding to the sparsity in levels model. However, the analysis we conduct in this paper, being specific to the sparsity in levels model, is simpler, in that it avoids additional conditions on the measurement matrix required for more general models. We also note in passing that the sparsity in levels models is explicitly parameterized by a vector of local sparsities $(s_1,\ldots,s_r)$. This renders it different, at least semantically, from other structured sparsity models that use only one parameter to quantify sparsity. We further note that sparsity in levels is different from the well-known block sparsity and group sparsity models. These models separate a vector into a finite number of (potentially overlapping) groups, of which at most $s$ may be nonzero.

Most works in structured sparsity derive uniform recovery guarantees with an \textit{a priori} imposed model. Another line of work, initiated by \cite{BoyerBlockStructured} (see also \cite{AdcockChunParallel}) strives to derive recovery guarantees that are local to the support of the underlying vector, i.e.\ independent of any structured sparsity model. This work is done in the context of convex regularizers. It is not clear whether it can be applied to iterative and greedy algorithms, in which the proofs heavily rely on uniform recovery tools, such as structured Restricted Isometry Properties (RIPs).

Recently, the IHTL algorithm has also been independently introduced in \cite{donato2020structured}. An analysis based on the matrix coherence was performed, as well as applications to inverse source problems and off-the-grid recovery. Note that in this paper we provide an analysis based on a type of RIP. It is known that coherence-based analyses generically lead to weaker recovery guarantees than RIP-based approaches (the so-called quadratic bottleneck in standard compressed sensing \cite{FoucartRauhutCSbook}).

\subsection{Outline}

We begin in Section \ref{sec:SparseCase} by reviewing the theory of iterative and greedy methods in the sparse case. This should serve as an introduction, and parallel the ideas in Section \ref{sec:LevelesCase}, in which the sparsity in levels model is discussed. Section \ref{sec:Algs} defines the algorithms of interest for this work, IHTL, CoSaMPL and OMPL. Section \ref{sec:MainResults} contains the statements of the main results, and some discussion on useful special cases. Some numerical demonstrations follow in Section \ref{sec:numerics}. Section \ref{sec:proofs} contains the proofs of these results, and begins itself by outlining the strategy used. Finally, we summarize and state directions of future work in Section \ref{sec:conclusion}.

\subsection{Notation} If $X \leq C Y$, where $C>0$ is a constant independent of any quantity involved in $X$ and $Y$, we write $X \lesssim Y$.  Given a vector $z \in \mathbb{C}^N$, we write $\|x\|_{\ell^p} = (\sum_{i = 1}^N{|x_i|^p})^{1/p}$ for the $\ell^p$-norm of $z$, for any $p>0$. If $w \in \mathbb{R}^N$, the relations $w>0$ or $w = 1$ are read componentwise (i.e., $w_i > 0$ or $w_i = 1$, for every $i = 1, \ldots, N$, respectively). Given a vector of weights $w \in \mathbb{R}^N$ such that $w >0$, we refer to $\|x\|_{\ell_w^1} = \sum_{i = 1}^N w_i |x_i|$ as the weighted $\ell^1$-norm of $z$. We denote the standard inner product of $\mathbb{C}^N$ by $\langle x, y \rangle = \sum_{i = 1}^N x_i y_i^{*}$, for any $x, y \in \mathbb{C}^N$.  $A^*$ denotes the conjugate transpose of a matrix $A$.

\section{Preliminaries: Sparse Case\label{sec:SparseCase}}

Before developing the levels-based techniques, we recall the theory of compressed sensing for the sparse case. This will serve as an important special case and comparison point for the levels theory. A vector $x = (x_i)^{N}_{i=1} \in \mathbb{C}^N$ is \textit{$s$-sparse} if it has at most $1 \leq s \leq N$ nonzero entries: that is,
\begin{equation*}
| \mathrm{supp}(x) | \leq s,    
\end{equation*}
where $\mathrm{supp}(x) = \{ i : x_i \neq 0 \}$ is the \textit{support} of $x$.  Classical compressed sensing concerns the recovery of a sparse vector $x$ from $s \leq m \leq N$ noisy linear measurements
\begin{equation*}
y = A x + e \in \mathbb{C}^m,    
\end{equation*}
where $A \in \mathbb{C}^{m\times N}$ is the measurement matrix and $e \in \mathbb{C}^m$ is an unknown noise vector.
The \textit{best $s$-term approximation error} of $x \in \mathbb{C}^N$ (with respect to the $\ell^1$-norm) is defined as: 
\begin{equation*}
    \sigma_{s}(x)_{\ell^1} = \inf_{z \in \mathbb{C}^{N}} \{ \Vert x- z \Vert_{\ell^1} : \mbox{$z$ is $s$-sparse} \}.
\end{equation*}

A compressed sensing recovery procedure seeks to approximate the true solution $x$ with a vector $\hat{x}$ that is as close as possible to $x$. A standard theoretical property used to assure such recovery is the \textit{Restricted Isometry Property}:

\begin{definition}
Let $1 \leq s \leq N$.  The $s$-th \textit{Restricted Isometry Constant (RIC)} $\delta_s$ of a matrix $A \in \mathbb{C}^{m \times N}$ is the smallest $\delta \geq 0$ such that
\begin{equation}
\label{RIP}
(1-\delta) \| x \|^2_{\ell^2} \leq \| A x \|^2_{\ell^2} \leq (1+\delta) \| x \|^2_{\ell^2},\quad \mbox{for all $s$-sparse $x$}.
\end{equation}
If $0 < \delta_s < 1$ then $A$ is said to have the \textit{Restricted Isometry Property (RIP)} of order $s$.
\end{definition}

Matrices that satisfy the RIP have been well studied, with a classical example being that of Gaussian random matrices (See e.g.~\cite[Theorem 9.27]{FoucartRauhutCSbook}). 
It is well know that other large classes of random matrices satisfy the RIP with high probability, such as subgaussian matrices, Bernoulli matrices, and subsampled bounded orthonormal systems \cite{FoucartRauhutCSbook}.

\subsection{IHT and CoSaMP}

For a vector $x \in \mathbb{C}^{N}$ (not necessarily sparse), let $L_{s}(x)$ be the index set of its $s$ largest entries in absolute value.  The \textit{hard thresholding} operator $H_{s} : \mathbb{bbC}^N \rightarrow \mathbb{C}^N$ is, for $x = (x_i)^{N}_{i=1} \in \mathbb{C}^{N}$,  defined by
\begin{equation*}
H_{s}(x) = (H_{s}(x)_{i})_{i=1}^{N},\qquad    H_{s}(x)_{i} = \begin{cases} x_{i} &i \in L_{s}(x) \\
    0 &\text{otherwise}
        \end{cases}.
\end{equation*}
That is, $H_{s}(x)$ is the vector of the $s$ largest entries of $x$ with all other entries set to zero.  The classical \textit{Iterative Hard Thresholding (IHT)} algorithm is now defined as follows:

\begin{tcolorbox}
Function $\hat{x}=\mathrm{IHT}(A,y,s)$ 
\\
\noindent \textbf{Inputs:} $A \in \mathbb{C}^{m\times N}$, $y \in \mathbb{C}^m$, sparsity $s$ 
\\
\noindent \textbf{Initialization:} $x^{(0)} \in \mathbb{C}^N$ (e.g.\ $x^{(0)} = 0$)
\\
%\noindent \textbf{Initialization:} Vector  (e.g.\ $x^{(0)} = 0$) \\
\noindent \textbf{Iterate:} Until some stopping criterion is met at $n = \overline{n}$, set 
\begin{equation*}
    x^{(n+1)}= H_{s}(x^{(n)} + A^{*}(y - Ax^{(n)}))
\end{equation*}

\noindent \textbf{Output:} $\hat{x} = x^{(\overline{n})}$
\end{tcolorbox}

The \textit{Compressive Sampling Matching Pursuit (CoSaMP)} algorithm is:

\begin{tcolorbox}
Function $\hat{x}=\mathrm{CoSaMP}(A,y,s)$ \\
\noindent \textbf{Inputs:} $A \in \mathbb{C}^{m\times N}$, $y \in \mathbb{C}^m$, sparsity $s$ \\
\noindent \textbf{Initialization:} $x^{(0)} \in \mathbb{C}^N$ (e.g.\ $x^{(0)} = 0$) 
\\
\noindent \textbf{Iterate:} Until some stopping criterion is met at $n = \overline{n}$, set 
\begin{align*}
    U^{(n+1)} &= \text{supp}(x^{(n)}) \cup L_{2s}(A^{*}(y-Ax^{(n)})) \\
    u^{(n+1)} &\in \argmin_{z \in \mathbb{C}^N} \{ \Vert y -Az \Vert_{\ell^{2}} \: : \: \text{supp}(z) \subset U^{(n+1)} \} \\
    x^{(n+1)} &= H_{s}(u^{(n+1)})
\end{align*}

\noindent \textbf{Output:} $\hat{x} = x^{(\overline{n})}$
\end{tcolorbox}

 As stated, the RIP is a sufficient condition for these algorithms to recover a sparse solution. Generalizing the following two results to the sparse in levels setting will be the overall goal of this work.

\begin{theorem} (E.g.\ \cite[Theorem 6.21]{FoucartRauhutCSbook})
Suppose that the $6s$-th RIC constant of $A \in \mathbb{C}^{m \times N}$ satisfies $\delta_{6s} < \frac{1}{\sqrt{3}}$. Then, for all $x \in \mathbb{C}^{N}$ and $e \in \mathbb{C}^{m}$, the sequence $(x^{(n)})_{n \geq 0}$ defined by $\mathrm{IHT}(A,y,2s)$ with $y = A x + e$ and $x^{(0)} = 0$ satisfies, for any $n \geq 0$,
\begin{align*}
 \Vert x - x^{(n)} \Vert_{\ell^{1}} &\leq C\sigma_{s}(x)_{\ell^{1}} + D \sqrt{s} \Vert e \Vert_{\ell^{2}} + 2 \sqrt{s} \rho^{n} \Vert x \Vert_{\ell^{2}}, \\
\Vert x - x^{(n)} \Vert_{\ell^{2}} &\leq \frac{C}{\sqrt{s}}\sigma_{s}(x)_{\ell^{1}} + D \Vert e \Vert_{\ell^{2}} + \rho^{n} \Vert x \Vert_{\ell^{2}},
\end{align*}
where $\rho = \sqrt{3}\delta_{6s} < 1$, and $C,D >0$ are constants only depending on $\delta_{6s}$.
\end{theorem}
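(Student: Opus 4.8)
The plan is to follow the standard template for analyzing IHT: establish a one-step contraction estimate for the distance between the iterate and the best $s$-term part of $x$, iterate it as a geometric recursion, and finally translate the resulting $\ell^2$ bound into the stated $\ell^1$ and $\ell^2$ error bounds. First I would reduce to an effectively $s$-sparse problem. Let $S = L_s(x)$ be the support of the best $s$-term approximation, so $|S| = s$, and absorb the tail into an effective noise by writing $y = A x_S + e'$ with $e' = A x_{S^c} + e$. Because the algorithm thresholds at level $2s$, every iterate $x^{(n)} = H_{2s}(\cdot)$ is $2s$-sparse, so $x^{(n)} - x_S$ is supported on a set of size at most $3s$ and the relevant support union $S \cup \mathrm{supp}(x^{(n)}) \cup \mathrm{supp}(x^{(n+1)})$ has size at most $5s \le 6s$.

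The heart of the argument is the contraction. Using that $x^{(n+1)}$ is the best $2s$-term approximation of $v^{(n)} = x^{(n)} + A^*(y - A x^{(n)})$, so that $\|v^{(n)} - x^{(n+1)}\|_{\ell^2} \le \|v^{(n)} - x_S\|_{\ell^2}$ since $x_S$ is $2s$-sparse, together with the standard RIP consequences $\|((\mathrm{Id} - A^*A) u)_T\|_{\ell^2} \le \delta_t \|u\|_{\ell^2}$ for $|\mathrm{supp}(u) \cup T| \le t$ and $\|(A^* e')_T\|_{\ell^2} \le \sqrt{1 + \delta_{|T|}}\,\|e'\|_{\ell^2}$, I would derive
\[
\|x^{(n+1)} - x_S\|_{\ell^2} \le \rho \,\|x^{(n)} - x_S\|_{\ell^2} + \tau \|e'\|_{\ell^2}, \qquad \rho = \sqrt{3}\,\delta_{6s},
\]
for a constant $\tau$ depending only on $\delta_{6s}$. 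The hypothesis $\delta_{6s} < 1/\sqrt{3}$ guarantees $\rho < 1$. Unrolling this recursion from $x^{(0)} = 0$ gives
\[
\|x^{(n)} - x_S\|_{\ell^2} \le \rho^{n}\|x_S\|_{\ell^2} + \frac{\tau}{1 - \rho}\|e'\|_{\ell^2} \le \rho^{n}\|x\|_{\ell^2} + \frac{\tau}{1-\rho}\|e'\|_{\ell^2}.
\]

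Finally I would convert back to bounds on $x$ itself. Splitting $\|x - x^{(n)}\| \le \|x_{S^c}\| + \|x_S - x^{(n)}\|$ and using that $x_S - x^{(n)}$ is supported on at most $3s$ indices (so $\|\cdot\|_{\ell^1} \le \sqrt{3s}\,\|\cdot\|_{\ell^2}$) produces the $\ell^1$ bound, while the $\ell^2$ bound follows directly; the tail $\|x_{S^c}\|$ and the term $\|A x_{S^c}\|_{\ell^2}$ hidden inside $e'$ are both controlled by the Stechkin-type estimate $\|A z\|_{\ell^2} \le \sqrt{1 + \delta_s}\,(\|z\|_{\ell^2} + s^{-1/2}\|z\|_{\ell^1})$, proved by decomposing $z$ into consecutive blocks of $s$ magnitude-sorted coordinates and applying the RIP to each. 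This turns every tail contribution into a multiple of $s^{-1/2}\sigma_s(x)_{\ell^1}$ plus $\|e\|_{\ell^2}$, matching the claimed form of the two bounds.

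I expect the main obstacle to be the contraction estimate itself, specifically extracting the sharp rate $\rho = \sqrt{3}\,\delta_{6s}$ rather than a cruder constant. This requires carefully splitting the residual over the support union, keeping the cross terms aligned so that the several applications of the RIP collapse into the single factor $\sqrt{3}\,\delta_{6s}$, and invoking the best $2s$-term optimality of $H_{2s}$ at precisely the right place. The remaining steps (the geometric summation and the norm conversions via Stechkin) are routine once this estimate is in hand.
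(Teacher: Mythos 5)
Your overall template (one-step contraction, geometric recursion, then conversion to $\ell^1$/$\ell^2$ bounds) is exactly the one used here and in \cite[Thms.\ 6.18, 6.21, Lem.\ 6.23]{FoucartRauhutCSbook}, and it is the same skeleton the paper uses for the levels generalization (Theorems~\ref{thm:IHTIntermediateBound} and \ref{thm:WeightedTheorem}). However, there is a genuine gap in your choice of reference support. You compare the iterates against $x_S$ with $S = L_s(x)$ of size $s$, whereas the argument requires comparison against $x_T$ with $T = L_{2s}(x)$ of size $2s$ (in the paper's levels proof this is the set $\Xi$ of the largest $2s_i$ entries per level, i.e.\ $\kappa = 2$ in Theorem~\ref{thm:WeightedTheorem}). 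The difference is not cosmetic: your final conversion step claims that every tail contribution becomes a multiple of $s^{-1/2}\sigma_s(x)_{\ell^1}$, but with $|S| = s$ the tail terms are $\Vert x_{S^c}\Vert_{\ell^2} = \sigma_s(x)_{\ell^2}$ and the leading block of $\Vert A x_{S^c}\Vert_{\ell^2}$, and $\sigma_s(x)_{\ell^2}$ is \emph{not} bounded by $s^{-1/2}\sigma_s(x)_{\ell^1}$ in general (take $x$ with $s$ entries equal to $1$ plus one more entry equal to $1$: then $\sigma_s(x)_{\ell^2} = 1$ while $s^{-1/2}\sigma_s(x)_{\ell^1} = s^{-1/2}$). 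Only after discarding the $2s$ largest entries does the shifted Stechkin bound $\Vert x_{T^c}\Vert_{\ell^2} \leq s^{-1/2}\sigma_s(x)_{\ell^1}$ hold, because each magnitude-sorted block is then dominated in $\ell^2$ by $s^{-1/2}$ times the $\ell^1$ norm of the \emph{previous} block, and the first block left over already lies inside $T^c$'s predecessor. Your support-size bookkeeping betrays the same issue: with $|S|=s$ the union $S \cup \mathrm{supp}(x^{(n)}) \cup \mathrm{supp}(x^{(n+1)})$ has size $5s$, whereas the hypothesis $\delta_{6s}$ and the constant $2\sqrt{s} = \sqrt{4s}$ in the $\ell^1$ bound both come precisely from the $2s$-sized reference set ($2s+2s+2s = 6s$, and $x_T - x^{(n)}$ supported on $\leq 4s$ indices).

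A secondary, smaller point: starting the contraction from the global best-$2s$-term optimality $\Vert v^{(n)} - x^{(n+1)}\Vert_{\ell^2} \leq \Vert v^{(n)} - x_T\Vert_{\ell^2}$ naturally yields $\rho = 2\delta$, not $\sqrt{3}\,\delta$. The sharp factor $\sqrt{3}$ requires the finer argument of \cite[Thm.\ 6.18]{FoucartRauhutCSbook} (reproduced in levels form in the proof of Theorem~\ref{thm:IHTIntermediateBound}): compare the thresholded coordinates on $\Delta \setminus \Delta^{n+1}$ versus $\Delta^{n+1}\setminus\Delta$, pass to the symmetric difference to pick up a factor $\sqrt{2}$, and combine with the orthogonal split over $\Delta^{n+1}$ and its complement to land on $3\Vert P_{\Delta^{n+1}\cup\Delta}(\cdots)\Vert_{\ell^2}^2$. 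You correctly flag this as the main obstacle, but the proposal as written does not supply it. Fixing the reference set to size $2s$ and carrying out this symmetric-difference argument would align your proof with the paper's.
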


\begin{theorem} (E.g.\ \cite[Theorem 6.28]{FoucartRauhutCSbook})  Suppose that the $8s$-th RIC constant of $A$ satisfies
\begin{equation*}
    \delta_{8s} < \tfrac{\sqrt{\frac{11}{3}} - 1}{4} \approx 0.478.
\end{equation*}
Then, for all $x \in \mathbb{C}^{N}$ and $e \in \mathbb{C}^{m}$ the sequence $(x^{(n)})_{n \geq 0}$ defined by $\mathrm{CoSaMP}(A,y,2s)$ with $y=Ax + e$ and $x^{(0)} = 0$, satisfies for any $n \geq 0$,
\begin{align*}
    \Vert x - x^{(n)} \Vert_{\ell^{1}} &\leq C\sigma_{s}(x)_{\ell^1} + D \sqrt{s} \Vert e \Vert_{\ell^{2}} + 2 \sqrt{s} \rho^{n} \Vert x \Vert_{\ell^{2}}, \\
\Vert x - x^{(n)} \Vert_{\ell^{2}} &\leq \frac{C}{\sqrt{s}}\sigma_{s}(x)_{\ell^{1}} + D\Vert e \Vert_{\ell^{2}} + 2 \rho^{n} \Vert x \Vert_{\ell^{2}},
\end{align*}
where  $\rho = \sqrt{\frac{2\delta_{8s}^{2}(1+3\delta_{8s}^{2})}{1-\delta_{4s}^{2}}} < 1$ and $C,D >0$ are constants only depending on $\delta_{8s}$.
\end{theorem}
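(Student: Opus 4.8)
The plan is to adapt the standard RIP-based convergence analysis of CoSaMP (e.g.\ \cite[Section 6.3]{FoucartRauhutCSbook}), noting that since the algorithm is run with sparsity parameter $2s$ the relevant restricted isometry orders are $\delta_{4s}$ and $\delta_{8s}$: the iterates $x^{(n)} = H_{2s}(u^{(n)})$ are $2s$-sparse, the identification step appends the $L_{4s}$ largest indices so that $|U^{(n+1)}| \le 6s$, and all comparisons against the $2s$-sparse target live on index sets of size at most $8s$. First I would reduce to recovering an exactly sparse vector plus effective noise: set $S = L_{2s}(x)$, $\bar{x} = H_{2s}(x)$, and $e' = A(x - \bar{x}) + e$, so that $y = A\bar{x} + e'$ and the problem becomes one of tracking $\|\bar{x} - x^{(n)}\|_{\ell^2}$.

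The crux is a one-step contraction of the form $\|\bar{x} - x^{(n+1)}\|_{\ell^2} \le \rho \|\bar{x} - x^{(n)}\|_{\ell^2} + \tau \|e'\|_{\ell^2}$, established by chaining three RIP estimates matching the three stages of CoSaMP. \emph{(i) Identification:} since $\mathrm{supp}(x^{(n)}) \subseteq U^{(n+1)}$, one has $(\bar{x} - x^{(n)})_{\overline{U^{(n+1)}}} = \bar{x}_{\overline{U^{(n+1)}}}$, and because $L_{4s}(A^*(y - Ax^{(n)}))$ captures the largest correlations, the RIP bounds the missed mass $\|\bar{x}_{\overline{U^{(n+1)}}}\|_{\ell^2}$ in terms of $\|\bar{x} - x^{(n)}\|_{\ell^2}$ and $\|e'\|_{\ell^2}$. \emph{(ii) Estimation:} the least-squares optimality $(A^*(y - Au^{(n+1)}))_{U^{(n+1)}} = 0$ together with the RIP on $U^{(n+1)}$ controls $\|\bar{x} - u^{(n+1)}\|_{\ell^2}$ by $\|\bar{x}_{\overline{U^{(n+1)}}}\|_{\ell^2}$ and $\|e'\|_{\ell^2}$. \emph{(iii) Thresholding:} since $\bar{x}$ is $2s$-sparse and $x^{(n+1)} = H_{2s}(u^{(n+1)})$, the elementary bound $\|\bar{x} - x^{(n+1)}\|_{\ell^2} \le 2\|\bar{x} - u^{(n+1)}\|_{\ell^2}$ holds. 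Composing these yields the displayed contraction, and the smallness condition on $\delta_{8s}$ is exactly what forces $\rho = \sqrt{2\delta_{8s}^2(1+3\delta_{8s}^2)/(1-\delta_{4s}^2)} < 1$.

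With the contraction in hand, I would unroll the recursion from $x^{(0)} = 0$ and sum the geometric series to obtain $\|\bar{x} - x^{(n)}\|_{\ell^2} \le \rho^n \|\bar{x}\|_{\ell^2} + \tfrac{\tau}{1-\rho}\|e'\|_{\ell^2}$. The effective noise is then expanded as $\|e'\|_{\ell^2} \le \|e\|_{\ell^2} + \|A(x - \bar{x})\|_{\ell^2}$, and the tail is absorbed via the standard RIP estimate $\|A(x - \bar{x})\|_{\ell^2} \lesssim \|x - \bar{x}\|_{\ell^2} + \tfrac{1}{\sqrt{s}}\|x - \bar{x}\|_{\ell^1} \lesssim \tfrac{1}{\sqrt{s}}\sigma_s(x)_{\ell^1}$, using $\|x - \bar{x}\|_{\ell^1} = \sigma_{2s}(x)_{\ell^1} \le \sigma_s(x)_{\ell^1}$ and $\sigma_{2s}(x)_{\ell^2} \lesssim \tfrac{1}{\sqrt{s}}\sigma_s(x)_{\ell^1}$. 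Adding the triangle-inequality gap $\|x - \bar{x}\|_{\ell^2}$ to pass from $\bar{x}$ back to $x$, and using $\|\bar{x}\|_{\ell^2} \le \|x\|_{\ell^2}$, delivers the $\ell^2$ estimate. For the $\ell^1$ estimate I would split $\|x - x^{(n)}\|_{\ell^1} \le \|\bar{x} - x^{(n)}\|_{\ell^1} + \sigma_{2s}(x)_{\ell^1}$ and, since $\bar{x} - x^{(n)}$ is $4s$-sparse, apply $\|\bar{x} - x^{(n)}\|_{\ell^1} \le 2\sqrt{s}\,\|\bar{x} - x^{(n)}\|_{\ell^2}$ before reusing the $\ell^2$ bound.

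I expect the one-step contraction, and within it the identification estimate of step (i), to be the main obstacle: obtaining the sharp constant $\rho$ requires carefully combining the RIP lower and upper inequalities at orders $4s$ and $8s$ and tracking how the least-squares residual interacts with the missed-support term, and it is precisely this bookkeeping that pins down the threshold on $\delta_{8s}$. The remaining steps are routine given the sparse-case template, which this statement directly mirrors.
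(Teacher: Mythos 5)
Your overall architecture is the right one and matches how the paper (following \cite[Thms.\ 6.27--6.28]{FoucartRauhutCSbook}, and its levels analogues Theorems \ref{CoSaMPLIntermediateTheorem} and \ref{thm:WeightedTheorem}) actually proceeds: an intermediate one-step contraction toward $P_\Delta x$ with effective noise $e' = AP_{\Delta^c}x + e$, followed by a conversion lemma that trades $\Vert AP_{\Delta^c}x\Vert_{\ell^2}$ for $\sigma_s(x)_{\ell^1}/\sqrt{s}$ and produces the $\ell^1$ and $\ell^2$ bounds. Your route for the tail term (the compressible-vector estimate $\Vert Av\Vert_{\ell^2} \lesssim \Vert v\Vert_{\ell^2} + \Vert v\Vert_{\ell^1}/\sqrt{s}$) differs from the block-decomposition/Stechkin argument used in Theorem \ref{thm:WeightedTheorem}, but both are standard and either works.

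The genuine gap is in your step \emph{(iii)}. The bound $\Vert \bar{x} - x^{(n+1)}\Vert_{\ell^2} \le 2\Vert \bar{x} - u^{(n+1)}\Vert_{\ell^2}$ is correct but too lossy to yield the stated $\rho$ and the stated threshold on $\delta_{8s}$. Chaining it with your steps (i) and (ii) gives a contraction factor of order $2\sqrt{2}\,\delta/\sqrt{1-\delta^2}$, which is $<1$ only for $\delta_{8s} < 1/3$ --- strictly worse than the claimed $\delta_{8s}^2 < (\sqrt{11/3}-1)/4$, i.e.\ $\delta_{8s} \lessapprox 0.478$, and it does not reproduce $\rho = \sqrt{2\delta_{8s}^2(1+3\delta_{8s}^2)/(1-\delta_{8s}^2)}$. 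The sharp constants require splitting the thresholding error on and off $U^{(n+1)}$: since $x^{(n+1)}$ and $u^{(n+1)}$ agree in being supported in $U^{(n+1)}$, one has
\begin{equation*}
\Vert \bar{x} - x^{(n+1)}\Vert_{\ell^2}^2 \le \Vert P_{(U^{(n+1)})^c}(\bar{x} - u^{(n+1)})\Vert_{\ell^2}^2 + 4\,\Vert P_{U^{(n+1)}}(\bar{x} - u^{(n+1)})\Vert_{\ell^2}^2,
\end{equation*}
so only the on-support component incurs the factor $4$, and the estimation step then controls that component by $\tfrac{\delta}{\sqrt{1-\delta^2}}$ times the off-support one (this is inequality \eqref{CoSaMplFinalStep2} in the levels proof). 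This is exactly the ``bookkeeping'' you flag as the obstacle, but as written your step (iii) forecloses it: with the crude factor-$2$ bound you prove a weaker theorem with a smaller admissible RIC. Since the paper explicitly relies on these precise constants carrying over verbatim to the levels setting, this refinement is not optional here.
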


We note the precise value of $\rho$ is not stated in the original results, but is derived within their proofs. We include this here, as the later main results share precisely the same constants. Finally, we state the formulation of the Orthogonal Matching Pursuit (OMP) algorithm, introduced in \cite{mallatzhang1993matching}. In this sparse setting, a similar theorem to the above holds. While we propose a generalization of OMP to the sparse-in-levels setting, we do not prove a theoretical result for this proposed algorithm, and provide simply some numerical evidence of good performance.

\begin{center}
\begin{tcolorbox}
Function $\hat{x}=\mathrm{OMP}(A,y,s)$ \\
\noindent \textbf{Inputs:} $A \in \mathbb{C}^{m\times N}$, $y \in \mathbb{C}^m$, sparsity $s$ \\
\noindent \textbf{Initialization:} $x^{(0)} \in \mathbb{C}^N$ (e.g.\ $x^{(0)} = 0$) , $S^{(0)} = \emptyset$
\\
\noindent \textbf{Iterate:} For each $k = 1, \ldots ,s$, set
         \begin{equation*}
             j_{k} \in \argmax_{j=1, \ldots, N} \vert (A^{*}(y- A x^{(k-1)}))_{j} \vert
         \end{equation*}
        Update $S^{(k)} = S^{(k-1)} \cup  \{ j_{k} \}$ \\
        Set $x^{(k)} \in \displaystyle\argmin_{z \in \mathbb{C}^N: \, \text{supp}(z) \subseteq S^{(k)}} \left\Vert y - A z \right\Vert_{\ell^{2}} $
     
\noindent \textbf{Output:} $\hat{x} = x^{(s)}$
\end{tcolorbox}
\end{center}

\section{Compressed sensing for sparse in levels vectors} \label{sec:LevelesCase}

With the sparse case summarized, we now move to the levels case. We first must define an appropriate local version of sparsity, recall an RIP-type property, and define the algorithms of interest.

\begin{definition}
Let $r \geq 1$, $\bm{M} = (M_1,\ldots,M_r)$, where $1 \leq M_1 < M_2 < \ldots < M_r = N$ and $\bm{s} = (s_1,\ldots,s_r)$, where $s_k \leq M_k - M_{k-1}$ for $k=1,\ldots,r$, with $M_0 = 0$.  A vector $x = (x_i)^{N}_{i=1} \in \mathbb{C}^N$ is \textit{$(\bm{s},\bm{M})$-sparse} if
\begin{equation*}
\left | \text{supp}(x) \cap \{ M_{k-1}+1,\ldots,M_k \} \right | \leq s_k,\quad k = 1,\ldots,r.
\end{equation*}
We write $\Sigma_{\bm{s},\bm{M}} \subseteq \mathbb{C}^N$ for the set of $(\bm{s},\bm{M})$-sparse vectors.
\end{definition}
 This model was first introduced in \cite{BastounisHansen}. We refer to $s = s_1+\ldots+s_r$ for the total sparsity, and we denote by $D_{\bm{s},\bm{M}} \subset \{ 1, \ldots M \}$ the set of all $(\bm{s},\bm{M})$-sparse index sets. We refer to $\bm{M}$ as \textit{sparsity levels} and $\bm{s}$ as \textit{local sparsities}. Moreover, any index set of the form $\{M_{k-1}+1,\ldots,M_k\}$ for some $k = 1,\ldots,r$ is said to be a \textit{level}. Of further use are the projection operators onto some index set and onto a level. Given $x \in \mathbb{C}^{N}$, these are defined as
\begin{align*}
    (P_{\Delta}x)_{i} & = \begin{cases} x_{i} & i \in \Delta \\ 0 & \text{otherwise}\end{cases}, \\
    %\quad \text{ and } \quad 
    (P^{M_{k}}_{M_{k+1}}x)_{i} & = \begin{cases} x_{i} & i \in \{ M_{k}+1, \ldots M_{k+1} \} \\ 0 & \text{otherwise}\end{cases}.
\end{align*}
While the latter is obviously a special case of the former, it is used with enough frequency to warrant special notation. %\simo{}{Throughout the paper, we assume to choose $\bm{M}$ such that $M_r = N$.} \simoChat{[I guess we could remove this assumption in some places, but it's easier to make this choice.]}

\begin{definition}
Given a vector of weights $w \in \mathbb{R}^N$ with $w>0$, the \textit{best $(\bm{s},\bm{M})$-term approximation error} of $x \in \mathbb{C}^N$ (with respect to the weighted $\ell^1$-norm) is defined as
\begin{equation*}
    \sigma_{s}(x)_{\ell^{1}_{w}} = \inf_{z \in \mathbb{C}^{N}} \{ \Vert x- z \Vert_{\ell^{1}_{w}} : \mbox{$z$ is $(\bm{s},\bm{M})$-sparse} \}.
\end{equation*}
\end{definition}
This definition in the unweighted case was introduced alongside the sparsity in levels model in \cite{BastounisHansen}, and later extended to the weighted case in \cite{AAHWalshWavelet}.
Past works on convex optimization-based decoders for the sparsity in levels model have found that better uniform recovery guarantees can be obtained by replacing the $\ell^1$-norm with a suitable weighted $\ell^1$-norm \cite{AAHWalshWavelet,TraonmilinGribonvalRIP}. We shall find a similar phenomenon occurs in the case of iterative and greedy methods in Section \ref{sec:MainResults}. As in these previous works, we shall suppose that the weights are constant on each level
\begin{equation}
    w_i = w^{(k)},\quad M_{k-1} < i \leq M_k,\ 1 \leq k \leq r,
    \label{constantweights}
\end{equation}
for some $w^{(k)} > 0$. In particular, we shall typically make the choice
\begin{equation}\label{optimalweights}
    w^{(k)} = \sqrt{s/s_k},
\end{equation}
where $s_k$ is the $k$th local sparsity and $s$ is the total sparsity. Note that these weights are not involved in the definition of the levels-based algorithms proposed in Section~\ref{sec:Algs}. The introduction of this weighted setting is instead aimed at improving the recovery guarantees presented in Section~\ref{sec:MainResults}.

Much like the sparse setting, the main tool used to prove recovery guarantees is a restricted isometry property, here in levels:

\begin{definition}
Let $\bm{M} = (M_1,\ldots,M_r)$ be sparsity levels and $\bm{s} = (s_1,\ldots,s_r)$ be local sparsities.  The $(\bm{s},\bm{M})$-th \textit{Restricted Isometry Constant in Levels (RICL)} $\delta_{\bm{s},\bm{M}}$ of a matrix $A \in \mathbb{C}^{m \times N}$ is the smallest $\delta \geq 0$ such that
\begin{equation}
(1-\delta) \Vert x \Vert^2_{\ell^2} \leq \Vert A x \Vert^2_{\ell^2} \leq (1+\delta) \Vert  x \Vert^2_{\ell^2},\quad \forall x \in \Sigma_{\bm{s},\bm{M}}.
\end{equation}
If $0 < \delta_{\bm{s},\bm{M}} < 1$ then the matrix is said to have the \textit{Restricted Isometry Property in Levels (RIPL)} of order $(\bm{s},\bm{M})$.
\end{definition}

As expected, recovery results will require assumptions on the order of RIPL. While constructing matrices that satisfy the RIPL is not the purpose of this paper, we mention in passing that there are many examples in literature.
%Matrices satisfying this RIPL can be designed based on previous theory. 
For example,
a random matrix with independent normal entries having mean zero and variance $1/m$ has the RIPL of order $\delta_{\bm{s},\bm{M}} \leq \delta$ with probability at least $1-\epsilon$, provided
\begin{equation}m \geq C \delta^{-2} \left ( \sum^{r}_{k=1} s_k \log \left ( \tfrac{\mathrm{e} (M_k - M_{k-1}) }{s_k} \right ) + \log(\epsilon^{-1}) \right ), 
\label{gaussmeasurmentCond} 
\end{equation}
which follows from \cite{dirksen2016dimensionality} as noted in \cite{LiAdcockRIP}.
Analogously to the sparse case, other wide classes of random matrices can be shown to satisfy the RIPL with high probability. For example, subsampled unitary matrices~\cite{LiAdcockRIP} and the particular case of binary sampling with the Walsh--Hadamard transform \cite{AAHWalshWavelet}. Designing matrices that take advantage of the local sparsity in levels structure of the vector $x$ leads to significant benefits~\cite{AHPRBreaking,AsymptoticCS}. 

\section{Levels-based algorithm definitions}
\label{sec:Algs}

Closely examining IHT and CoSaMP, the iterative step in either case is designed to promote that $(x^{(n)})_{n\geq0}$ converges to a solution to the linear system $Ax=y$. In particular the sparsity assumption is \textit{only} enforced via the thresholding operator. With this observation it is natural that a generalization of these algorithm to the sparse in levels setting should only alter the thresholding operation.

Let $\bm{M} = (M_1,\ldots,M_r)$ and $\bm{s} = (s_1,\ldots,s_r)$ be sparsity levels and local sparsities, respectively. Then any vector $x \in \mathbb{C}^N$ can be written uniquely as $x = \sum^{r}_{k=1} P_{M_{k}}^{M_{k-1}} x$.  For arbitrary $x \in \mathbb{C}^{N}$, we write $L_{\bm{s},\bm{M}}(x)$ for the set
\begin{equation*}
    L_{\bm{s},\bm{M}}(x) = \bigcup_{k=1}^{r} L_{s_{k}}\left(P^{M_{k-1}}_{M_{k}}x\right).
\end{equation*}
In other words this is the index set consisting of, in each level $\{ M_{k-1}+1, \ldots , M_{k} \}$, the largest absolute $s_{k}$ entries of $x$ in that level.
This allows the definition of the \textit{hard thresholding in levels} operator $H_{\bm{s},\bm{M}} : \mathbb{C}^N \rightarrow \mathbb{C}^N$ by $H_{\bm{s},\bm{M}}(x) = (H_{\bm{s},\bm{M}}(x)_{i})_{i=1}^{N}$ with
\begin{equation*}
H_{\bm{s},\bm{M}}(x)_{i} = \begin{cases} x_{i} &i \in L_{\bm{s},\bm{M}}(x) \\
    0 &\text{otherwise}
        \end{cases},\qquad x \in \mathbb{C}^{N}.  
\end{equation*}
 
With these preliminaries we can define the \textit{IHT in Levels (IHTL)} algorithm as

\begin{tcolorbox}
Function $\hat{x}=\mathrm{IHTL}(A,y,\bm{s},\bm{M})$ \\
\noindent \textbf{Inputs:} $A \in \mathbb{C}^{m\times N}$, $y \in \mathbb{C}^m$, local sparsities $\bm{s}$, sparsity levels $\bm{M}$
\\
\noindent \textbf{Initialization:} $x^{(0)} \in \mathbb{C}^N$ (e.g.\ $x^{(0)} = 0$) \\
\noindent \textbf{Iterate:} Until some stopping criterion is met at $n = \overline{n}$, set 
\begin{equation*}
    x^{(n+1)}= H_{\bm{s},\bm{M}}(x^{(n)} + A^{*}(y - Ax^{(n)}))
\end{equation*}

\noindent \textbf{Output:} $\hat{x} = x^{(\overline{n})}$

\end{tcolorbox}

and \textit{CoSaMP in Levels (CoSaMPL)} is defined by

\begin{tcolorbox}

Function $\hat{x}=\mathrm{CoSaMPL}(A,y,\bm{s},\bm{M})$ \\
\noindent \textbf{Inputs:} $A \in \mathbb{C}^{m\times N}$, $y \in \mathbb{C}^m$, local sparsities $\bm{s}$, sparsity levels $\bm{M}$
\\
\noindent \textbf{Initialization:} $x^{(0)} \in \mathbb{C}^N$ (e.g.\ $x^{(0)} = 0$) \\
\noindent \textbf{Iterate:} Until some stopping criterion is met at $n = \overline{n}$, set 
\begin{align*}
    U^{(n+1)} &= \text{supp}(x^{(n)}) \cup L_{2\bm{s},\bm{M}}(A^{*}(y-Ax^{(n)})) \\
    u^{(n+1)} & \in \argmin_{z \in \mathbb{C}^N} \{ \Vert y -Az \Vert_{\ell^{2}} \: : \: \text{supp}(z) \subset U^{(n+1)} \} \\
    x^{(n+1)} &= H_{\bm{s},\bm{M}}(u^{(n+1)})
\end{align*}

\noindent \textbf{Output:} $\hat{x} = x^{(\overline{n})}$

\end{tcolorbox}
We again emphasize here that these differ from the non-levels based versions only in the threshold operator and the index set $L_{2\bm{s},\bm{M}}$, and do not change the main iteration steps at all. Thus much of the analysis and intuition of these algorithms in the sparse case may still be applied, albeit with care.

As already mentioned in Section~\ref{s:previous_work}, IHTL and CoSaMPL are specific instances of model-based IHT and CoSaMP \cite{baraniuk2010model}. Note that projections onto the model sets, which may be hard to compute for general structured sparsity models (e.g., tree-based models), are straightforward in the case of sparsity in levels, being defined just in terms of the operators $H_{\bm{s},\bm{M}}$ and $L_{\bm{s},\bm{M}}$, which are easy to compute.

The final algorithm of interest is greedy Orthogonal Matching Pursuit (OMP), first described in \cite{tropp2007signal}. OMP also admits a generalization to this new setting. A notable benefit of OMP in contrast to IHT and CoSaMP is that the algorithm terminates after a fixed number $s$ of iterations given by the total sparsity. Furthermore, the intermediate least squares problems never exceed size $m \times s$. These together can, in certain cases (e.g., when the target sparsity $s$ is very small), save significant computational time. Generalizing OMP to the levels setting is motivated by these desirable features. However, the usual operation of greedy index selection becomes more subtle, as it is not immediately obvious how one should select the ``best'' indices: there is a choice of whether to proceed level by level in parallel, or select indices in a sequential fashion. In the following formulation, we propose a greedy index selection over all levels not yet saturated in the approximation. This is the sequential approach mentioned above, and seeks to reduce the approximation error at each step as much as possible. This formulation of OMPL performs well numerically in our experiments, but is not examined within this work from a theoretical perspective. 

\newpage

\begin{center}
\begin{tcolorbox}
Function $\hat{x}=\mathrm{OMPL}(A,y,\bm{s},\bm{M})$ \\
\noindent \textbf{Inputs:} $A \in \mathbb{C}^{m\times N}$, $y \in \mathbb{C}^m$, local sparsities $\bm{s}$, sparsity levels $\bm{M}$ (with $M_r = N$)
\\
\noindent \textbf{Initialization:} Choose initial $x^{(0)}$, and set $S^{(0)} = \emptyset$, $\bm{s}^{(0)} = \bm{0}$ and  $\mathcal{L} = \emptyset$.
\\
\noindent \textbf{Iterate:} For each $k = 1, \ldots ,s$, set
         \begin{equation*}
             j_{k} \in \argmax_{\substack{j=1, \ldots, N \\ j \notin \{ M_{p-1}+1, \ldots M_{p} \}, \, \forall p \in \mathcal{L}}} \vert (A^{*}(y- A x^{(k-1)}))_{j} \vert
         \end{equation*}
      and denote $l$ as the level such that $j_{k} \in \{ M_{l-1}+1, \ldots M_{l} \}.$\\
      
        Update $\bm{s}^{(k)} = \bm{s}^{(k-1)} + \bm{e}_{l}$, where $\bm{e}_l$ is the $l$-th standard unit vector\\
        Update $S^{(k)} = S^{(k-1)} \cup  \{ j_{k} \}$ \\
        \qquad If $\bm{s}^{(k)}_{l} = \bm{s}_{l}$, then update the set of saturated levels $\mathcal{L} = \mathcal{L} \cup \{ l \}$ \\ \\
        Set $x^{(k)} \in \displaystyle\argmin_{z \in \mathbb{C}^N : \, \text{supp}(z) \subseteq S^{(k)}} \left\Vert y - A z \right\Vert_{\ell^{2}} $
     
\noindent \textbf{Output:} $\hat{x} = x^{(s)}$
\end{tcolorbox}
\end{center}
\section{Main results\label{sec:MainResults}}

We are now in a position to present our two main theorems. They state that the RIPL of suitable order is sufficient to guarantee stable and robust uniform recovery for the IHTL and CoSaMPL algorithms in the sparse in levels case. The proofs of these results can be found in Section~\ref{sec:proofs}. Before stating these results, we recall that the recovery error of IHTL and CoSaMPL is compared to the target accuracy achieved by the best $(\bm{s},\bm{M})$-term approximation error with respect to the weighted $\ell^1$-norm, where the weights $w$ are assumed to be constant on each level as in \eqref{constantweights}. Moreover, we define the following two key quantities:
\begin{equation}
    \zeta = \sum_{i=1}^{r} (w^{(i)})^{2}s_{i}, \qquad \xi = \min_{i= 1,\ldots,r} (w^{(i)})^{2}s_{i}.\label{zetaxidefinition}
\end{equation}
It is worth stressing that the weights $w$ are not employed in the IHTL and CoSaMPL algorithms, but are only used to prove the corresponding recovery guarantees. 

We state our first result, concerning IHTL.
\begin{theorem}\label{thm:IHTFinalTheorem}
Suppose that the $(6\bm{s},\bm{M})$-th RICL constant of $A \in \mathbb{C}^{m \times N}$ satisfies $\delta_{6\bm{s},\bm{M}} < \frac{1}{\sqrt{3}}$, and let $w \in \mathbb{R}^N$, with $w>0$, be a set of weights constant in each level, i.e.\ as in (\ref{constantweights}). Then, for all $x \in \mathbb{C}^{N}$ and $e \in \mathbb{C}^{m}$, the sequence $(x^{(n)})_{n \geq 0}$ defined by $\mathrm{IHTL}(A,y,2\bm{s},\bm{M})$ with $y=Ax+e$ and $x^{(0)} = 0$ satisfies, for any $n \geq 0$,
\begin{align*}
 \Vert x - x^{(n)} \Vert_{\ell^{1}_{w}} &\leq C\frac{\sqrt{\zeta}}{\sqrt{\xi}}\sigma_{\bm{s,M}}(x)_{\ell_{w}^{1}} + D \sqrt{\zeta} \Vert e \Vert_{\ell^{2}} + 2 \sqrt{\zeta} \rho^{n} \Vert x \Vert_{\ell^{2}}, \\
\Vert x - x^{(n)} \Vert_{\ell^{2}} &\leq \frac{E}{\sqrt{\xi}}\sigma_{\bm{s,\bm{M}}}(x)_{\ell^{1}_{w}} + F \Vert e \Vert_{\ell^{2}} + \rho^{n} \Vert x \Vert_{\ell^{2}},
\end{align*}
where $\rho = \sqrt{3} \delta_{6\bm{s},\bm{M}}< 1$ and $C,D,E,F >0$ only depend on $\delta_{6\bm{s},\bm{M}}$, and $\zeta,\xi$ are as in (\ref{zetaxidefinition}).

\end{theorem}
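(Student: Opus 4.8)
The plan is to mirror, almost verbatim, the robustness proof of IHT in the sparse case (e.g.\ \cite[Theorem 6.21]{FoucartRauhutCSbook}), replacing the restricted isometry constant $\delta_{6s}$ by the restricted isometry constant in levels $\delta_{6\bm{s},\bm{M}}$ throughout, and inserting two structural facts peculiar to the levels model. The first is that $H_{\bm{s},\bm{M}}(z)$ returns a best $(\bm{s},\bm{M})$-term approximation of $z$ in $\ell^2$: since $\Sigma_{\bm{s},\bm{M}}$ factorizes across levels and $\|\cdot\|_{\ell^2}$ is separable, the minimization decouples level-by-level, and within the $k$th level $H_{s_k}$ keeps the $s_k$ largest entries. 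The second is the union rule: if $U_1\in D_{\bm{t},\bm{M}}$ and $U_2\in D_{\bm{t}',\bm{M}}$ then $U_1\cup U_2\in D_{\bm{t}+\bm{t}',\bm{M}}$. Because $\mathrm{IHTL}(A,y,2\bm{s},\bm{M})$ produces iterates $x^{(n)}\in\Sigma_{2\bm{s},\bm{M}}$, the union rule guarantees that every support set appearing in the argument, namely unions of $\mathrm{supp}(x^{(n)})$, $\mathrm{supp}(x^{(n+1)})$ and the target support $S$, lies in $D_{c\bm{s},\bm{M}}$ with $c\le 6$, so the RIPL of order $6\bm{s}$ applies exactly where the RIP of order $6s$ was used in the sparse proof. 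This is precisely why no dependence on $r$ enters the RICL threshold, in contrast to Theorem~\ref{thm:simplerQCBPresult}.

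Next I would establish the two weighted norm-conversion inequalities that introduce $\zeta$ and $\xi$. For any $u\in\Sigma_{c\bm{s},\bm{M}}$, Cauchy--Schwarz within each level followed by Cauchy--Schwarz across levels gives $\|u\|_{\ell^1_w}\le\sqrt{c\,\zeta}\,\|u\|_{\ell^2}$, where the cross-level step produces $\sqrt{\sum_k (w^{(k)})^2 s_k}=\sqrt{\zeta}$ with no factor of $r$. Conversely, a weighted Stechkin-type estimate applied within each level and then summed gives $\|v-H_{\bm{s},\bm{M}}(v)\|_{\ell^2}\lesssim \tfrac{1}{\sqrt{\xi}}\|v\|_{\ell^1_w}$, together with the tail version bounding $\|v-H_{\bm{s},\bm{M}}(v)\|_{\ell^2}$ by $\tfrac{1}{\sqrt{\xi}}\sigma_{\bm{s},\bm{M}}(v)_{\ell^1_w}$; here the uniform estimate $1/((w^{(k)})^2 s_k)\le 1/\xi$, valid by the defining $\min$, is exactly what extracts the worst level without an $r$ factor. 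These are the levels analogues of $\|u\|_{\ell^1}\le\sqrt{s}\,\|u\|_{\ell^2}$ and the classical Stechkin bound, and they account for every appearance of $\sqrt{\zeta}$ and $1/\sqrt{\xi}$ in the statement.

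With these in hand I would run the contraction argument. Writing $y=Ax+e$, split $x=x_S+x_{S^c}$ with $S=L_{\bm{s},\bm{M}}(x)\in D_{\bm{s},\bm{M}}$ and absorb the tail into an effective noise $e'=e+Ax_{S^c}$, reducing to recovery of the $(\bm{s},\bm{M})$-sparse vector $x_S$. Following the sparse proof, set $T^{(n)}=\mathrm{supp}(x^{(n)})\cup\mathrm{supp}(x^{(n+1)})\cup S\in D_{5\bm{s},\bm{M}}$ and use the best-approximation property of $H_{\bm{s},\bm{M}}$ together with RIPL bounds of the form $\|((\mathrm{Id}-A^*A)u)_{T}\|_{\ell^2}\le\delta_{6\bm{s},\bm{M}}\|u\|_{\ell^2}$ to derive the one-step recursion $\|x^{(n+1)}-x_S\|_{\ell^2}\le\rho\,\|x^{(n)}-x_S\|_{\ell^2}+c\,\|e'\|_{\ell^2}$ with $\rho=\sqrt{3}\,\delta_{6\bm{s},\bm{M}}<1$. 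Unrolling from $x^{(0)}=0$ gives $\|x^{(n)}-x_S\|_{\ell^2}\le\rho^n\|x\|_{\ell^2}+\tfrac{c}{1-\rho}\|e'\|_{\ell^2}$. I then bound $\|e'\|_{\ell^2}\le\|e\|_{\ell^2}+\|Ax_{S^c}\|_{\ell^2}$, controlling $\|Ax_{S^c}\|_{\ell^2}$ by a block decomposition plus RIPL and the tail Stechkin inequality to get order $\|e\|_{\ell^2}+\xi^{-1/2}\sigma_{\bm{s},\bm{M}}(x)_{\ell^1_w}$. The $\ell^2$ bound follows by adding $\|x_{S^c}\|_{\ell^2}\le\xi^{-1/2}\sigma_{\bm{s},\bm{M}}(x)_{\ell^1_w}$, and the weighted $\ell^1$ bound follows by writing $x-x^{(n)}=(x_S-x^{(n)})+x_{S^c}$, converting the first summand (which lies in $\Sigma_{3\bm{s},\bm{M}}$) via $\|\cdot\|_{\ell^1_w}\le\sqrt{3\zeta}\,\|\cdot\|_{\ell^2}$ while the tail contributes $\sigma_{\bm{s},\bm{M}}(x)_{\ell^1_w}$ directly; this reproduces the factors $C\sqrt{\zeta}/\sqrt{\xi}$, $D\sqrt{\zeta}$ and $2\sqrt{\zeta}$.

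I expect the main obstacle to be bookkeeping rather than a new idea: one must check that at each place where the sparse proof invokes the RIP, the vector involved is genuinely sparse \emph{in levels} with total local sparsity at most $6\bm{s}$, which relies on the level-structured thresholding and not merely on the total sparsity count, and one must ensure the weighted conversions produce exactly $\zeta$ and $\xi$ so that neither the error constants nor the RICL threshold $1/\sqrt{3}$ acquire any dependence on the number of levels $r$.
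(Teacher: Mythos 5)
Your overall architecture is the same as the paper's: a geometric-contraction recursion for $\Vert x^{(n)} - P_{\Delta}x\Vert_{\ell^2}$ obtained by transplanting the sparse IHT proof with RICL in place of RIC (the paper's \cref{thm:IHTIntermediateBound}), followed by level-wise weighted norm conversions that produce the factors $\sqrt{\zeta}$ and $1/\sqrt{\xi}$ (the paper's \cref{thm:WeightedTheorem}, fed by \cref{lemma:616Generalized,lemma:620generalized}). The union rule for supports, the observation that $H_{\bm{s},\bm{M}}$ is a level-decoupled best approximation, and the resulting absence of any $r$-dependence in the RICL threshold are all exactly as in the paper.

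There is, however, one genuine gap: your choice of comparison support $S = L_{\bm{s},\bm{M}}(x) \in D_{\bm{s},\bm{M}}$, i.e.\ only the $s_i$ largest entries per level. Two of your claimed estimates then fail, namely $\Vert x_{S^c}\Vert_{\ell^2} \leq \xi^{-1/2}\sigma_{\bm{s},\bm{M}}(x)_{\ell^1_w}$ and the analogous bound on $\Vert A x_{S^c}\Vert_{\ell^2}$, as well as the ``tail version'' of Stechkin you invoke, $\Vert v - H_{\bm{s},\bm{M}}(v)\Vert_{\ell^2} \lesssim \xi^{-1/2}\sigma_{\bm{s},\bm{M}}(v)_{\ell^1_w}$. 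Already in one level with unit weights this reads $\Vert x - H_s(x)\Vert_{\ell^2} \leq C s^{-1/2}\sigma_s(x)_{\ell^1}$, which is false: take $x$ with $s+1$ equal nonzero entries, so that the left side is $1$ while the right side is $C/\sqrt{s}$. The correct Stechkin-type inequality is $\sigma_{2s}(x)_{\ell^2} \leq s^{-1/2}\sigma_{s}(x)_{\ell^1}$ --- the $\ell^2$ tail beyond the $2s_i$ largest entries is controlled by the $\ell^1$ tail beyond the $s_i$ largest, via the shift-by-one-block argument. This is precisely why the paper (following the sparse case) compares the iterates to $P_{\Xi}x$ with $\Xi$ the union over levels of the $2s_i$ largest entries, so that the blocks $\Xi^{(k)}$, $k \geq 2$, making up $\Xi^c$ can each be bounded by the $\ell^1_w$-norm of the preceding block, and the sum telescopes to $\Vert P_{\Delta^c}x\Vert_{\ell^1_w} = \sigma_{\bm{s},\bm{M}}(x)_{\ell^1_w}$. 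Your support count $\mathrm{supp}(x^{(n)}) \cup \mathrm{supp}(x^{(n+1)}) \cup S \in D_{5\bm{s},\bm{M}}$ is a symptom of the same issue: with the correct target $\Xi \in D_{2\bm{s},\bm{M}}$ the count becomes $6\bm{s}$, which is where the theorem's $\delta_{6\bm{s},\bm{M}}$ actually comes from. The fix is local --- replace $S$ by $\Xi = L_{2\bm{s},\bm{M}}(x)$ everywhere and rerun your argument --- but as written the passage from the contraction estimate to the stated error bounds does not go through.
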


An analogous result holds for CoSaMPL.
\begin{theorem} \label{thm:CoSamPlMainTheorem} Suppose that the $(8\bm{s},\bm{M})$-th RICL constant of $A \in \mathbb{C}^{m \times N}$ satisfies
\begin{equation*}
    \delta_{8\bm{s},\bm{M}} < \tfrac{\sqrt{\frac{11}{3}} - 1}{4} \approx 0.478,
\end{equation*}
and let $w \in \mathbb{R}^N$, with $w>0$, be a set of weights constant in each level as in Theorem~\ref{thm:IHTFinalTheorem}. Then, for all $x \in \mathbb{C}^{N}$ and $e \in \mathbb{C}^{m}$ the sequence $(x^{(n)})_{n\geq 0}$ constructed by $\mathrm{CoSaMPL}(A,y,2\bm{s},\bm{M})$ with $y=Ax+e$ and $x^{(0)} = 0$, satisfies for any $n \geq 0$,
\begin{align*}
    \Vert x - x^{(n)} \Vert_{\ell^{1}_{w}} 
    &\leq C\frac{\sqrt{\zeta}}{\sqrt{\xi}}\sigma_{\bm{s,M}}(x)_{\ell_{w}^{1}} 
    + D \sqrt{\zeta} \Vert e \Vert_{\ell^{2}} 
    + 2 \sqrt{\zeta} \rho^{n} \Vert x \Vert_{\ell^{2}}, \\
\Vert x - x^{(n)} \Vert_{\ell^{2}} &\leq \frac{E}{\sqrt{\xi}}\sigma_{\bm{s,\bm{M}}}(x)_{\ell^{1}_{w}} + F\Vert e \Vert_{\ell^{2}} + \rho^{n} \Vert x \Vert_{\ell^{2}},
\end{align*}
where $\zeta$ and $\xi$ are as in Theorem~\ref{thm:IHTFinalTheorem}
and $\rho = \sqrt{ \frac{2\delta_{8\bm{s},\bm{M}}^{2}(1 +3\delta_{8\bm{s},\bm{M}}^{2})}{1-\delta_{8\bm{s}}^{2}}} $ and $C,D,E,F >0$ only depend on $\delta_{8\bm{s},\bm{M}}$.

\end{theorem}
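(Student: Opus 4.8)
The plan is to mirror, step for step, the proof of the sparse CoSaMP recovery result (\cite[Theorem 6.28]{FoucartRauhutCSbook} as reproduced above), replacing every standard-sparsity object by its levels counterpart and the RIP by the RIPL. The reason this should go through without any deterioration of the RIC threshold is structural: CoSaMPL differs from CoSaMP \emph{only} through $H_{\bm{s},\bm{M}}$ and $L_{\bm{s},\bm{M}}$, both of which act level by level, and a union of an $(\bm{s},\bm{M})$-sparse set with an $(\bm{s}',\bm{M})$-sparse set is $(\bm{s}+\bm{s}',\bm{M})$-sparse. Hence the algebraic identities underlying the sparse contraction estimate (least-squares orthogonality, $\|A_T^*A_{T'}\|$ and $\|((I-A^*A)z)_T\|_{\ell^2}$ bounds) transfer verbatim provided the sparsity orders are tracked levelwise, and the value of $\rho$ — depending only on RIC constants and the same manipulations — is unchanged.

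First I would record two elementary weighted-norm inequalities that isolate the constants $\zeta$ and $\xi$; these are shared with the proof of Theorem~\ref{thm:IHTFinalTheorem}. By Cauchy--Schwarz, any $(\bm{s},\bm{M})$-sparse vector $v$ satisfies $\|v\|_{\ell^1_w} = \sum_k w^{(k)}\|P^{M_{k-1}}_{M_k}v\|_{\ell^1} \le \sum_k w^{(k)}\sqrt{s_k}\,\|P^{M_{k-1}}_{M_k}v\|_{\ell^2} \le \sqrt{\zeta}\,\|v\|_{\ell^2}$. Dually, applying the one-level Stechkin inequality inside each level to the levels-tail of $x$ gives $\|P^{M_{k-1}}_{M_k}(x - H_{\bm{s},\bm{M}}(x))\|_{\ell^2} \le (w^{(k)}\sqrt{s_k})^{-1}\|P^{M_{k-1}}_{M_k}(x - H_{\bm{s},\bm{M}}(x))\|_{\ell^1_w}$, and summing over levels while bounding each factor $((w^{(k)})^2 s_k)^{-1}$ by $\xi^{-1}$ yields the weighted Stechkin estimate $\sigma_{\bm{s},\bm{M}}(x)_{\ell^2}\le \xi^{-1/2}\,\sigma_{\bm{s},\bm{M}}(x)_{\ell^1_w}$. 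These are the levels replacements for $\|v\|_{\ell^1}\le\sqrt{s}\|v\|_{\ell^2}$ and $\sigma_s(x)_{\ell^2}\le s^{-1/2}\sigma_s(x)_{\ell^1}$, and they account for every appearance of $\sqrt{\zeta}$, $\xi^{-1/2}$ and $\sqrt{\zeta}/\sqrt{\xi}$ in the statement.

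With these in hand I would argue at the $\ell^2$ level. Fix the target support $S=L_{2\bm{s},\bm{M}}(x)$ and write $y=Ax_S + e'$ with effective noise $e' = Ax_{S^c}+e$. Tracking orders through CoSaMPL$(A,y,2\bm{s},\bm{M})$ — the iterate $x^{(n)}$ is $(2\bm{s},\bm{M})$-sparse, $L_{4\bm{s},\bm{M}}(\cdot)$ contributes $(4\bm{s},\bm{M})$, so $U^{(n+1)}$ and $u^{(n+1)}$ are $(6\bm{s},\bm{M})$-sparse, and $S\cup U^{(n+1)}$ reaches $(8\bm{s},\bm{M})$ — I would reproduce the identification/estimation/pruning chain of the sparse proof using RIPL consequences to obtain the one-step contraction $\|x^{(n+1)}-x_S\|_{\ell^2}\le \rho\|x^{(n)}-x_S\|_{\ell^2}+\tau\|e'\|_{\ell^2}$ with exactly the $\rho$ of the statement. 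Iterating from $x^{(0)}=0$ and bounding $\|e'\|_{\ell^2}\le\|Ax_{S^c}\|_{\ell^2}+\|e\|_{\ell^2}\lesssim \xi^{-1/2}\sigma_{\bm{s},\bm{M}}(x)_{\ell^1_w}+\|e\|_{\ell^2}$ (RIPL applied blockwise to the tail together with the weighted Stechkin bound) produces the $\ell^2$ conclusion. The weighted $\ell^1$ conclusion then follows by splitting $x-x^{(n)}=(x_S-x^{(n)})+x_{S^c}$: the first piece is levels-sparse, so Cauchy--Schwarz converts its $\ell^2$ bound into a $\sqrt{\zeta}$-weighted $\ell^1$ bound, while the tail contributes $\|x_{S^c}\|_{\ell^1_w}=\sigma_{\bm{s},\bm{M}}(x)_{\ell^1_w}$.

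The hard part will be verifying that the sparse contraction estimate genuinely survives the passage to levels with no increase in RIC order beyond $8\bm{s}$. This requires that every support appearing implicitly in the identification and pruning steps be $(\,\cdot\,\bm{s},\bm{M})$-sparse with the claimed order — which rests on the fact that $H_{\bm{s},\bm{M}}$ and $L_{\bm{s},\bm{M}}$ decouple across levels and on additivity of local sparsities under unions — together with a careful check that each intermediate inequality invokes RIPL only on truly levels-sparse vectors (rather than merely $s$-sparse ones), so that $\delta_{8\bm{s},\bm{M}}$, and not a larger ambient constant, controls $\rho$. The secondary delicate point is the weighted Stechkin step, where the minimum over levels defining $\xi$ must be extracted cleanly; this is the one place where the levels bounds differ in form, rather than merely in notation, from their sparse prototypes.
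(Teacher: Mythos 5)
Your proposal is correct and follows essentially the same route as the paper: an $\ell^2$ contraction toward $P_\Xi x$ with $\Xi = L_{2\bm{s},\bm{M}}(x)$, obtained by tracking levelwise support orders up to $(8\bm{s},\bm{M})$ through the sparse CoSaMP argument, followed by a conversion to the weighted $\ell^1$ and $\ell^2$ bounds via the $\sqrt{\zeta}$ Cauchy--Schwarz and $\xi^{-1/2}$ Stechkin-in-levels inequalities (the paper merely packages that conversion as a separate lemma, Theorem~\ref{thm:WeightedTheorem}, shared with the IHTL proof). The only nitpick is that $\Vert x_{S^c}\Vert_{\ell^1_w}$ equals $\sigma_{2\bm{s},\bm{M}}(x)_{\ell^1_w}$, not $\sigma_{\bm{s},\bm{M}}(x)_{\ell^1_w}$; the needed inequality $\leq$ still holds, so nothing breaks.
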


Note that several recovery guarantees for the model-based CoSaMP algorithm were given in \cite{baraniuk2010model}. However, due to the more general setting, these involve additional conditions on the measurement matrix $A$ and an explicit compressibility assumption on the vector $x$. Theorem \ref{thm:CoSamPlMainTheorem}, being specific to the sparsity in levels model, yields a simpler recovery guarantee that requires only the RIPL for $A$ and no assumptions on $x$. It also provides bounds in any weighted $\ell^1_{w}$-norm, with weights as in (\ref{constantweights}).

\subsection{Discussion}

\label{s:discussion}

We now make several remarks on Theorems~\ref{thm:IHTFinalTheorem} and \ref{thm:CoSamPlMainTheorem}. First, we analyze some interesting special cases based on different choices of weights. Then, we compare the recovery guarantees for IHTL and CoSaMPL with those of a decoder based on weighted $\ell^1$ minimization.

If we have one level, i.e. $r = 1$, $\bm{M} = (1,N)$ and weights $w^{(1)} = \ldots = w^{(r)} = 1$, we recover \textit{exactly} the result from the sparse case, with identical assumptions on the RIP constant $\delta_{s} = \delta_{\bm{s},\bm{M}}$, and resulting in the same values of $\rho$. 

Now suppose we have arbitrary numbers of levels and local sparsities, but constant weights $w^{(1)} = \ldots = w^{(r)} = 1$. Then, for sufficiently large $n$, our results yield error bounds of the form 
    \begin{align*}
         \Vert x - x^{(n)} \Vert_{\ell^{1}} &\lesssim \frac{\sqrt{s}}{\sqrt{\min_{i} s_{i}}}\sigma_{\bm{s,M}}(x)_{\ell^{1}} + \sqrt{s} \Vert e \Vert_{\ell^{2}}, \\
         \Vert x - x^{(n)} \Vert_{\ell^{2}} &\lesssim \frac{1}{\sqrt{\min_{i} s_{i}}}\sigma_{\bm{s,\bm{M}}}(x)_{\ell^{1}} + \Vert e \Vert_{\ell^{2}},
    \end{align*}
where $s$ is the total sparsity. This leads to large factors multiplying the best $(\bm{s},\bm{M})$-ap\-prox\-i\-ma\-tion error if the minimum local sparsity $\min_i s_{i}$ is small in comparison to the maximum local sparsity $\max_i s_i$. These factors have instead moderate size if $0<\min_{i} s_{i}  \approx \max_{i} s_{i}$.

Accordingly, a good choice of weights is realized by making $\zeta/\xi$ order one, which results in the error bound in the $\ell^{1}_{w}$-norm being optimal up to a constant. 

Finally, if the weights are chosen as in \eqref{optimalweights}, we obtain $\zeta = rs$ and $\xi = s$. This choice yields error bounds where the constant factors only depend on the number of levels $r$ and the total sparsity $s$. Namely, for $n$ large enough, we have a dependence scaling with the number of levels, approximately
\begin{align*}
    \Vert x - x^{(n)} \Vert_{\ell^{1}_{w}} & \lesssim \sqrt{r} \sigma_{\bm{s,M}}(x)_{\ell^{1}_{w}} + \sqrt{r s} \Vert e \Vert_{\ell^{2}}, \\
    \Vert x - x^{(n)} \Vert_{\ell^{2}} &\lesssim \frac{1}{\sqrt{s}} \sigma_{\bm{s},\bm{M}}(x)_{\ell^{1}_{w}} + \Vert e \Vert_{\ell^{2}}.
\end{align*}

In order to further understand the theoretical estimates obtained for IHTL and CoSaMPL, we compare them with recovery guarantees based on convex optimization via weighted $\ell^1$ minimization. As mentioned, previous work \cite{AAHWalshWavelet} on the sparsity in levels model has focused on the weighted Quadratically Constrained Basis Pursuit (QCBP) decoder
\begin{equation}
\label{QCBP}
\min_{z \in \mathbb{C}^N} \Vert z \Vert_{\ell^1_w}\ \mbox{subject to $\Vert A z  -y \Vert_{\ell^2} \leq \eta$},  
\end{equation}
with weights as in \eqref{constantweights}. The following is an analogous result to Theorems \ref{thm:IHTFinalTheorem} and \ref{thm:CoSamPlMainTheorem} for the weighted QCBP decoder:

\begin{theorem}
\label{thm:simplerQCBPresult}
Suppose that the $(2\bm{s},\bm{M})$-th RICL constant of $A \in \mathbb{C}^{m \times N}$ satisfies
\begin{equation}
\label{QCBPRIPLcond}
 \delta_{2 \bm{s} , \bm{M}} <  \frac{1}{\sqrt{2 \zeta / \xi} + 1},  
\end{equation}
where $\zeta$ and $\xi$ are as in \eqref{zetaxidefinition}, and let $\eta \geq 0$ and $w \in \mathbb{R}^N$, with $w >0$, be a set of weights  as in \eqref{constantweights}. Then, for all $x \in \mathbb{C}^{N}$ and $e \in \mathbb{C}^{m}$ with $\Vert e \Vert_{\ell^2} \leq \eta$ any minimizer of $\hat{x}$ of \eqref{QCBP} with $y = A x + e$ satisfies
\begin{equation}
 \label{QCBPresult}
\begin{split}
    \Vert x - \hat{x} \Vert_{\ell^{1}_{w}} 
    &\leq C \sigma_{\bm{s,M}}(x)_{\ell_{w}^{1}} 
    + D \sqrt{\zeta} \eta,
    \\
\Vert x - \hat{x} \Vert_{\ell^{2}} &\leq \left ( 1 + (\zeta/\xi)^{1/4} \right ) \left ( \frac{E}{\sqrt{\xi}}\sigma_{\bm{s,\bm{M}}}(x)_{\ell^{1}_{w}} + F \eta \right ),
\end{split}
\end{equation}
where $C,D,E,F$ only depend on $\delta_{2\bm{s},\bm{M}}$ only. 
\end{theorem}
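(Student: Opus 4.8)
The plan is to establish Theorem~\ref{thm:simplerQCBPresult} through the standard template for quadratically constrained basis pursuit---RIP $\Rightarrow$ robust null space property $\Rightarrow$ error bounds---adapted to the weighted, sparse-in-levels setting. The key device is a weighted \emph{robust $\ell^2$ null space property in levels}: I will say that $A$ has it with constants $0<\rho<1$, $\tau>0$ if
\[
\|v_S\|_{\ell^2} \le \frac{\rho}{\sqrt{\xi}}\,\|v_{S^c}\|_{\ell^1_w} + \tau\,\|Av\|_{\ell^2}
\]
holds for all $v\in\mathbb{C}^N$ and all $S\in D_{\bm{s},\bm{M}}$, with $\zeta,\xi$ as in \eqref{zetaxidefinition}. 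Everything reduces to deriving this property from the RICL hypothesis and then running the usual recovery machinery while carefully tracking the weights through $\zeta$ and $\xi$.

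For the first step, fix $v$ and any $S\in D_{\bm{s},\bm{M}}$, and within each level $\{M_{k-1}+1,\dots,M_k\}\setminus S$ partition the indices into consecutive blocks of size $s_k$ ordered by decreasing magnitude; collecting the $j$-th sub-block across all levels gives index sets $S_1,S_2,\dots$, each of which lies in $D_{\bm{s},\bm{M}}$, so that $S\cup S_1\in D_{2\bm{s},\bm{M}}$. I would bound the head $\|v_{S\cup S_1}\|_{\ell^2}$ by writing $Av_{S\cup S_1}=Av-\sum_{j\ge 2}Av_{S_j}$, applying the RIPL lower bound on the left and, on the cross terms, the standard estimate $|\langle Au,Au'\rangle|\le\delta_{2\bm{s},\bm{M}}\|u\|_{\ell^2}\|u'\|_{\ell^2}$ valid for disjointly supported $u,u'$ with $u+u'\in\Sigma_{2\bm{s},\bm{M}}$. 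Splitting the head into its two pieces $v_S$ and $v_{S_1}$ when estimating $\langle Av_{S\cup S_1},Av_{S_j}\rangle$ and recombining via $\|v_S\|_{\ell^2}+\|v_{S_1}\|_{\ell^2}\le\sqrt{2}\,\|v_{S\cup S_1}\|_{\ell^2}$ is what produces a factor $\sqrt{2}$. The level-wise shifting inequality $\|v_{S_{j+1}^{(k)}}\|_{\ell^2}\le s_k^{-1/2}\|v_{S_j^{(k)}}\|_{\ell^1}=(w^{(k)}\sqrt{s_k})^{-1}\|v_{S_j^{(k)}}\|_{\ell^1_w}\le\xi^{-1/2}\|v_{S_j^{(k)}}\|_{\ell^1_w}$, which uses that $w$ is constant on each level and that $\xi=\min_k(w^{(k)})^2 s_k$, then gives $\sum_{j\ge2}\|v_{S_j}\|_{\ell^2}\le\xi^{-1/2}\|v_{S^c}\|_{\ell^1_w}$. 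Assembling these, and using $\|v_S\|_{\ell^2}\le\|v_{S\cup S_1}\|_{\ell^2}$, yields the rNSPL with $\rho=\sqrt{2}\,\delta_{2\bm{s},\bm{M}}/(1-\delta_{2\bm{s},\bm{M}})$ and $\tau=\sqrt{1+\delta_{2\bm{s},\bm{M}}}/(1-\delta_{2\bm{s},\bm{M}})$.

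For the error bounds, set $v=\hat{x}-x$ with $S$ the best $(\bm{s},\bm{M})$-term support of $x$. Feasibility of $x$ gives $\|\hat{x}\|_{\ell^1_w}\le\|x\|_{\ell^1_w}$, whence the cone inequality $\|v_{S^c}\|_{\ell^1_w}\le\|v_S\|_{\ell^1_w}+2\sigma_{\bm{s},\bm{M}}(x)_{\ell^1_w}$, and both $x$ and $\hat{x}$ being feasible gives $\|Av\|_{\ell^2}\le 2\eta$. Converting the rNSPL to a weighted $\ell^1$ statement by Cauchy--Schwarz, $\|v_S\|_{\ell^1_w}\le\sqrt{\zeta}\,\|v_S\|_{\ell^2}$ (using $\sum_{i\in S}w_i^2\le\zeta$), produces $\|v_S\|_{\ell^1_w}\le\rho\sqrt{\zeta/\xi}\,\|v_{S^c}\|_{\ell^1_w}+\tau\sqrt{\zeta}\,\|Av\|_{\ell^2}$, in which the effective contraction constant is $\rho\sqrt{\zeta/\xi}=\delta_{2\bm{s},\bm{M}}\sqrt{2\zeta/\xi}/(1-\delta_{2\bm{s},\bm{M}})$; this is $<1$ precisely under the hypothesis \eqref{QCBPRIPLcond}. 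Solving the two coupled inequalities for $\|v\|_{\ell^1_w}=\|v_S\|_{\ell^1_w}+\|v_{S^c}\|_{\ell^1_w}$ gives the first bound in \eqref{QCBPresult}. For the $\ell^2$ bound I split $\|v\|_{\ell^2}\le\|v_{S\cup S_1}\|_{\ell^2}+\sum_{j\ge2}\|v_{S_j}\|_{\ell^2}$, control the head by the estimate of the first step and the tail by the shifting inequality, and feed in the $\ell^1_w$ estimate; balancing the $\sqrt{\zeta}$ scale of the $\ell^1_w$ norm against the $\sqrt{\xi}$ normalization of the rNSPL is what yields the correction factor $1+(\zeta/\xi)^{1/4}$.

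I expect the main obstacle to be the weighted, level-by-level bookkeeping needed to obtain the \emph{sharp} constant in \eqref{QCBPRIPLcond} and the exact factors $\sqrt{\zeta}$, $\xi^{-1/2}$ and $(\zeta/\xi)^{1/4}$, rather than cruder versions. In particular the shifting inequality must be applied within each level and recombined across levels through $(\sum_k a_k^2)^{1/2}\le\sum_k a_k$, and one must check that using $\xi$ (the worst level) in the denominator while using $\zeta$ (the sum over levels) in the Cauchy--Schwarz step is simultaneously valid and tight. The $1/4$ power in the $\ell^2$ estimate is the most delicate point, since a naive substitution of the $\ell^1_w$ bound gives only the weaker factor $(\zeta/\xi)^{1/2}$, and obtaining $(\zeta/\xi)^{1/4}$ appears to require interpolating between the two weighted scales. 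I will also handle non-saturated best-approximation supports, for which $\sum_{i\in S}w_i^2\le\zeta$ still suffices for all the upper bounds; with these in place the remainder is the routine algebra of resolving two coupled linear inequalities, formally identical to the single-level unweighted case.
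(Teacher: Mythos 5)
Your proposal follows essentially the same route as the paper: derive a weighted robust null space property in levels from the RICL hypothesis via the head/tail block decomposition $S\cup S_1$ versus $S_2,S_3,\dots$, the level-wise shifting inequality giving the $\xi^{-1/2}$ factor, and the $\sqrt{2}$ cross-term estimate from splitting the head into two $(\bm{s},\bm{M})$-sparse pieces, which yields exactly the condition \eqref{QCBPRIPLcond} (your $\sqrt{\xi}$-normalized form of the null space property is the paper's $\sqrt{\zeta}$-normalized one up to relabelling $\rho\mapsto\rho\sqrt{\zeta/\xi}$). The downstream cone/coupled-inequality machinery you sketch is precisely what the paper delegates to a minor modification of \cite[Thm.\ 5.4]{AAHWalshWavelet}, and the one step you flag as unresolved --- obtaining $(\zeta/\xi)^{1/4}$ rather than $(\zeta/\xi)^{1/2}$ in the $\ell^2$ bound --- is exactly the part the paper also does not reprove, so your identification of it as the delicate point is accurate.
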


Note that the analysis of QCBP via the RICL was previously addressed in \cite[Thm.\ 3.5]{AAHWalshWavelet}, but with the order of the RIPL was given in terms of the weights and the quantity $\zeta$ (this approach was better suited to the purposes of \cite{AAHWalshWavelet}, which focused on Walsh--Hadamard sampling). Theorem \ref{thm:simplerQCBPresult} is a more direct generalization of a standard result for stable and robust sparse vectors with QCBP. Indeed, when $r = 1$ and $w^{(1)} = \ldots = w^{(r)} = 1$ it reduces to
\begin{align*}
    \Vert x - \hat{x} \Vert_{\ell^{1}} 
    &\leq C\sigma_{s}(x)_{\ell^{1}} 
    + D \sqrt{s} \eta, 
    \\
\Vert x - \hat{x} \Vert_{\ell^{2}} &\leq \frac{E}{\sqrt{s}}\sigma_{s}(x)_{\ell^{1}} + F \eta,
\end{align*}
under the condition that $A$ has the RIP of order $2s$ with constant $\delta_{2s} < \sqrt{2}-1$, which is a classical result due to Cand\`es \cite{candes2008restricted}.
Since, to the best of authors' knowledge, it has not appeared previously, we give a short proof of Theorem \ref{thm:simplerQCBPresult} in Section \ref{ss:simplerQCBPresultproof}. 

Several remarks are in order. First, associating $\Vert e \Vert_{\ell^2}$ with $\eta$, we observe that the error bounds in \eqref{QCBPresult} are similar to those in Theorems~\ref{thm:IHTFinalTheorem} and \ref{thm:CoSamPlMainTheorem} (for $n$ large enough), up to the scaling with respect to $\zeta$ and $\xi$. In particular, for the $\ell^1_{w}$-norm error, QCBP has a better dependence on $\sigma_{\bm{s},\bm{M}}(x)_{\ell^1_{w}}$ by a factor of $\sqrt{\zeta/\xi}$. 
 The dependence on the noise is the same. 
We mention in passing that it is not clear whether or not the scaling $\sqrt{\zeta/\xi}$ for IHTL and CoSaMPL is necessary, or simply an artefact of the proof.
 Conversely, the $\ell^2$-norm error bound is better for the IHTL and CoSaMPL decoders, by a factor of $(\zeta/\xi)^{1/4}$. Overall, the factor $\zeta / \xi$ is seemingly ubiquitous. In particular, for all decoders, choosing the weights as in \eqref{optimalweights} acts to minimize this factor, thus giving the best recovery guarantees.

Second, the condition on the RICL depends on $\zeta$ and $\xi$ in the case of QCBP, but is independent of them in the case of IHTL and CoSaMPL. 
In particular, in the unweighted case the condition \eqref{QCBPRIPLcond} for QCBP becomes
\begin{equation}
\label{RIPLl1sparselevels}
\delta_{2 \bm{s},\bm{M}} < \frac{1}{\sqrt{2 s / \min_{i} \{ s_i \} } + 1},    
\end{equation}
which depends on the ratio of the total sparsity $s$ and the minimal local sparsity. Conversely, if the weights are chosen as in \eqref{optimalweights}, the condition \eqref{QCBPRIPLcond} becomes
\begin{equation}
 \label{measurementCondDelta}
\delta_{2 \bm{s},\bm{M}} < \frac{1}{\sqrt{2r} +1} .
\end{equation}
Observe from \eqref{gaussmeasurmentCond} that the number of measurements that guarantees an RIP generally scales like $\delta^{-2}$. Combining this observation with condition \eqref{measurementCondDelta} suggests that $m$ should scale linearly in $r$ for QCBP to ensure stable and robust recovery, whereas for IHTL and CoSaMPL the corresponding condition on $m$ would be independent of $r$. Note that while $r$ may be small in some applications, in others it may grow with $N$. For example, $r = \mathcal{O}(\log(N))$ when the levels delineate wavelet scales, as in the setups of \cite{AAHWalshWavelet, BastounisHansen,LiAdcockRIP}, thus making the QCBP measurement condition effectively worse by one log factor. 

\begin{remark}
It is natural to ask whether or not the condition \eqref{QCBPRIPLcond} is sharp. Certainly, the constant factor $\sqrt{2}$ can likely be improved, much as how the constant in the classical RIP condition $\delta_{2s} < \sqrt{2} - 1$, to which \eqref{RIPLl1sparselevels} and \eqref{measurementCondDelta} both reduce in the case of $r = 1$ levels, can be improved. Indeed, the optimal condition is known to be $\delta_{2s} < 1/\sqrt{2}$ \cite{CaiZhangRIP2}. On the other hand, whether or not the dependence on $\zeta$ and $\xi$ can be improved is unknown. On the other hand, the work \cite{BastounisHansen} analyzed sufficient conditions for recovery of QCBP. It was shown that the condition \eqref{RIPLl1sparselevels} for unweighted QCBP essentially cannot be improved, except possibly for the constants $2$ and $1$. Further, it is readily seen from the proof of \cite[Thm.\ 4.6]{BastounisHansen} that for weighted QCBP with weights as in \eqref{optimalweights} (or, in general, any weights for which $w^{(k)} = g(s^{(k)})$ for some positive function $g$), the scaling $1/\sqrt{r}$ is necessary. This demonstrates that the aforementioned distinction between QCBP and IHTL and CoSaMPL is fundamental.
%It is known, however, that in the unweighted case the condition \eqref{RIPLl1sparselevels} cannot be improved, except possibly in the constant~\cite{BastounisHansen}.
\end{remark}

%where the weights $w = (w_i)^{N}_{i=1}$ are given by $w_i = \sqrt{s/s_k}$ for $M_{k-1} < i \leq M_k$, has stable and robust recovery for $(\bm{s},\bm{M})$-sparse vectors \cite{TraonmilinGribonvalRIP}. Instead using an unweighted (unstructed) QCBP decoder will also succeed, provided the local sparsities are sufficiently similar~\cite{BastounisHansen}.

%Another interesting observation is that the RICL size assumptions are exactly of the same form as the RIC size assumptions in the sparse case. For example, for both IHT and IHTL we have $\delta < \frac{1}{\sqrt{3}}$. This means that while we have assumed additional regularity on the underlying solution and require that $A$ be nearly an isometry on $\Sigma_{\bm{s},\bm{M}}$ rather than $\Sigma_{s}$, we do not introduce the need for A to be closer to an isometry. Conversely, our analysis does not relax this assumption compared to the sparse case. 

It is worth mentioning several other differences between QCBP and the proposed decoders. First, in QCBP the weights are a part of the decoder itself. Whereas for IHTL and CoSaMPL they appear solely in the theoretical analysis.
Second, the guarantees for QCBP rely on an \textit{a priori} control of the noise level. The is typical of QCBP approaches, but such a bound on the noise is unlikely to be known in many applications~\cite{brugiapaglia2018robustness}. The results for CoSaMPL and IHTL do not require any such bound on the noise, as is evident from the discussion above. We do note however, that the recovery results for QCBP can be extended to the weighted square-root LASSO decoder without assumptions on noise~\cite{BASBMKRCSwavelet}.

%{\vspace{5mm} \color{red} I have made a fist attempt at typing these up with references. I have left the comments below as discussion. \\
%I want to:
%\begin{itemize}
%    \item say something here about the comparison to weighted $\ell^{1}$ minimization. Ban/Simone - any particular reference that jumps out? \simoChat{[I think we can compare our results with the recovery guarantees for weighted QCBP in \cite{AAHWalshWavelet} (under the robust Null Space Property in Levels, which is implied by the RIPL). Moreover, I think it is worth mentioning that the recovery guarantees for IHTL and CoSaMPL are do not depend on any \emph{a priori} control on the noise level, whereas those for WQCBP do (this is a common feature of QCBP - see the IEEE Trans. Inf. Theory paper  ``Robustness to unknown error in sparse regularization'' by Ben and me. We can also mention that the recovery results for QCBP can be extended to the weighted square-root LASSO decoder in a ``noise blind'' way (see our CS Foundations paper). ]}
 %   \item ``Sell" the usefulness of these theorems - Chat about instance optimality. [This comment was made many moons ago (Early summer?), It may be interesting to chat about, but perhaps after this paper.] \simoChat{[We could mention that these recovery guarantees are a first essential step towards the design of optimal encoder-decoder pairs for $PC^\alpha$ functions where the decoder would now be an algorithm with guaranteed computational cost polynomial in $m$ (as opposed to weighted $\ell^1$ minimization, which is only an optimization program).]}
%\end{itemize}
%}

\section{Numerics} \label{sec:numerics}

Finally, we include some numerics to support the claim that OMP also generalizes well to this new setting. The experiments performed are analogous to those in \cite{adcock2019iterative}, which gives numerical results for IHTL and CoSAMPL. 

All numerical experiments share the following setup. For each fixed total sparsity $s$ and number of measurements $m$, we generate an $(\bm{s},\bm{M})$-sparse in levels random vector $x$ of length $N$. This is implemented by generating a random vector $v \in \mathbb{R}^N$ with independent entries distributed according to a centered Gaussian distribution with unit variance and by setting $x = H_{\bm{s},\bm{M}}(v)$. The local sparsity pattern $\bm{s}$ depends on the experiment, as outlined below. Then we compute an approximation $\hat{x}$ to the vector $x$ using a measurement matrix $A$ that is a Gaussian random matrix (independent, normally distributed entries with mean zero and variance $1/\sqrt{m}$), and record the relative error $\Vert x - \hat{x} \Vert_{\ell^2} / \Vert x \Vert_{\ell^2}$. Over 100 trials, we compute the success probability with the success criterion that the relative error be less than $10^{-2}$. For IHTL and CoSaMPL, we have the additional stopping criterion that the algorithms terminate either when $\|x^{(n+1)}-x^{(n)}\|_{\ell^2}/\|x^{(n+1)}\|_{\ell^2}$ is less than a tolerance $10^{-4}$, or if the algorithms exceeds 1000 iterations. For OMPL, we simply run $s$ iterations. Moreover, we operate the following normalizations in order to improve the practical performance of the algorithms considered. For OMP and OMPL, we normalize the columns of $A$ with respect to the 2-norm. For IHT and IHTL, we rescale $A$ so that it has unit spectral norm (i.e., $\|A\|_2 = 1$), as suggested in \cite{BlumensathDavies2010}.
    
With the aim of boosting the numerical performance of IHT and IHTL, we also consider Normalized IHT (NIHT) (see \cite{BlumensathDavies2010}) and its levels based version NIHTL obtained by replacing the hard thresholding operator with its in-levels version. We implement NIHTL as summarized in \cite[Section III]{BlumensathDavies2010} using hyperparameters $c = 0.1$ and $k = 1.1 / (1-c)$, defining the stepsize update rule (see \cite{BlumensathDavies2010} for further details).

In our implementation of CoSaMPL and OMPL we use the Matlab backslash operator to compute least-squares projections. This is possible since we only consider problems of small size in this paper. For large scale problems, the least-squares projection step can become a considerable computational bottleneck and it should be replaced by an approximate projection computed via a few iterations of an iterative algorithm such as Richardson's or the conjugate gradient method (see \cite{NeedellTropp2008}). We also note that, whereas the number of iterations of OMPL is by construction larger than or equal to the sparsity $s$, CoSaMPL does not have this limitation. Thus, the latter is a considerably more efficient algorithm than the former for large values of $s$.

The first experiment, whose results are shown in Figure~\ref{fig:4levelunif}, compares the performance of IHTL, NIHTL, CoSaMPL, and OMPL with different input sparsity levels. Given a fixed total sparsity $s$, we approximate a vector that is $(\bm{s},\bm{M})$-sparse with $\bm{M} = (N/4,N/2,3N/4,N)$ and either $\bm{s}_1 = (3 s/8 , s/8 , 3 s /8 , s/8)$ or $\bm{s}_2 = (s/2 , 0 , s/2 , 0)$. We let $N = 128$ and consider $s = 16$ and $s = 32$. We then run IHTL,CoSaMPL, and OMPL, with $1,2$ or $4$ levels each. The results show that more levels, closer to the true sparsity in levels structure of the underlying solution, result in better recovery, regardless of the algorithm used.
\newcommand{\figsize}{4.5cm}
\begin{figure}[ht!]
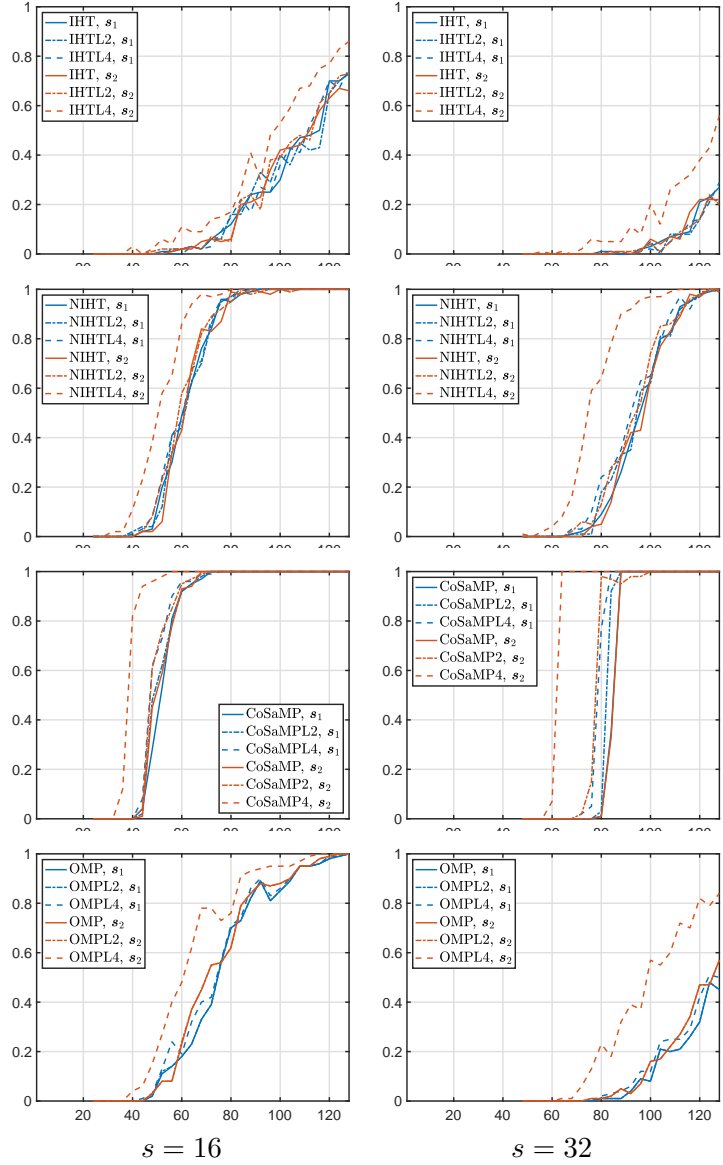

\begin{center}
\begin{tabular}{cc} %% tabular useful for creating an array of images 
\includegraphics[width=\figsize]{Figures/Figure_1_11} &
\includegraphics[width=\figsize]{Figures/Figure_1_12} \\
\includegraphics[width=\figsize]{Figures/Figure_1_21} &
\includegraphics[width=\figsize]{Figures/Figure_1_22} \\
\includegraphics[width=\figsize]{Figures/Figure_1_31} &
\includegraphics[width=\figsize]{Figures/Figure_1_32} \\
\includegraphics[width=\figsize]{Figures/Figure_1_41} &
\includegraphics[width=\figsize]{Figures/Figure_1_42} \\
$s = 16$ & $s = 32$
\end{tabular}
\end{center}
\caption[example] 
{ \label{fig:4levelunif} 
Horizontal phase transition line showing success probability versus $m$ for various fixed total sparsities $s$.  Four-level sparsity with $\bm{M} = (N/4,N/2,3N/4,N)$.  The local sparsities are $\bm{s}_1 = (3 s/8 , s/8 , 3 s /8 , s/8)$ and $\bm{s}_2 = (s/2 , 0 , s/2 , 0)$. In the levels case we consider two-level algorithms based on $\bm{M} = (N/4,N)$ and $\bm{s} = (s/2,s/2)$ and four-level algorithms based on $\bm{M} = (N/4,N/2,3N/4,N)$ and $\bm{s} = \bm{s}_1$ or $\bm{s} = \bm{s}_2$. Row one contains IHT and IHTL, row two NIHT and NIHTL, row three CoSaMP and CoSaMPL, and the final row OMP and OMPL. 
}
\end{figure}

The second experiment is similar to the first one. However, we consider a larger number of levels and a higher dimension $N = 512$. Specifically, we generate $(\bm{s},\bm{M})$-sparse vectors with levels $\bm{M}=(N/8, N/4, 3N/8, N/2, 5N/8, 3N/4, 7N/8, N)$ and local sparsities $\bm{s} = (s/4, 0, s/4, 0, s/4, 0, s/4, 0)$, with $s = 128$ and $s = 256$. We compare IHT, NIHT, CoSaMP, and OMP with three levels-based versions of them. The first two of them are four-level algorithms with uniform and nonuniform level splitting, and the third one is an eight-level algorithm. The results are shown in Figure~\ref{fig:8_levels} and, similarly to Figure 1, the levels-based strategies that are closest to the true structure of the underlying solution consistently outperform all the other strategies. In addition, we  observe that the four-level strategies, where local sparsities are constant across all levels, usually provide little or no benefit with respect to the standard versions of the algorithms, regardless of the type of levels considered (i.e., of uniform or nonuniform sizes). 

\begin{figure}[ht!]
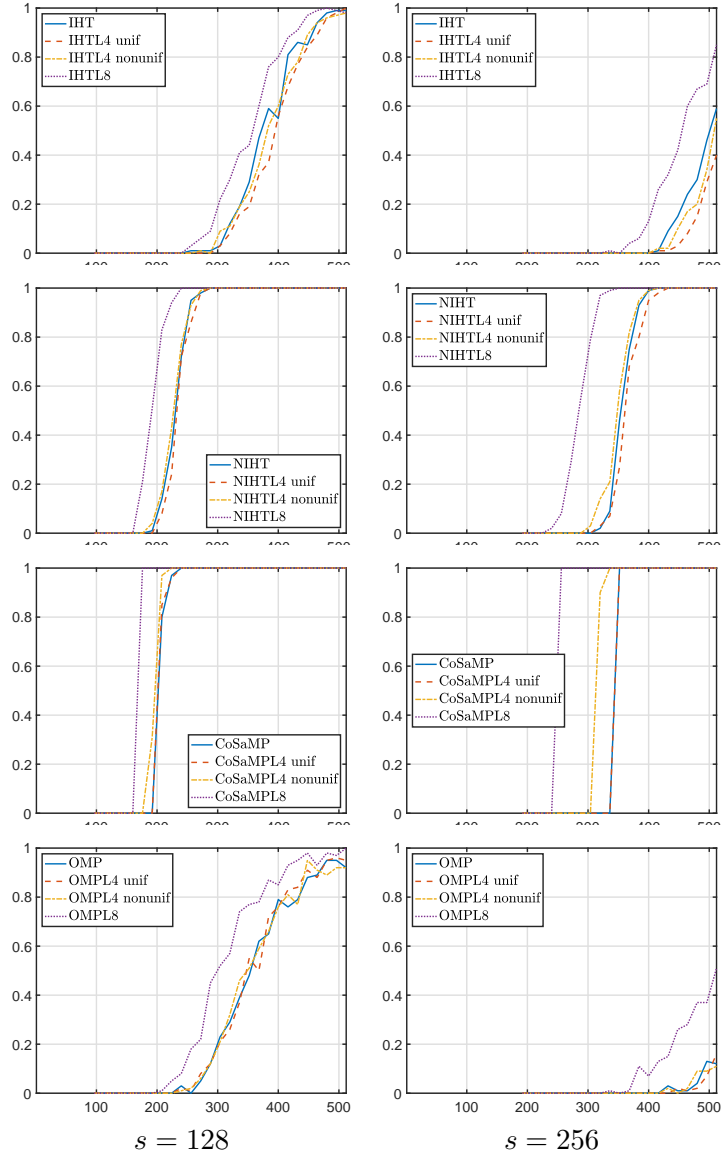

\centering
\begin{tabular}{ccc}
\includegraphics[width = \figsize]{Figures/Figure_2_11}&
\includegraphics[width = \figsize]{Figures/Figure_2_12}\\
\includegraphics[width = \figsize]{Figures/Figure_2_21}&
\includegraphics[width = \figsize]{Figures/Figure_2_22}\\
\includegraphics[width = \figsize]{Figures/Figure_2_31}&
\includegraphics[width = \figsize]{Figures/Figure_2_32}\\
\includegraphics[width = \figsize]{Figures/Figure_2_41}&
\includegraphics[width = \figsize]{Figures/Figure_2_42}\\
$s = 128$ & $s = 256$
\end{tabular}

\caption{\label{fig:8_levels}Horizontal phase transition line showing success probability versus $m$ for various fixed total sparsities $s$.  Eight-level sparsity with $\bm{M}=(N/8, N/4, 3N/8, N/2, 5N/8, 3N/4, 7N/8, N)$ and local sparsities $\bm{s} = (s/4, 0, s/4, 0, s/4, 0, s/4, 0)$. In the levels case we consider two four-level algorithms based on uniform levels $\bm{M} = (N/4, N/2, 3N/4, N)$ and nonuniform levels $\bm{M} = (N/8 N/2 5N/8 N)$, with local sparsities $\bm{s} = (s/4, s/4, s/4, s/4)$, and eight-level algorithms where $\bm{M}=(N/8, N/4, 3N/8, N/2, 5N/8, 3N/4, 7N/8, N)$ and local sparsities $\bm{s} = (s/4, 0, s/4, 0, s/4, 0, s/4, 0)$. Row one contains IHT and IHTL, row two NIHT and NIHTL, row three CoSaMP and CoSaMPL, and the final row OMP and OMPL.}
\end{figure}

Next in Figure \ref{fig:PTPFull4level} we give phase transition plots for each algorithm, with the local sparsity pattern $\bm{s}=(s/2,0,s/2,0)$ in levels $\bm{M}=(N/4,N/2,3N/4,N)$. Note that this sparsity pattern is only sensible up to $s = N/2$, as thereafter we are fully saturated in the first and the third levels. Thus the experiments below only plot to a maximum of $m=s=N/2$.  We compare the standard sparse decoders of IHT, CoSaMP, and OMP against the levels-based versions, and see uniform improvement by moving to the levels setting.
\begin{figure} [ht]
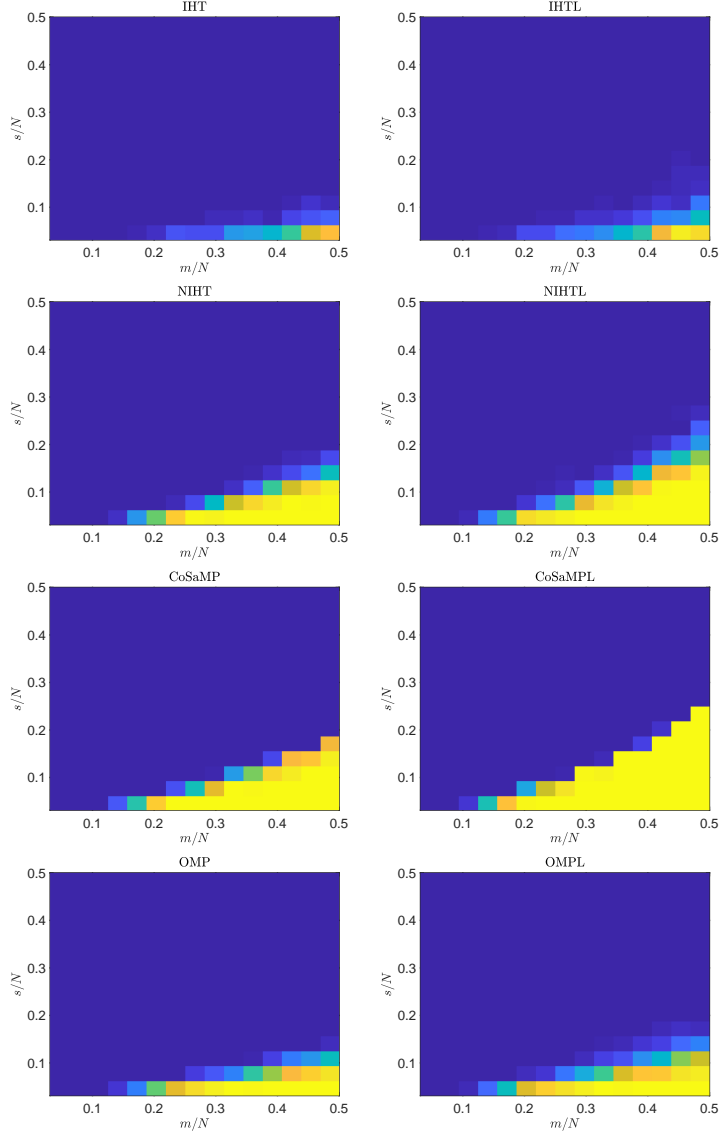

\begin{center}
\begin{tabular}{ccc} %% tabular useful for creating an array of images 
\includegraphics[width=\figsize]{Figures/Figure_3_11} &
\includegraphics[width=\figsize]{Figures/Figure_3_12} \\
\includegraphics[width=\figsize]{Figures/Figure_3_21} &
\includegraphics[width=\figsize]{Figures/Figure_3_22} \\
\includegraphics[width=\figsize]{Figures/Figure_3_31} &
\includegraphics[width=\figsize]{Figures/Figure_3_32} \\
\includegraphics[width=\figsize]{Figures/Figure_3_41} &
\includegraphics[width=\figsize]{Figures/Figure_3_42} 
\end{tabular}
\end{center}
\caption[example] 
{ \label{fig:PTPFull4level}
Phase transition plots comparing the standard sparse decoders of IHT, CoSaMP, and OMP against the levels-based generalizations for $N = 256$. Here, the underlying vector is $\bm{s} = (s/2 , 0 , s/2 , 0)$ sparse in four levels $\bm{M} = (N/4,N/2,3N/4,N)$. Row one contains IHT, IHTL, row two NIHT, and NIHTL, row three CoSaMP and CoSaMPL, and the final row OMP and OMPL.
}
\end{figure}

The final numerics are contained in Figure \ref{fig:2levelunif}. This performs full phase transitions for a sparsity pattern $\bm{s} = (3s/4, s/4)$ in two levels $\bm{M} = (3s/4,N)$. This serves as a surrogate for the function approximation case. In fact, in the case of piecewise regular (specifically, piecewise $\alpha$-H\"{o}lder) functions, best $s$-term approximation rates of order $O(s^{-\alpha})$ are realized by  sparse-in-levels approximations to the vector of wavelet coefficients where the coarsest levels are fully saturated  (see \cite{BASBMKRCSwavelet} for further details). While NIHTL and CoSaMPL show improvement over NIHT and CoSaMP, IHTL and OMPL only improve over IHT and OMP in the low total sparsity regime -- unlike our other experiments.

\begin{figure} [ht]
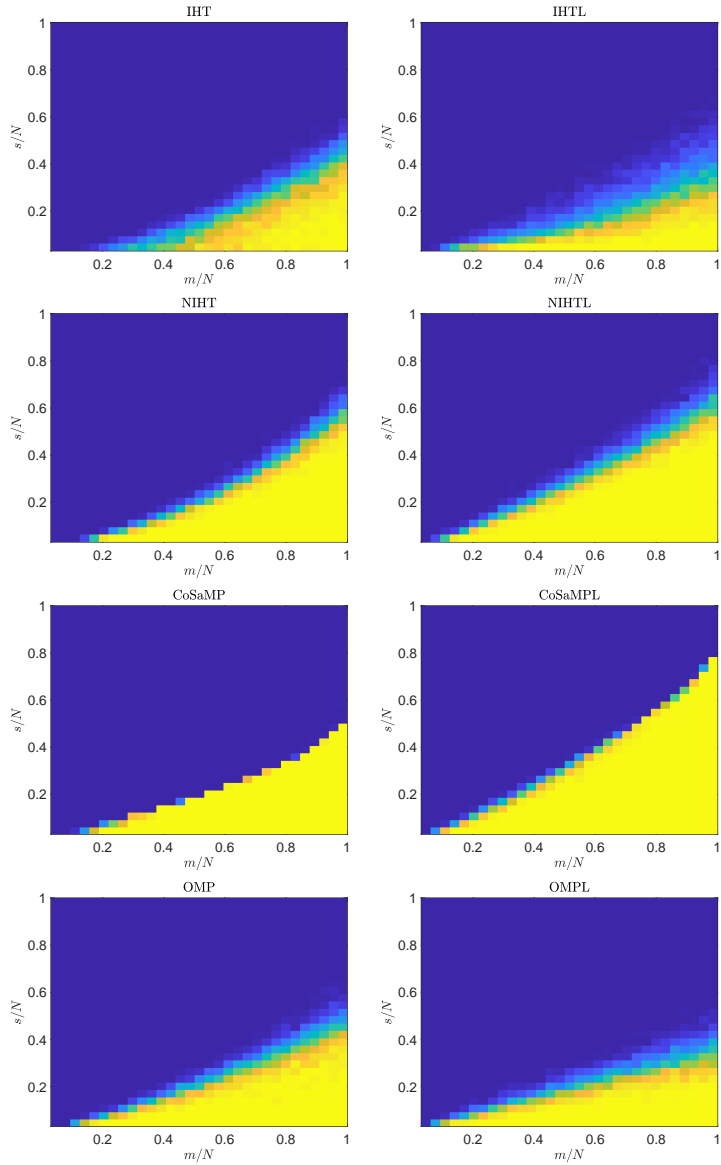

\begin{center}
\begin{tabular}{ccc} %% tabular useful for creating an array of images 
\includegraphics[width=\figsize]{Figures/Figure_4_11} &
\includegraphics[width=\figsize]{Figures/Figure_4_12}\\
\includegraphics[width=\figsize]{Figures/Figure_4_21} &
\includegraphics[width=\figsize]{Figures/Figure_4_22}\\
\includegraphics[width=\figsize]{Figures/Figure_4_31} &
\includegraphics[width=\figsize]{Figures/Figure_4_32}\\
\includegraphics[width=\figsize]{Figures/Figure_4_41} &
\includegraphics[width=\figsize]{Figures/Figure_4_42}
\end{tabular}
\end{center}
\caption[example] 
{ \label{fig:2levelunif} 
Phase transition plots comparing the standard sparse decoders of IHT, NIHT, CoSaMP, and OMP against the levels-based generalizations for $N = 256$. Here, the underlying vector is $\bm{s} = (3s/4, s/4)$ sparse in two levels $\bm{M} = (3s/4,N)$.

}
\end{figure} 

\section{Proofs} \label{sec:proofs}

\subsection{Outline}
As this series of proofs is quite lengthy, we begin by outlining the main steps. Both \cref{lemma:616Generalized} and \cref{lemma:620generalized} are direct generalizations of standard sparse results, giving useful bounds involving the RICL constant. Using these, we prove an extremely key result, \cref{thm:WeightedTheorem}, which gives conditions on any vector $x'$ and RICL constant $\delta_{\bm{s},\bm{M}}$ to guarantee the true solution $x$ and $x'$ are sufficiently close. Using this result, the overall argument for both IHTL and CoSaMPL is similar. In either case we use \cref{lemma:616Generalized} and \cref{lemma:620generalized}, along with careful tracking of index sets, to show that $x^{(n)} = x'$ satisfies the assumptions of \cref{thm:WeightedTheorem}. From this, the final results follow immediately. This style of argument is extended from the sparse case contained in \cite{FoucartRauhutCSbook}.

\subsection{Preliminary Lemmas}

The following two results are based on \cite[Lemma 6.16]{FoucartRauhutCSbook}, and \cite[Lemma 6.20]{FoucartRauhutCSbook} respectively.

\begin{lemma}  \label{lemma:616Generalized}
Let $u,v \in \mathbb{C}^{N}$ be $(\bm{s}',\bm{M})$-sparse and $(\bm{s}'',\bm{M})$-sparse respectively, and $\Delta \in D_{\bm{s},\bm{M}}$ be arbitrary. Then for any matrix $A \in \mathbb{C}^{m \times N}$,
\begin{align*}
    (i)& \qquad \vert \langle u, (I -A^{*}A)v \rangle \vert \leq \delta_{\bm{s}'+\bm{s}'',\bm{M}} \Vert u \Vert_{\ell^{2}} \Vert v \Vert_{\ell^{2}}\\
    (ii)& \qquad \Vert P_{\Delta}(I-A^{*}A)v \Vert_{\ell^{2}} \leq \delta_{\bm{s} + \bm{s}'',\bm{M}} \Vert v \Vert_{\ell^{2}} 
\end{align*}
\end{lemma}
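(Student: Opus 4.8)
The plan is to reduce both bounds to the operator-norm reformulation of the RIPL: for any $(\bm{s},\bm{M})$-sparse index set $S$, every vector supported in $S$ is $(\bm{s},\bm{M})$-sparse, so testing the defining inequality of the RICL on such vectors shows that the Hermitian submatrix $A_S^* A_S$ has all eigenvalues in $[1-\delta_{\bm{s},\bm{M}},\,1+\delta_{\bm{s},\bm{M}}]$, equivalently $\Vert A_S^* A_S - I_S\Vert_{2\to 2}\le \delta_{\bm{s},\bm{M}}$, where $A_S$, $I_S$ denote the restrictions to columns/indices in $S$. The only ingredient specific to the levels setting is the combinatorial observation that a union of sparse-in-levels supports adds local sparsities level by level.

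First I would prove (i). Set $S = \mathrm{supp}(u)\cup\mathrm{supp}(v)$. Within each level $\{M_{k-1}+1,\ldots,M_k\}$ there are at most $s_k'$ indices from $\mathrm{supp}(u)$ and at most $s_k''$ from $\mathrm{supp}(v)$, so subadditivity of cardinality gives $|S\cap\{M_{k-1}+1,\ldots,M_k\}|\le s_k'+s_k''$; hence $S$ is an $(\bm{s}'+\bm{s}'',\bm{M})$-sparse index set. Since both $u$ and $v$ are supported in $S$, we have $\langle u,(I-A^*A)v\rangle = \langle u_S,(I_S - A_S^*A_S)v_S\rangle$, and the operator-norm bound above yields $|\langle u,(I-A^*A)v\rangle|\le \delta_{\bm{s}'+\bm{s}'',\bm{M}}\Vert u\Vert_{\ell^2}\Vert v\Vert_{\ell^2}$.

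For (ii), I would use a test-vector argument that feeds (i) back on itself. Write $w = P_\Delta(I-A^*A)v$, which is supported on $\Delta$ and therefore $(\bm{s},\bm{M})$-sparse. Because $P_\Delta w = w$, we get $\Vert w\Vert_{\ell^2}^2 = \langle w, P_\Delta(I-A^*A)v\rangle = \langle w,(I-A^*A)v\rangle$. Applying (i) with $u$ replaced by $w$ (which is $(\bm{s},\bm{M})$-sparse) and $v$ still $(\bm{s}'',\bm{M})$-sparse gives $\Vert w\Vert_{\ell^2}^2 \le \delta_{\bm{s}+\bm{s}'',\bm{M}}\Vert w\Vert_{\ell^2}\Vert v\Vert_{\ell^2}$, and dividing through by $\Vert w\Vert_{\ell^2}$ (the bound being trivial if $w=0$) completes the claim.

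The argument is essentially routine, being the translation of the sparse proof of \cite[Lemma 6.16]{FoucartRauhutCSbook} into level-wise language, so I do not anticipate a serious obstacle. The one point requiring care is purely bookkeeping: verifying that the union of two sparse-in-levels supports is sparse in levels with the summed local sparsities, and handling the edge case where $s_k'+s_k''$ (respectively $s_k+s_k''$) exceeds the level size $M_k-M_{k-1}$, in which case that level is simply unconstrained and one relies on monotonicity of the RICL in the local sparsities. Everything else follows from the operator-norm characterization of the RIPL.
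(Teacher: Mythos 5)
Your proposal is correct and follows essentially the same route as the paper: part (i) via the union of supports being $(\bm{s}'+\bm{s}'',\bm{M})$-sparse together with the operator-norm reformulation $\sup_{\Xi}\Vert P_{\Xi}-P_{\Xi}A^{*}AP_{\Xi}\Vert_{\ell^2}\leq\delta_{\bm{s}'+\bm{s}'',\bm{M}}$ of the RICL, and part (ii) by feeding the test vector $P_{\Delta}(I-A^{*}A)v$ back into (i) and dividing. Your extra care about the level-by-level subadditivity of supports and the edge case $s_k'+s_k''>M_k-M_{k-1}$ is sound bookkeeping that the paper leaves implicit.
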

\begin{proof}
 
To show $(i)$ we expand the inner product
\begin{align*}
    \vert \langle u, (I -A^{*}A)v \rangle \vert = \vert \langle u,v \rangle - \langle Au, Av \rangle \vert 
\end{align*}
and define $\Xi= \text{supp}(u) \cup \text{supp}(v) \in D_{\bm{s}'+\bm{s}'',\bm{M}}$. Then the above may be written as 
\begin{align}
    \vert \langle u,v \rangle - \langle Au, Av \rangle \vert  &= \vert \langle P_{\Xi} u, P_{\Xi} v \rangle - \langle (AP_{\Xi}) P_{\Xi}u, (AP_{\Xi}) P_{\Xi} v \rangle \vert \nonumber \\
    &=     \vert \langle P_{\Xi}u, (P_{\Xi} -(P_{\Xi}A^{*}AP_{\Xi})P_{\Xi})v \rangle \vert \nonumber \\
    &\leq  \Vert P_{\Xi}u \Vert_{\ell^{2}} \Vert P_{\Xi} -(P_{\Xi}A^{*}AP_{\Xi})\Vert_{\ell^{2}} \Vert P_{\Xi}v \Vert_{\ell^{2}} \label{UsingAlternateDefnOfRICL}
\end{align}
As $P_{\Xi}v$ is $(\bm{s}'',\bm{M})$-sparse and thus $(\bm{s}'+\bm{s}'',\bm{M})$-sparse, we use the fact that
$\delta_{\bm{s}'+\bm{s}'',\bm{M}} = \displaystyle \sup_{\Xi \in D_{\bm{s}'+\bm{s}'',\bm{M}}} \Vert P_{\Xi} - P_{\Xi}A^{*}AP_{\Xi} \Vert_{\ell^{2}}$
to obtain that the right-hand side of~\cref{UsingAlternateDefnOfRICL} may be written as
\begin{align*}
   & \Vert P_{\Xi}u \Vert_{\ell^{2}} \Vert P_{\Xi} -(P_{\Xi}A^{*}AP_{\Xi})\Vert_{\ell^{2}} \Vert P_{\Xi}v \Vert_{\ell^{2}} \\
   & \quad \leq \delta_{\bm{s}' +\bm{s}'',\bm{M}} \Vert P_{\Xi} u \Vert_{\ell^{2}} \Vert  P_{\Xi}v \Vert_{\ell^{2}} 
    = \delta_{\bm{s}' +\bm{s}'',\bm{M}} \Vert u \Vert_{\ell^{2}} \Vert  v \Vert_{\ell^{2}},
\end{align*}
which gives $(i)$. 

For $(ii)$, we note that
\begin{equation*}
    \Vert P_{\Delta}(I-A^{*}A)v \Vert_{\ell^{2}}^{2} = \vert \langle P_{\Delta}(I-A^{*}A)v, (I-A^{*}A)v \rangle \vert 
\end{equation*}
and apply $(i)$ with $u = P_{\Delta}(I-A^{*}A)v$, giving
\begin{equation*}
     \Vert P_{\Delta}(I-A^{*}A)v \Vert_{\ell^{2}}^{2} \leq \delta_{\bm{s} + \bm{s}'',\bm{M}} \Vert P_{\Delta}(I-A^{*}A)v \Vert_{\ell^{2}}\Vert v \Vert_{\ell^{2}}
\end{equation*}
and dividing through by $ \Vert P_{\Delta}(I-A^{*}A)v \Vert_{\ell^{2}}$ gives the desired result.
\end{proof}

\begin{lemma} \label{lemma:620generalized}
Let $e \in \mathbb{C}^{m}$, $A \in \mathbb{C}^{m\times N}$ with RICL  $\delta_{\bm{s},\bm{M}}$ and $\Delta \in D_{\bm{s},\bm{M}}$. Then
$
    \Vert P_{\Delta}A^{*}e \Vert_{\ell^{2}} \leq \sqrt{1 + \delta_{\bm{s},\bm{M}}}\Vert e \Vert_{\ell^{2}}.
$
\end{lemma}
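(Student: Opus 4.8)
The plan is to reduce the bound to a single application of the upper RIPL inequality via a self-adjointness and duality trick, exactly mirroring the sparse argument in \cite[Lemma 6.20]{FoucartRauhutCSbook}. First I would set $w = P_{\Delta} A^{*} e$ and observe that, since $\Delta \in D_{\bm{s},\bm{M}}$, the vector $w$ is supported on $\Delta$ and is therefore $(\bm{s},\bm{M})$-sparse. This structural fact is the crucial ingredient, as it is what allows the RICL $\delta_{\bm{s},\bm{M}}$ to be brought to bear on $Aw$.

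Next I would expand $\Vert w \Vert_{\ell^2}^2 = \langle w, P_{\Delta} A^{*} e\rangle$ and use that $P_{\Delta}$ is self-adjoint together with $P_{\Delta} w = w$ to move the projection off of $A^{*} e$, obtaining $\langle w, A^{*} e\rangle$. The adjoint relation $\langle w, A^{*} e\rangle = \langle A w, e\rangle$, which holds in the convention $\langle x,y\rangle = \sum_i x_i y_i^{*}$, then converts the quantity into a form that the RIPL controls.

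Finally, Cauchy--Schwarz gives $\Vert w \Vert_{\ell^2}^2 = |\langle A w, e\rangle| \leq \Vert A w\Vert_{\ell^2}\,\Vert e\Vert_{\ell^2}$, and since $w$ is $(\bm{s},\bm{M})$-sparse the upper RIPL bound yields $\Vert A w\Vert_{\ell^2} \leq \sqrt{1 + \delta_{\bm{s},\bm{M}}}\,\Vert w\Vert_{\ell^2}$. Combining these and dividing through by $\Vert w\Vert_{\ell^2}$ (the bound being trivial when $w = 0$) produces $\Vert P_{\Delta} A^{*} e\Vert_{\ell^2} \leq \sqrt{1+\delta_{\bm{s},\bm{M}}}\,\Vert e\Vert_{\ell^2}$, as claimed.

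I do not expect a genuine obstacle here: the argument is a direct transcription of the standard sparse case, with the only points requiring care being the bookkeeping with the complex inner product (to justify $\langle w, A^{*} e\rangle = \langle A w, e\rangle$) and the verification that $P_{\Delta} A^{*} e$ is indeed $(\bm{s},\bm{M})$-sparse, which is immediate from $\Delta \in D_{\bm{s},\bm{M}}$ and is precisely what legitimizes the use of the RIPL.
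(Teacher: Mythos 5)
Your argument is correct and is essentially identical to the paper's proof: both write $\Vert P_{\Delta}A^{*}e\Vert_{\ell^2}^2$ as an inner product, move $A$ across via adjointness, apply Cauchy--Schwarz, invoke the upper RIPL bound on the $(\bm{s},\bm{M})$-sparse vector $P_{\Delta}A^{*}e$, and divide through. No gaps.
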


\begin{proof}
We compute
\begin{align*}
    \Vert P_{\Delta}A^{*}e \Vert_{\ell^{2}}^{2} 
    & = \langle A^{*}e, P_{\Delta}A^{*}e \rangle = \langle e, A P_{\Delta}A^{*}e \rangle \\
    & \leq \Vert e \Vert_{\ell^{2}} \Vert AP_{\Delta}A^{*}e \Vert_{\ell^{2}}.
\end{align*}
But as $P_{\Delta}A^{*}e$ is $(\bm{s},\bm{M})$-sparse we have
\begin{equation*}
     \Vert e \Vert_{\ell^{2}} \Vert AP_{\Delta}A^{*}e \Vert_{\ell^{2}} \leq \Vert e \Vert_{\ell^{2}} \sqrt{1+\delta_{\bm{s},\bm{M}}} \Vert P_{\Delta}A^{*}e \Vert_{\ell^{2}}
\end{equation*}
and dividing through by $ \Vert P_{\Delta}A^{*}e \Vert_{\ell^{2}}$ gives the desired result.
\end{proof}

With these is hand, we prove a key result. This theorem is directly extended from the sparse case in \cite[Lemma 6.23]{FoucartRauhutCSbook}.

\begin{theorem} \label{thm:WeightedTheorem} Suppose $A \in \mathbb{C}^{m \times N}$ satisfies the RIPL of order $(\bm{s},\bm{M})$ and has RICL $\delta_{\bm{s},\bm{M}} < 1$. Let $\kappa, \tau > 0, \lambda \geq 0$ and $e \in \mathbb{C}^{m}$ be given, and $w \in \mathbb{R}^N$, with $w >0$, be a set of weights constant on each level, such that $w_{i} = w^{(k)}$, for $M_{k-1} < i \leq M_k$ and  $1 \leq  k \leq r$. Suppose we have $x,x' \in \mathbb{C}^{N}$ such that
\begin{equation*}
    x' \in D_{\kappa \bm{s},\bm{M}} , \, \text{ and } \, \Vert P_{\Xi}x - x' \Vert_{\ell^{2}} \leq \tau \Vert AP_{\Xi^{c}}x + e \Vert_{\ell^{2}} + \lambda,
\end{equation*} 
where $\Xi = \Xi_{1} \cup \ldots \cup \Xi_{r}$, and $\Xi_{i}$ is the index set of the largest $2s_{i}$ entries of $P_{M_{i}}^{M_{i-1}}x$. Then, there exist constants $C_{\kappa,\tau}, D_{\kappa,\tau}, F_{\kappa,\tau} >0$ depending only on $\kappa$ and $\tau$ and $E_{\kappa}>0$ depending only on $\kappa$ such that
\begin{align*}
\Vert x - x' \Vert_{\ell^{1}_{w}} 
&\leq C_{\kappa,\tau} \frac{\sqrt{\zeta}}{\sqrt{\xi}}\sigma_{\bm{s,M}}(x)_{\ell_{w}^{1}} 
+ D_{\kappa, \tau} \sqrt{\zeta} \Vert e \Vert_{\ell^{2}} 
+ {E_{\kappa}} \sqrt{\zeta} \lambda, \\
\Vert x - x' \Vert_{\ell^{2}} &\leq \frac{F_{\tau}}{\sqrt{\xi}}\sigma_{\bm{s,\bm{M}}}(x)_{\ell^{1}_{w}} + \tau \Vert e \Vert_{\ell^{2}} + \lambda,
\end{align*}
where $\zeta$ and $\xi$ are as in \eqref{zetaxidefinition}.
%where 
%$\zeta = \sum_{i=1}^{r} (w^{(i)})^{2}s_{i}$ and $\xi = \min_{i=1,\ldots,r} (w^{(i)})^{2}s_{i}$.
\end{theorem}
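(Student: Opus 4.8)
The plan is to reduce both estimates to the single hypothesis $\Vert P_{\Xi} x - x' \Vert_{\ell^2} \le \tau \Vert A P_{\Xi^c} x + e \Vert_{\ell^2} + \lambda$, and to supply one technical ingredient: a sharp bound on the tail $P_{\Xi^c} x$, measured both after applying $A$ and in the $\ell^2$-norm. Writing $x - x' = (P_{\Xi} x - x') + P_{\Xi^c} x$ and splitting $\Vert A P_{\Xi^c} x + e \Vert_{\ell^2} \le \Vert A P_{\Xi^c} x \Vert_{\ell^2} + \Vert e \Vert_{\ell^2}$, everything will follow once the tail is controlled by $\sigma_{\bm{s},\bm{M}}(x)_{\ell^1_w}$.

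The hard part is the tail estimate, and this is where the level structure and the $1/\sqrt{\xi}$ scaling enter. Fix a level $k$, sort the entries of $P^{M_{k-1}}_{M_k} x$ in decreasing magnitude, and label consecutive blocks $T_0, T_1, T_2, \ldots$ of size $s_k$; since $\Xi_k$ keeps the top $2 s_k$ entries, the level-$k$ part of $P_{\Xi^c} x$ is $T_2 \cup T_3 \cup \cdots$, and each $T_j$ is supported in one level with at most $s_k$ nonzeros, hence is $(\bm{s},\bm{M})$-sparse. Applying the RIPL blockwise and the triangle inequality gives $\Vert A P_{\Xi^c} x \Vert_{\ell^2} \le \sqrt{1 + \delta_{\bm{s},\bm{M}}} \sum_{k} \sum_{j \ge 2} \Vert (P_{\Xi^c} x)_{T_j} \Vert_{\ell^2}$. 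The doubling in $\Xi$ is crucial: by the decreasing-rearrangement inequality $\Vert v_{T_{j+1}} \Vert_{\ell^2} \le s_k^{-1/2} \Vert v_{T_j} \Vert_{\ell^1}$, the leading tail block $T_2$ is controlled by the $\ell^1$-norm of the preceding block $T_1$ (the entries ranked $s_k+1,\dots,2s_k$, which lie in $\Xi$ but are discarded by the best $(\bm{s},\bm{M})$-term approximation). Telescoping, the level-$k$ contribution is at most $\frac{1}{\sqrt{s_k}}$ times the $\ell^1$-norm of all level-$k$ entries ranked beyond $s_k$, i.e. $\frac{1}{w^{(k)}\sqrt{s_k}}$ times the level-$k$ part of $\sigma_{\bm{s},\bm{M}}(x)_{\ell^1_w}$. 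Since $w^{(k)}\sqrt{s_k} = \sqrt{(w^{(k)})^2 s_k} \ge \sqrt{\xi}$, summing over $k$ yields
\[
\Vert A P_{\Xi^c} x \Vert_{\ell^2} \le \sqrt{1 + \delta_{\bm{s},\bm{M}}}\, \xi^{-1/2} \sigma_{\bm{s},\bm{M}}(x)_{\ell^1_w}, \qquad \Vert P_{\Xi^c} x \Vert_{\ell^2} \le \xi^{-1/2} \sigma_{\bm{s},\bm{M}}(x)_{\ell^1_w},
\]
the second bound coming from the identical block estimate applied directly to $\Vert P_{\Xi^c} x \Vert_{\ell^2}$.

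The $\ell^2$ bound is then immediate: triangle inequality, the hypothesis, and the two tail estimates give $\Vert x - x' \Vert_{\ell^2} \le \xi^{-1/2}(1 + \tau \sqrt{1+\delta_{\bm{s},\bm{M}}}) \sigma_{\bm{s},\bm{M}}(x)_{\ell^1_w} + \tau \Vert e \Vert_{\ell^2} + \lambda$, so $F_\tau = 1 + \sqrt{2}\tau$ works after using $\delta_{\bm{s},\bm{M}} < 1$.

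For the weighted $\ell^1$ bound I would split $\Vert x - x' \Vert_{\ell^1_w} = \Vert P_{\Xi}(x - x') \Vert_{\ell^1_w} + \Vert P_{\Xi^c}(x - x') \Vert_{\ell^1_w}$ and convert each piece to $\ell^2$ by Cauchy--Schwarz across levels, which is exactly where $\zeta$ appears: any vector with at most $t_k$ nonzeros in level $k$ obeys $\Vert v \Vert_{\ell^1_w} \le (\sum_k (w^{(k)})^2 t_k)^{1/2} \Vert v \Vert_{\ell^2}$. As $\Xi \in D_{2\bm{s},\bm{M}}$ and $x' \in D_{\kappa\bm{s},\bm{M}}$, the pieces $P_{\Xi}(x - x')$ and $P_{\Xi^c} x'$ are $(2\bm{s},\bm{M})$- and $(\kappa\bm{s},\bm{M})$-sparse, contributing constants $\sqrt{2\zeta}$ and $\sqrt{\kappa\zeta}$; moreover the orthogonal decomposition $P_{\Xi} x - x' = P_{\Xi}(x-x') - P_{\Xi^c} x'$ bounds both of their $\ell^2$-norms by $\Vert P_{\Xi} x - x' \Vert_{\ell^2}$, hence by the hypothesis and the tail estimate. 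The only remaining term, $\Vert P_{\Xi^c} x \Vert_{\ell^1_w}$, is at most $\sigma_{\bm{s},\bm{M}}(x)_{\ell^1_w} \le (\sqrt{\zeta}/\sqrt{\xi}) \sigma_{\bm{s},\bm{M}}(x)_{\ell^1_w}$ since $\zeta \ge \xi$. Collecting the three contributions gives the claimed bound, with $C_{\kappa,\tau}, D_{\kappa,\tau}$ absorbing the factor $(\sqrt{2} + \sqrt{\kappa})\tau\sqrt{1+\delta_{\bm{s},\bm{M}}}$ and $E_\kappa = \sqrt{2} + \sqrt{\kappa}$ independent of $\tau$, because the $\lambda$-term carries no factor of $\tau$.
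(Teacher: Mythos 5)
Your proof is correct and follows essentially the same route as the paper's: the same Stechkin-type block decomposition of the tail $P_{\Xi^c}x$ into $(\bm{s},\bm{M})$-sparse pieces to obtain the $\xi^{-1/2}\sigma_{\bm{s},\bm{M}}(x)_{\ell^1_w}$ bound on $\Vert A P_{\Xi^c}x + e\Vert_{\ell^2}$, and the same Cauchy--Schwarz conversion of sparse pieces from $\ell^1_w$ to $\ell^2$ that introduces $\sqrt{\zeta}$. The only differences are cosmetic --- you telescope blocks within each level rather than forming cross-level unions, and your orthogonal splitting $P_\Xi x - x' = P_\Xi(x-x') - P_{\Xi^c}x'$ yields $E_\kappa = \sqrt{2}+\sqrt{\kappa}$ in place of the paper's $\sqrt{2+\kappa}$ --- neither of which changes the substance of the argument.
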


\begin{proof}
Let us consider some fixed level $i$, and $\Xi$ defined as above. We consider the case of the weighted 1-norm first. Projecting onto level $i$ gives
\begin{align*}
    & \Vert P_{M_{i}}^{M_{i-1}}(x - x') \Vert_{\ell^{1}_{w}} \\
    & \qquad\leq w^{(i)} \Vert P_{\Xi^{c}_{i}}x \Vert_{\ell^{1}} + w^{(i)} \Vert P_{\Xi_{i}}x - P_{M_{i}}^{M_{i-1}}x' \Vert_{\ell^{1}},
\end{align*}
and where $\Xi^c_i = \{M_{i-1}+1,\ldots,M_i\} \setminus \Xi_i$  is the relative complement of $\Xi_i$ with respect to the level $i$. We bound the latter term by noting that $ P_{\Xi_{i}}x - P_{M_{i}}^{M_{i-1}}x'$ is $(2+\kappa)s_{i}$-sparse, so that
\begin{align*}
    & w^{(i)} \Vert P_{\Xi_{i}}x - P_{M_{i}}^{M_{i-1}}x' \Vert_{\ell^{1}} \\
    &\qquad\leq \sqrt{(2+\kappa)s_{i}(w^{(i)})^{2} } \Vert P_{\Xi_{i}}x - P_{M_{i}}^{M_{i-1}}x' \Vert_{\ell^{2}}.
\end{align*}
Further defining $\Delta \in D_{\bm{s},\bm{M}}$ to be the index set of a best $(\bm{s},\bm{M})$-term approximation to $x$, we bound the former term by   $w^{(i)}\Vert P_{\Xi_{i}^{c}} x \Vert_{\ell^{1}} \leq w^{(i)}\Vert P_{\Delta_{i}^{c}} x \Vert_{\ell^{1}}$, where $\Delta_{i} = \Delta \cap \{ M_{i-1}+1 , \ldots, M_{i} \}$ and $\Delta_{i}^{c} = \{ M_{i-1}+1 , \ldots, M_{i} \} \backslash \Delta_{i}$. Summing over all levels $i = 1, \ldots, r$ and using the Cauchy-Schwarz inequality gives
\begin{equation*}
    \Vert x - x' \Vert_{\ell^{1}_{w}} \leq \sigma_{\bm{s},\bm{M}}(x)_{\ell_{w}^{1}} + \sqrt{(2+\kappa)\zeta}\Vert P_{\Xi}x - x' \Vert_{\ell^{2}}.
\end{equation*}
By supposition we have then
\begin{equation}
    \Vert x - x' \Vert_{\ell^{1}_{w}} \leq \sigma_{\bm{s},\bm{M}}(x)_{\ell_{w}^{1}} + \sqrt{(2+\kappa)\zeta}(\tau \Vert AP_{\Xi^{c}}x + e \Vert_{\ell^{2}} + \lambda). \label{weightedIntermediateStep}
\end{equation}
We now perform a particular decomposition of $\Delta^{c}=\{1,\ldots,N\}\setminus \Delta$, letting
\begin{equation*}
    \Delta_{i}^{c} = \Xi_{i,1} \cup \Xi_{i,2} \cup \ldots,
\end{equation*}
where $\Xi_{i,1}$ is index set of the $s_{i}$ largest entries of $P_{\Delta_{i}^{c}}x$, $\Xi_{i,2}$ is the index set of the largest $s_{i}$ entries of $P_{(\Delta_{i} \cup \Xi_{i,1})^{c}}x$, and so on. (Note that $\Delta_i^c$, $(\Delta_{i} \cup \Xi_{i,1})^{c}$, etc.\ are relative complements with respect to the level $i$). This allows us to define the collection $\Xi^{(k)}$ for $k=1,2,\ldots$
\begin{equation*}
    \Xi^{(k)} = \bigcup_{i=1}^{r} \Xi_{k,i}, \quad \text{ where by construction } \Xi^{(k)} \in D_{\bm{s,M}},
\end{equation*}
and furthermore $ \Xi^{c} = \bigcup_{k \geq 2} \Xi^{(k)}.$ Using this decomposition and the RIPL assumption we have
\begin{align*}
    & \Vert AP_{\Xi^{c}}x + e \Vert_{\ell^{2}} 
    \leq \sum_{k \geq 2} \sqrt{1 + \delta_{\bm{s},\bm{M}}} \Vert P_{\Xi^{(k)}} x \Vert_{\ell^{2}} + \Vert e \Vert_{\ell^{2}} \nonumber\\ 
    & \qquad\leq\sqrt{2}  \sum_{k \geq 1} \sqrt{\sum_{i=1}^{r} \frac{1}{s_{i}} \Vert P_{\Xi_{i,k}} x  \Vert_{\ell^{1}}^{2} } + \Vert e \Vert_{\ell^{2}} \\
    &\qquad\leq \sqrt{2} \frac{1}{\sqrt{\xi}} \sum_{k \geq 1} \sqrt{\sum_{i=1}^{r}\Vert P_{\Xi_{i,k}} x  \Vert_{\ell^{1}_{w}}^{2} } + \Vert e \Vert_{\ell^{2}} \\
    & \qquad\leq \tfrac{\sqrt{2}}{\sqrt{\xi}} \Vert P_{\Delta^{c}} x \Vert_{\ell_{w}^{1}}  + \Vert e \Vert_{\ell^{2}} 
     = \tfrac{\sqrt{2}}{\sqrt{\xi}}\sigma_{\bm{s,M}}(x)_{\ell_{w}^{1}} + \Vert e \Vert_{\ell^{2}}.
\end{align*}
Combining this result with \cref{weightedIntermediateStep}, we have 
\begin{align*}
    & \Vert x- x' \Vert_{\ell_{w}^{1}} 
    \leq \left[1+ \tfrac{\sqrt{(4 + 2\kappa)\zeta}\tau}{\sqrt{\xi}} \right]\sigma_{\bm{s,M}}(x)_{\ell_{w}^{1}} \\
    & \qquad \qquad \qquad + \sqrt{(2+\kappa)\zeta} \tau \Vert  e \Vert_{\ell^{2}} 
     + \sqrt{(2+\kappa)\zeta} \lambda \\
    &\qquad = {C_{\kappa,\tau}}\tfrac{\sqrt{\zeta}}{\sqrt{\xi}}\sigma_{\bm{s,M}}(x)_{\ell_{w}^{1}} + D_{\kappa , \tau} \sqrt{\zeta} \Vert e \Vert_{\ell^{2}} + {E_{\kappa}} \sqrt{\zeta} \lambda,
\end{align*}
as was to be shown.

For the 2-norm case we again focus on particular level $i$. Using the definition of $\Xi_i$ and Stechkin's inequality (see, e.g., \cite[Proposition 2.3]{FoucartRauhutCSbook}), we see that
\begin{align*}
    &\Vert P_{M_{i}}^{M_{i-1}}(x-x')\Vert_{\ell^{2}}^{2} 
    = \Vert P_{\Xi_{i}^{c}} x \Vert_{\ell^{2}}^{2} + \Vert P_{\Xi_{i}}x -P_{M_{i}}^{M_{i-1}}x' \Vert_{\ell^{2}}^{2} \\
    &\qquad\qquad\qquad \leq \tfrac{1}{(w^{(i)})^{2}s_{i}} \Vert P_{\Xi_{i}^{c}} x \Vert_{\ell_{w}^{1}}^{2} + \Vert P_{\Xi_{i}}x -P_{M_{i}}^{M_{i-1}}x' \Vert_{\ell^{2}}^{2}.
\end{align*}
Summing over all levels $i=1, \ldots, r$ we have 
\begin{align*}
    \Vert x-x' \Vert_{\ell^{2}}^{2} 
    &\leq \tfrac{1}{\xi} \Vert P_{\Xi^{c}} x \Vert_{\ell_{w}^{1}}^{2} + \Vert P_{\Xi}x -x' \Vert_{\ell^{2}}^{2} \\
    &\leq \tfrac{1}{\xi}\sigma_{\bm{s,\bm{M}}}(x)_{\ell^{1}_{w}}^{2} + (\tau \Vert AP_{\Xi^{c}}x + e \Vert_{\ell^{2}} + \lambda)^{2},
\end{align*}
where we have applied the definitions of $\xi,\Xi,$ and the assumptions of the theorem. As a result we also have
\begin{equation*}
     \Vert x-x' \Vert_{\ell^{2}} 
     \leq \tfrac{1}{\sqrt{\xi}}\sigma_{\bm{s,\bm{M}}}(x)_{\ell^{1}_{w}} + \tau \Vert AP_{\Xi^{c}}x + e \Vert_{\ell^{2}} + \lambda.
\end{equation*}
As we have already bounded this second term, we have
\begin{align*}
\Vert x-x' \Vert_{\ell^{2}} 
&\leq \tfrac{1 + \sqrt{2}\tau}{\sqrt{\xi}}\sigma_{\bm{s,\bm{M}}}(x)_{\ell^{1}_{w}} + \tau \Vert e \Vert_{\ell^{2}} + \lambda \\
& = \tfrac{{F_{\tau}}}{\sqrt{\xi}}\sigma_{\bm{s,\bm{M}}}(x)_{\ell^{1}_{w}} + \tau \Vert e \Vert_{\ell^{2}} + \lambda,
\end{align*}
thus completing the proof.
\end{proof}

\subsection{Proving \cref{thm:IHTFinalTheorem} on IHT}

The following theorem is based on \cite[{Theorem 6.18}]{FoucartRauhutCSbook}.

\begin{theorem} \label{thm:IHTIntermediateBound}
Suppose $x \in \mathbb{C}^{N}$ is $(\bm{s},\bm{M})$-sparse in levels, with the RIPL constant satisfying
\begin{equation*}
    \delta_{3\bm{s},\bm{M}} < \tfrac{1}{\sqrt{3}}.
\end{equation*}
Then, for all $x \in \mathbb{C}^{N}$, $e \in \mathbb{C}^{m}$ and $\Delta \in D_{\bm{s},\bm{M}}$, the sequence $(x^{(n)})_{n \geq0}$ defined by the algorithm $\mathrm{IHTL}(A,y,\bm{s},\bm{M})$ for $y = Ax + e$ satisfies 
\begin{equation*}
\Vert x^{(n)} - P_{\Delta}x \Vert_{\ell^{2}} \leq \rho^{n} \Vert x^{(0)} - P_{\Delta}x\Vert_{\ell^{2}} + \tau \Vert AP_{\Delta^{c}}x + e \Vert_{\ell^{2}}.
\end{equation*}
where $\rho = \sqrt{3}\delta_{3\bm{s},\bm{M}} < 1$ and $\tau > 0$ only depends on $\rho$ and $\delta_{3 \bm{s},\bm{M}}$, with $\tau \leq 2.18/(1-\rho)$.
\end{theorem}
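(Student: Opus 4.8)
The plan is to reduce the statement to a one-step contraction around the $(\bm s,\bm M)$-sparse vector $P_\Delta x$ and then unroll a geometric recursion. First I would absorb the uncaptured part of $x$ into the noise: setting $\tilde e = AP_{\Delta^c}x + e$, we have $y = A(P_\Delta x) + \tilde e$ with $P_\Delta x \in \Sigma_{\bm s,\bm M}$. In this form the claimed inequality is exactly a statement about recovering the $(\bm s,\bm M)$-sparse vector $P_\Delta x$ from data corrupted by $\tilde e$, and the target noise term $\Vert AP_{\Delta^c}x + e\Vert_{\ell^2}$ is precisely $\Vert \tilde e\Vert_{\ell^2}$. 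Accordingly, it suffices to establish the per-iteration estimate $\Vert x^{(n+1)} - P_\Delta x\Vert_{\ell^2} \le \rho \Vert x^{(n)} - P_\Delta x\Vert_{\ell^2} + \beta \Vert \tilde e\Vert_{\ell^2}$ for $\rho = \sqrt 3\,\delta_{3\bm s,\bm M}$ and a constant $\beta$ depending only on $\delta_{3\bm s,\bm M}$.

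For the one-step estimate, write $v^{(n)} = x^{(n)} + A^*(y - Ax^{(n)})$, so that $x^{(n+1)} = H_{\bm s,\bm M}(v^{(n)})$. The decisive bookkeeping lies in the index sets: put $T = \mathrm{supp}(x^{(n+1)}) \cup \Delta$ and $U = \mathrm{supp}(x^{(n)}) \cup \Delta$, so $T, U \in D_{2\bm s,\bm M}$, and crucially $T \cup U \in D_{3\bm s,\bm M}$ because the common set $\Delta$ is counted only once in each level. Since the hard thresholding in levels decouples across levels and returns the best $(\bm s,\bm M)$-term $\ell^2$-approximation of $v^{(n)}$, and $P_\Delta x$ is $(\bm s,\bm M)$-sparse, I obtain $\Vert v^{(n)} - x^{(n+1)}\Vert_{\ell^2} \le \Vert v^{(n)} - P_\Delta x\Vert_{\ell^2}$. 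Restricting this inequality to $T$ (on $T^c$ both iterates vanish) and rearranging reduces the problem to controlling $P_T(v^{(n)} - P_\Delta x)$.

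Next I would substitute $v^{(n)} - P_\Delta x = (I - A^*A)(x^{(n)} - P_\Delta x) + A^*\tilde e$ and bound the two contributions with the preliminary lemmas. The signal contribution pairs a vector supported on $T$ with $(I - A^*A)(x^{(n)} - P_\Delta x)$, where $x^{(n)} - P_\Delta x$ is supported on $U$; since the combined support lies in $T \cup U \in D_{3\bm s,\bm M}$, \cref{lemma:616Generalized} supplies the factor $\delta_{3\bm s,\bm M}$ (this is exactly where the order $3\bm s$ enters). The noise contribution $P_T A^*\tilde e$ is handled by \cref{lemma:620generalized} with $\Delta = T \in D_{2\bm s,\bm M}$, giving $\sqrt{1 + \delta_{2\bm s,\bm M}} \le \sqrt{1 + \delta_{3\bm s,\bm M}}$. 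Combining these through the same refined (Pythagorean-type) grouping used in the sparse case, namely an inequality such as $(a+b+c)^2 \le 3(a^2+b^2+c^2)$ applied level-wise, produces the factor $\sqrt 3$, yielding $\rho = \sqrt 3\,\delta_{3\bm s,\bm M}$ (which is $<1$ precisely under $\delta_{3\bm s,\bm M} < 1/\sqrt 3$) together with a noise constant $\beta$ depending only on $\delta_{3\bm s,\bm M}$ that satisfies $\beta < 2.18$ under the hypothesis (concretely, $\beta = \sqrt 3\,\sqrt{1 + \delta_{3\bm s,\bm M}}$).

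Finally, unrolling the recursion gives $\Vert x^{(n)} - P_\Delta x\Vert_{\ell^2} \le \rho^n \Vert x^{(0)} - P_\Delta x\Vert_{\ell^2} + \beta \sum_{j=0}^{n-1}\rho^j \Vert \tilde e\Vert_{\ell^2} \le \rho^n\Vert x^{(0)} - P_\Delta x\Vert_{\ell^2} + \tfrac{\beta}{1-\rho}\Vert \tilde e\Vert_{\ell^2}$, so that $\tau = \beta/(1-\rho) \le 2.18/(1-\rho)$, as claimed. The main obstacle I anticipate is obtaining the \emph{sharp} contraction factor $\sqrt 3\,\delta_{3\bm s,\bm M}$ rather than the factor $2\delta_{3\bm s,\bm M}$ that a plain triangle inequality produces: this forces the Pythagorean refinement, and one must verify that every support-counting step and the thresholding-optimality step respect the level structure (in particular, that the fact that retained entries dominate discarded ones is invoked within each level separately, so that the best-approximation inequality holds level by level). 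Once this level-wise decoupling is secured, the constants are identical to those of the classical sparse case.
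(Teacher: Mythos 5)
Your proposal is correct and follows essentially the same route as the paper's proof: reduce to a one-step contraction around $P_\Delta x$ with the absorbed noise $e' = AP_{\Delta^c}x + e$, use the level-wise thresholding optimality together with the symmetric-difference/Pythagorean refinement to obtain the factor $\sqrt{3}$ rather than $2$, apply \cref{lemma:616Generalized} on $D_{3\bm{s},\bm{M}}$-supports and \cref{lemma:620generalized} on $D_{2\bm{s},\bm{M}}$-supports, and unroll the recursion. Your closing caveat is exactly the crux the paper addresses: the comparison of retained versus discarded entries must be carried out within each level separately (comparing $P_{\Delta_i}$ against $P_{\Delta_i^{n+1}}$ for each $i$ and only then summing the squared per-level bounds), since the global best-$(\bm{s},\bm{M})$-term inequality alone yields only the cruder factor $2\delta_{3\bm{s},\bm{M}}$.
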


\begin{proof} We firstly define $\Delta_{i} = \Delta \cap \{ M_{i-1}+1 , \ldots, M_{i} \}$ and $\Delta_{i}^{c} = \{ M_{i-1}+1 , \ldots, M_{i} \} \backslash \Delta_{i}$. Analogously to the proof of Theorem~\ref{thm:WeightedTheorem}, it will prove to be convenient to decompose 
\begin{equation*}
    \{ 1, \ldots, N \} = \Delta \cup \Delta^{c} = \bigcup_{i=1}^{r} \Delta_{i} \cup \Delta_{i}^{c}.
\end{equation*}
Similarly we define $\Delta_{i}^{n+1}$  as the index set of the largest $s_{i}$ entries of $x^{(n+1)}$ in the band $\{ M_{i-1}+1, \ldots M_{i}\}$. With this decomposition, we may use techniques near-identical to those in \cite[Theorem 6.18]{FoucartRauhutCSbook}, and thus we give a brief treatment where possible. By definition, for any $\Delta_{i}$,
\begin{align*}
    & \Vert P_{\Delta_{i}}( x^{(n)} + A^{*}(y-Ax^{(n)})) \Vert_{\ell^{2}} \\
    & \qquad \qquad \qquad \qquad \leq \Vert P_{\Delta^{n+1}_{i}}( x^{(n)} + A^{*}(y-Ax^{(n)})) \Vert_{\ell^{2}}.
\end{align*}
Then we cancel any shared contribution on the set $\Delta_{i} \cap \Delta_{i}^{n+1}$,
\begin{align}
     & \Vert P_{\Delta_{i} \backslash \Delta_{i}^{n+1}}( x^{(n)} + A^{*}(y-Ax^{(n)})) \Vert_{\ell^{2}} \nonumber \\
     & \qquad \qquad \leq \Vert P_{\Delta^{n+1}_{i} \backslash \Delta_{i}}( x^{(n)} + A^{*}(y-Ax^{(n)})) \Vert_{\ell^{2}}. \label{removedIntIneq}
\end{align}
Here, making the observation that $P_{\Delta_{i}}x  = 0$ on $\Delta^{n+1}_{i} \backslash \Delta_{i}$ and 
$P_{M_{i}}^{M_{i-1}}x^{(n+1)}  = 0$ on  $\Delta_{i} \backslash \Delta^{(n+1)}_{i}$, we write the right-hand side of \cref{removedIntIneq} as 
\begin{align*}
   % & \Vert P_{\Delta^{n+1}_{i} \backslash \Delta_{i}}( x^{(n)} + A^{*}(y-Ax^{(n)})) \Vert_{\ell^{2}} \\
    %& \qquad = 
    \Vert P_{\Delta^{n+1}_{i} \backslash \Delta_{i}}( x^{(n)} - P_{\Delta_{i}}x+ A^{*}(y-Ax^{(n)})) \Vert_{\ell^{2}},
\end{align*}
and we bound the left-hand side of \cref{removedIntIneq} from below as 
\begin{align*}
    & \Vert P_{\Delta_{i} \backslash \Delta_{i}^{n+1}}( x^{(n)} + A^{*}(y-Ax^{(n)})) \Vert_{\ell^{2}}\\
    &\qquad \quad \geq \Vert P_{\Delta_{i} \backslash \Delta_{i}^{n+1}}( P_{\Delta_{i}}x - P_{M_{i}}^{M_{i-1}}x^{(n+1)}) \Vert_{\ell^{2}}  \\
    & \qquad \qquad - \Vert P_{\Delta_{i} \backslash \Delta_{i}^{n+1}}(x^{(n)} -  P_{\Delta_{i}}x + A^{*}(y-Ax^{(n)})) \Vert_{\ell^{2}}.
\end{align*}
Combining these both into \cref{removedIntIneq} we find that
\begin{align}
    & \Vert P_{\Delta_{i} \backslash \Delta_{i}^{n+1}}( P_{\Delta_{i}}x - P_{M_{i}}^{M_{i-1}}x^{(n+1)}) \Vert_{\ell^{2}} \label{boundSymettricDifferenceChain} \\
    &\qquad \leq \sqrt{2} \Vert P_{\Delta^{n+1}_{i} \ominus \Delta_{i}}( x^{(n)} - P_{\Delta_{i}}x+ A^{*}(y-Ax^{(n)})) \Vert_{\ell^{2}}, \nonumber
\end{align}
where $\Delta^{n+1}_{i} \ominus \Delta_{i} =  (\Delta^{n+1}_{i} \backslash \Delta_{i} )\cup( \Delta_{i} \backslash \Delta_{i}^{n+1}  )$ is the symmetric difference. We now seek to bound the left-hand side further from below. To do so, we decompose
\begin{align*}
    &\Vert  P_{M_{i}}^{M_{i-1}}x^{(n+1)} - P_{\Delta_{i}}x \Vert_{\ell^{2}}^{2}\\ 
    & \qquad \qquad = \Vert P_{\Delta^{n+1}_{i}}(P_{\Delta_{i}}x - P_{M_{i}}^{M_{i-1}}x^{(n+1)}) \Vert_{\ell^{2}}^{2} \\
    & \qquad \qquad \qquad + \Vert P_{(\Delta^{n+1}_{i})^{c}}( P_{\Delta_{i}}x - P_{M_{i}}^{M_{i-1}}x^{(n+1)}) \Vert_{\ell^{2}}^{2} \\
    & \qquad \qquad = \Vert P_{\Delta_{i}^{n+1}}(x^{(n)} - P_{\Delta_{i}}x + A^{*}(y - Ax^{(n)})) \Vert_{\ell^{2}}^{2} \\
    & \qquad \qquad \qquad  +\Vert P_{(\Delta^{n+1}_{i})^{c}}( P_{\Delta_{i}}x - P_{M_{i}}^{M_{i-1}}x^{(n+1)}) \Vert_{\ell^{2}}^{2}.
\end{align*}
Further observing that $P_{M_{i}}^{M_{i-1}}x^{(n+1)} = 0$ on $(\Delta_{i}^{n+1})^{c}$, and $P_{\Delta_{i}}x =0$ on $\Delta_{i}^{c}$, we can write
\begin{align*}
    &  \Vert  P_{M_{i}}^{M_{i-1}}x^{(n+1)} - P_{\Delta_{i}}x \Vert_{\ell^{2}}^{2} \\
    & \qquad \qquad =  \Vert P_{\Delta_{i}^{n+1}}(x^{(n)} - P_{\Delta_{i}}x + A^{*}(y - Ax^{(n)})) \Vert_{\ell^{2}}^{2} \\
     & \qquad \qquad \qquad +\Vert P_{\Delta_{i} \backslash \Delta_{i}^{n+1}}( P_{\Delta_{i}}x - P_{M_{i}}^{M_{i-1}}x^{(n+1)})  \Vert_{\ell^{2}}^{2}.
\end{align*}
Combining this argument with the previous bound \cref{boundSymettricDifferenceChain} we have, in summary,
\begin{align*}
& \Vert P_{M_{i}}^{M_{i-1}}x^{(n+1)} - P_{\Delta_{i}}x \Vert_{\ell^{2}}^{2} \\
& \qquad \leq \Vert P_{\Delta_{i}^{n+1}}(x^{(n)} - P_{\Delta_{i}}x+ A^{*}(y - Ax^{(n)})) \Vert_{\ell^{2}}^{2} \\
    &\qquad \qquad + 2 \Vert P_{\Delta^{n+1}_{i} \ominus \Delta_{i}}( x^{(n)} - P_{\Delta_{i}}x+ A^{*}(y-Ax^{(n)})) \Vert_{\ell^{2}}^{2} \\
    &\qquad \leq 3 \Vert P_{\Delta^{n+1}_{i} \cup \Delta_{i}}( x^{(n)} - P_{\Delta_{i}}x+ A^{*}(y-Ax^{(n)})) \Vert_{\ell^{2}}^{2}.
\end{align*}
Summing this over all levels $i=1, \ldots, r$, we have that
\begin{align*}
    & \Vert x^{(n+1)} - P_{\Delta}x \Vert_{\ell^{2}}^{2} \\
    & \qquad \leq 3 \Vert P_{\Delta^{n+1} \cup \Delta} ( x^{(n)} - P_{\Delta}x+ A^{*}(y-Ax^{(n)})) \Vert_{\ell^{2}}^{2}.
\end{align*}
By redefining $y = Ax + e = AP_{\Delta}x + e'$, with $e' = e + AP_{\Delta^c}x$, we may further bound this from above as
\begin{align}
& \Vert x^{(n+1)} - P_{\Delta}x \Vert_{\ell^{2}}\nonumber\\
&\quad \leq \sqrt{3} \left[\Vert P_{\Delta^{n+1} \cup \Delta} ( x^{(n)} - P_{\Delta}x+ A^{*}A(P_{\Delta}x - x^{(n)}) \Vert_{\ell^{2}}\right. \nonumber\\
& \qquad \qquad \qquad + \Vert  P_{\Delta^{n+1} \cup \Delta} A^{*}e' \Vert_{\ell^{2}} \nonumber \Big]\nonumber\\
&\quad \leq \sqrt{3} \left[\Vert P_{\Delta^{n+1} \cup \Delta}(I - A^{*}A)(x^{(n)} - P_{\Delta}x) \Vert_{\ell^{2}} \right.\nonumber\\
& \qquad \qquad \qquad +  \Vert  P_{\Delta^{n+1} \cup \Delta} A^{*}e' \Vert_{\ell^{2}}\Big]
\label{aboutToUSeRICL}.
\end{align}
Here we note that $\text{supp}(x^{(n)} - P_{\Delta}(x)) \subset \Delta \cup \Delta^{n}$, and $ (\Delta \cup \Delta^{n}) \cup (\Delta^{n+1} \cup \Delta) \in D_{3\bm{s},\bm{M}}$. These observations allow us to apply \cref{lemma:616Generalized}(ii) on the first term, 
and \cref{lemma:620generalized} on the second term, giving
\begin{align}
   & \Vert x^{(n+1)} - P_{\Delta}x \Vert_{\ell^{2}}\label{finalResultBeforeInduction}\\  
   & \qquad \leq \sqrt{3}\left[ \delta_{3\bm{s},\bm{M}} \Vert x^{(n)} - P_{\Delta}x\Vert_{\ell^{2}} + \sqrt{1 + \delta_{2\bm{s},\bm{M}}} \Vert e' \Vert_{\ell^{2}} \right].\nonumber
\end{align}
Finally by examining this inequality, we set
\begin{equation*}
    \rho = \sqrt{3}\delta_{3\bm{s},\bm{M}}, \qquad (1-\rho)\tau = \sqrt{3}\sqrt{1 + \delta_{2\bm{s},\bm{M}}}.
\end{equation*}
Recalling that $e' =  AP_{\Delta^c}x + e$, we have
\begin{equation*}
    \Vert x^{(n+1)} - P_{\Delta} x \Vert_{\ell^{2}} \leq \rho \Vert x^{(n)} - P_{\Delta} x\Vert_{\ell^{2}} + (1-\rho)\tau \Vert AP_{\Delta^{c}}x + e \Vert_{\ell^{2}},
\end{equation*}
which, by induction on $n$, gives 
\begin{equation*}
    \Vert x^{(n)} - P_{\Delta}x \Vert_{\ell^{2}} \leq \rho^{n} \Vert x^{(0)} - P_{\Delta}x\Vert_{\ell^{2}} + \tau \Vert AP_{\Delta^{c}}x + e \Vert_{\ell^{2}}.
\end{equation*}
This was precisely the result to be shown, noting that
$\rho < 1$ if and only if $\delta_{3\bm{s},\bm{M}} < \frac{1}{\sqrt{3}}$ and so
\begin{equation*}
\tau = \tfrac{\sqrt{3}\sqrt{1+\delta_{3\bm{s},\bm{M}}}}{1- \rho} < \tfrac{\sqrt{3 +\sqrt{3}}}{1-\rho} < \tfrac{2.18}{1-\rho}.
\end{equation*}
This concludes the proof.
%\simoChat{[I added a sentence since the QED square was not showing up in the correct position.]}
\end{proof}

\begin{proof} (Of \cref{thm:IHTFinalTheorem})
Using \cref{thm:IHTIntermediateBound} with $(2\bm{s},\bm{M})$ instead of $(\bm{s},\bm{M})$, there exist constants $\rho \in (0,1)$, $\tau > 0$ depending on $\delta_{6\bm{s},\bm{M}}$ such that
\begin{equation*}
\Vert x^{(n)} - P_{\Xi}x \Vert_{\ell^{2}} \leq \rho^{n} \Vert x^{(0)} - P_{\Xi}x\Vert_{\ell^{2}} + \tau \Vert AP_{\Xi^{c}}x + e \Vert_{\ell^{2}}.
\end{equation*}
where $\Xi = \Xi^{(0)} \cup \ldots \cup \Xi_{n}$ and $\Xi_i$ is the index set of the largest $2s_i$ entries of $P^{M_{i-1}}_{M_i}x$ (note that we applied Theorem~\ref{thm:IHTIntermediateBound} with $(2\bm{s},\bm{M})$ since $\Xi \in D_{2\bm{s},\bm{M}}$). Then, by letting $x' = x^{(n)}$ and $\lambda = \rho^{n}\Vert P_{\Xi}x \Vert_{\ell^{2}}$ (recall that $x^{(0)} = 0$) we may apply \cref{thm:WeightedTheorem} with $\kappa = 2$ to assert 
\begin{align*}
    \Vert x - x^{(n)} \Vert_{\ell^{1}_{w}} 
    &\leq C\tfrac{\sqrt{\zeta}}{\sqrt{\xi}}\sigma_{\bm{s,M}}(x)_{\ell_{w}^{1}} 
    + D \sqrt{\zeta} \Vert e \Vert_{\ell^{2}} 
    + 2 \sqrt{\zeta} \rho^{n} \Vert x \Vert_{\ell^{2}}, \\
\Vert x - x^{(n)} \Vert_{\ell^{2}} 
&\leq \tfrac{E}{\sqrt{\xi}}\sigma_{\bm{s,\bm{M}}}(x)_{\ell^{1}_{w}} + \tau \Vert e \Vert_{\ell^{2}} + \rho^{n} \Vert x \Vert_{\ell^{2}}
\end{align*}
where $C,D,E>0$ depend on $\tau, \rho, \kappa$ and thus only on $\delta_{6\bm{s}},\bm{M}$.
\end{proof}

\subsection{Proving \cref{thm:CoSamPlMainTheorem} on CoSaMP}
The following is based on \cite[Theorem 6.27]{FoucartRauhutCSbook}.

\begin{theorem} \label{CoSaMPLIntermediateTheorem} 
Suppose the $(4\bm{s},\bm{M})$-th RICL constant of the matrix $A \in \mathbb{C}^{m \times N}$ satisfies 
\begin{equation*}
    \delta_{4\bm{s},\bm{M}} < \tfrac{\sqrt{\sqrt{\frac{11}{3}} - 1}}{2}
\end{equation*}
Then for $x \in \mathbb{C}^{N}$, $e \in \mathbb{C}^{m}$ and index set $\Delta
\in D_{\bm{s},\bm{M}}$, the sequence $(x^{(n)})_{n \geq0}$ defined by $\mathrm{CoSaMPL}(A,y,\bm{s},\bm{M})$ with $y=Ax+e$ satisfies
\begin{equation*}
    \Vert x^{(n)} - P_{\Delta}x \Vert_{\ell^{2}} \leq \rho^{n} \Vert x^{(0)} - P_{\Delta}x \Vert_{\ell^{2}} + \tau \Vert AP_{\Delta^{c}}x + e \Vert_{\ell^{2}},
\end{equation*}
where $\rho \in (0,1)$ and $\tau >0$ are constants only depending on $\delta_{4\bm{s},\bm{M}}$.
\end{theorem}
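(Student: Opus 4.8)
The plan is to mirror the classical CoSaMP convergence proof in \cite[Theorem 6.27]{FoucartRauhutCSbook}, adapting every support-tracking argument to operate level-by-level and then recombine across levels. The target is a one-step contraction estimate of the form $\Vert x^{(n+1)} - P_{\Delta}x \Vert_{\ell^2} \leq \rho \Vert x^{(n)} - P_{\Delta}x \Vert_{\ell^2} + \kappa \Vert AP_{\Delta^c}x + e \Vert_{\ell^2}$, from which the stated bound follows by induction on $n$, exactly as in the proof of \cref{thm:IHTIntermediateBound}. As in the IHTL argument, I would first absorb the off-support contribution into an effective noise term by writing $y = AP_{\Delta}x + e'$ with $e' = AP_{\Delta^c}x + e$, so that the problem reduces to analyzing recovery of the $(\bm{s},\bm{M})$-sparse vector $P_{\Delta}x$ from noisy measurements.

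\textbf{Decomposing the CoSaMP iteration.} The single CoSaMPL iteration has three stages, and I would bound the error introduced at each. First, the \emph{identification} step $U^{(n+1)} = \mathrm{supp}(x^{(n)}) \cup L_{2\bm{s},\bm{M}}(A^*(y - Ax^{(n)}))$: I would show that the set $U^{(n+1)}$ captures most of the energy of $\Delta$, quantifying $\Vert P_{\Delta \setminus U^{(n+1)}}(P_{\Delta}x) \Vert_{\ell^2}$ in terms of $\delta_{4\bm{s},\bm{M}}$ and $\Vert x^{(n)} - P_{\Delta}x\Vert_{\ell^2}$. This is where the levels structure enters: since $L_{2\bm{s},\bm{M}}$ selects the largest $2s_i$ entries \emph{within each level}, the proximity estimate must be established level-by-level using the proxy $A^*A(P_{\Delta}x - x^{(n)})$ and then summed. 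Second, the \emph{estimation} (least-squares) step defining $u^{(n+1)}$: here I would use the RIPL via \cref{lemma:616Generalized} and \cref{lemma:620generalized} to bound $\Vert P_{\Delta}x - u^{(n+1)}\Vert_{\ell^2}$ on the active set, exploiting that $\mathrm{supp}(u^{(n+1)}) \subseteq U^{(n+1)} \in D_{4\bm{s},\bm{M}}$ (since $x^{(n)}$ is $(\bm{s},\bm{M})$-sparse and $L_{2\bm{s},\bm{M}}$ adds at most $2s_i$ per level). Third, the \emph{thresholding} step $x^{(n+1)} = H_{\bm{s},\bm{M}}(u^{(n+1)})$: using the near-optimality of the in-levels hard threshold, I would bound $\Vert x^{(n+1)} - P_{\Delta}x\Vert_{\ell^2}$ by (up to a factor of $2$) $\Vert u^{(n+1)} - P_{\Delta}x\Vert_{\ell^2}$, again arguing within each level before summing.

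\textbf{Assembling the contraction.} Chaining these three estimates yields a one-step bound whose contraction factor $\rho$ is precisely $\sqrt{2\delta_{4\bm{s},\bm{M}}^2(1+3\delta_{4\bm{s},\bm{M}}^2)/(1-\delta_{4\bm{s},\bm{M}}^2)}$, matching the sparse case with $\delta_{8s}$ replaced by $\delta_{8\bm{s},\bm{M}}$ in the final application (the intermediate theorem is stated at order $4\bm{s}$, and \cref{thm:CoSamPlMainTheorem} applies it with $2\bm{s}$ in place of $\bm{s}$). The stated condition $\delta_{4\bm{s},\bm{M}} < \tfrac{1}{2}\sqrt{\sqrt{11/3}-1}$ is exactly what forces $\rho < 1$. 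I would verify this by solving $\rho^2 < 1$ explicitly, recovering the threshold from the quadratic in $\delta^2$.

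\textbf{The main obstacle.} The delicate point, and the step I expect to be hardest, is the identification estimate: in the standard proof one uses that the $2s$ largest entries of the proxy capture the support of a $2s$-sparse vector, but here the selection is constrained \emph{per level}. I must ensure that the symmetric-difference and energy-comparison arguments — which in the sparse case rely on global ranking of proxy entries — still go through when entries are ranked only within their own level. Concretely, the inequality relating $\Vert P_{\Delta}x \Vert$ on the unselected indices to the proxy error must be shown to hold level-by-level and then combined via $\sum_i \Vert \cdot \Vert_{\ell^2}^2$, which works because the sets $U^{(n+1)}$ and $\Delta$ both respect the level partition. Once this level-localized identification bound is secured, the remaining estimation and thresholding steps parallel the sparse argument closely, requiring only that each invocation of the RIP be replaced by the corresponding RICL bound from \cref{lemma:616Generalized} and \cref{lemma:620generalized}.
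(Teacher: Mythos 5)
Your plan follows essentially the same route as the paper's proof: absorb the off-support part into $e' = AP_{\Delta^c}x + e$, control the identification step via the per-level maximality of $T^{(n+1)} = L_{2\bm{s},\bm{M}}(A^{*}(y-Ax^{(n)}))$ over $D_{2\bm{s},\bm{M}}$ combined with a symmetric-difference argument, handle the least-squares step through the normal equations together with \cref{lemma:616Generalized} and \cref{lemma:620generalized}, and chain the three stages into a one-step contraction followed by induction, exactly as in \cite[Theorem 6.27]{FoucartRauhutCSbook}. The one substantive caveat is your thresholding step: the crude bound $\Vert x^{(n+1)} - P_{\Delta}x\Vert_{\ell^2} \leq 2\Vert u^{(n+1)} - P_{\Delta}x\Vert_{\ell^2}$ gives roughly $\rho \approx \sqrt{8\delta^2/(1-\delta^2)}$ and hence the stricter requirement $\delta_{4\bm{s},\bm{M}} < 1/3$; to recover the stated $\rho = \sqrt{2\delta^2(1+3\delta^2)/(1-\delta^2)}$ and threshold $\delta_{4\bm{s},\bm{M}} < \tfrac{1}{2}\sqrt{\sqrt{11/3}-1}$ you must instead use the refined split $\Vert P_{\Delta}x - x^{(n+1)}\Vert_{\ell^2}^{2} \leq \Vert P_{(U^{(n+1)})^{c}}(P_{\Delta}x - u^{(n+1)})\Vert_{\ell^2}^{2} + 4\Vert P_{U^{(n+1)}}(P_{\Delta}x - u^{(n+1)})\Vert_{\ell^2}^{2}$, which is how the paper (following \cite{FoucartRauhutCSbook}) proceeds.
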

\begin{proof} As before, with correct treatment of our index sets, many of the algebraic manipulations follow near-identically from \cite[Theorem 6.27]{FoucartRauhutCSbook}. We have
\begin{align*}
    \Vert P_{U^{(n+1)}}(P_{\Delta}x - x^{(n+1)}) \Vert_{\ell^{2}}
    &\leq \Vert u^{(n+1)} - x^{(n+1)} \Vert_{\ell^{2}} \\
    & \quad + \Vert u^{(n+1)} -  P_{U^{(n+1)} \cap \Delta}x \Vert_{\ell^{2}}. 
    %\label{intermediateStep1CoSaMP}
\end{align*}
Further as $x^{(n+1)} = H_{\bm{s},\bm{M}}(u^{(n+1)})$ we bound $\Vert u^{(n+1)} - x^{(n+1)} \Vert_{\ell^{2}} \leq  \Vert u^{(n+1)} -  P_{U^{(n+1)} \cap \Delta}x \Vert_{\ell^{2}}$. This result, combined with the fact that $P_{(U^{(n+1)})^{c} }x^{(n+1)} = P_{(U^{(n+1)})^{c} }u^{(n+1)} = 0$, 
%\simoChat{[In the previous proofs, we were using $P_{\Delta^c}$ and now $P_{\Delta}^{\perp}$. I would only use one of the two. Also, we need to define $P^\perp_{\Delta}$ if we want to use it.]} 
%\matt{}{[ I think we should use $P_{\Delta^{c}}$ to be in line with the previous proofs. I also think the complement notation is easier to follow, albeit more cumbersome. Let me know if you folks have thoughts. ]} 
asserts
\begin{align}
    \Vert P_{\Delta}x - x^{(n+1)} \Vert_{\ell^{2}}^{2} 
    &= \Vert P_{ (U^{(n+1)})^{c} }(P_{\Delta}x - x^{(n+1)}) \Vert_{\ell^{2}}^{2} \nonumber\\
    & \quad + \Vert  P_{U^{(n+1)}}(P_{\Delta}x - x^{(n+1)}) \Vert_{\ell^{2}}^{2} \nonumber \\
    &\leq \Vert P_{(U^{(n+1)})^{c} }(P_{\Delta}x - u^{(n+1)}) \Vert_{\ell^{2}}^{2} \nonumber \\
    & \quad + 4\Vert  P_{U^{(n+1)}}(P_{\Delta}x - u^{(n+1)}) \Vert_{\ell^{2}}^{2}. \label{CoSamPLFinalStep1}
\end{align}
We will us this bound later, but we now examine the latter term more closely. 

We first make the observation that $P_{U^{(n+1)}}A^{*}(y-Au^{(n+1)}) = 0,$
as $u^{(n+1)}$ satisfies the normal equations when restricted to its support. Thus we may write $P_{U^{(n+1)}}A^{*}A(P_{\Delta}x - u^{(n+1)}) = -P_{U^{(n+1)}}A^{*}e',$
where $e'=AP_{\Delta^{c}}x + e$. We use this to write
\begin{align*}
    & \Vert P_{U^{(n+1)}}(P_{\Delta}x - u^{(n+1)}) \Vert_{\ell^{2}} \\
    & \qquad \leq \Vert (I - A^{*}A)(P_{\Delta}x - u^{(n+1)}) \Vert_{\ell^{2}} 
    + \Vert P_{U^{(n+1)}}A^{*}e' \Vert_{\ell^{2}}.
\end{align*}
Now as $\Delta \in D_{\bm{s},\bm{M}}$ and $U^{(n+1)} \in D_{3\bm{s},\bm{M}}$, we have their union is in $D_{4\bm{s},\bm{M}}$. Thus using \cref{lemma:616Generalized} (ii) gives $\Vert (I - A^{*}A)(P_{\Delta}x - u^{(n+1)}) \Vert_{\ell^{2}} \leq \delta_{4\bm{s},\bm{m}} \Vert P_{\Delta}x - u^{(n+1)} \Vert_{\ell^{2}}$, and so
\begin{align}
   &   \Vert P_{U^{(n+1)}}(P_{\Delta}x - u^{(n+1)}) \Vert_{\ell^{2}} \nonumber \\
   & \qquad \leq \delta_{4\bm{s},\bm{m}} \Vert P_{\Delta}x - u^{(n+1)} \Vert_{\ell^{2}} + \Vert P_{U^{(n+1)}}A^{*}e' \Vert_{\ell^{2}}. \label{Step2CoSaMPIndexSets}
\end{align}
From here denoting $\delta_{4\bm{s},\bm{M}} = \delta$, we wish to derive the inequality
\begin{align}
   &   \Vert P_{U^{(n+1)}}(P_{\Delta}x - u^{(n+1)}) \Vert_{\ell^{2}} \label{CoSaMplFinalStep2}\\
   & \quad \leq \frac{\delta\Vert P_{(U^{(n+1))^{c} }}(P_{\Delta}x - u^{(n+1)} )\Vert_{\ell^{2}}}{\sqrt{1-\delta^{2}}} 
    + \frac{\Vert P_{U^{(n+1)}}A^{*}e' \Vert_{\ell^{2}}}{1-\delta} . \nonumber
\end{align}
For the sake of brevity, we assert the desired inequality follows from purely algebraic manipulations, and uses no sparsity properties. These steps for the sparse case can be found in detail in \cite[Theorem~6.28, p.~166]{FoucartRauhutCSbook}. As before, we save this bound for later use, and switch to a final argument.

We first recall that CoSaMPL defines $S^{(n)} = \text{supp}(x^{({n})})$, and that $S^{(n)} \subset U^{(n+1)}$.  Further we define $T^{(n+1)} = L_{2\bm{s},\bm{M}}(A^{*}(y-Ax^{(n)}))$. As this is the index set of the largest $(2\bm{s},\bm{M})$ entries of $A^{*}(y-Ax^{(n)})$, and $\Delta \cup S^{(n)} \in D_{2\bm{s},\bm{M}} $ we have
\begin{equation*}
    \Vert P_{\Delta \cup S^{(n)}}A^{*}(y-Ax^{(n)}) \Vert_{\ell^{2}} 
    \leq \Vert P_{T^{(n+1)}} A^{*}(y-Ax^{(n)}) \Vert_{\ell^{2}}.
\end{equation*}
In turn, eliminating the shared contribution on $(\Delta \cup S^{(n)}) \cap T^{(n+1)}$ we find
\begin{align*}
    & \Vert P_{(\Delta \cup S^{(n)}) \backslash T^{(n+1)}} A^{*}(y-Ax^{(n)}) \Vert_{\ell^{2}} \\
    & \qquad \qquad \qquad \leq \Vert P_{T^{(n+1)} \backslash (\Delta \cup S^{(n)})} A^{*}(y-Ax^{(n)}) \Vert_{\ell^{2}}.
\end{align*}
Now as $P_{\Delta}x - x^{(n)} = 0$ on $T^{(n+1)} \backslash (\Delta \cup S^{(n)})$ we may write the right-hand side of the above as
\begin{equation*}
%    \Vert P_{T^{(n+1)} \backslash (\Delta \cup S^{(n)})}A^{*}(y-Ax^{(n)}) \Vert_{\ell^{2}} 
    %= 
    \Vert P_{T^{(n+1)} \backslash (\Delta \cup S^{(n)})}(P_{\Delta}x - x^{(n)} + A^{*}(y-Ax^{(n)})) \Vert_{\ell^{2}},
\end{equation*}
whereas for the left-hand side we apply a reverse triangle inequality
\begin{align*}
    & \Vert P_{(\Delta \cup S^{(n)}) \backslash T^{(n+1)}} A^{*}(y-Ax^{(n)}) \Vert_{\ell^{2}}\\
    &\quad \geq \Vert P_{(\Delta \cup S^{(n)}) \backslash T^{(n+1)}}(P_{\Delta}x - x^{(n)} )\Vert_{\ell^{2}} \\
    & \quad \quad - \Vert P_{(\Delta \cup S^{(n)}) \backslash T^{(n+1)}}( P_{\Delta}x - x^{(n)} +A^{*}(y-Ax^{(n)}) )\Vert_{\ell^{2}} \\
    &\quad = \Vert P_{(T^{(n+1)})^{c}}(P_{\Delta}x - x^{(n)} )\Vert_{\ell^{2}} \\
    & \quad\quad  - \Vert P_{(\Delta \cup S^{(n)}) \backslash T^{(n+1)}}( P_{\Delta}x - x^{(n)} +A^{*}(y-Ax^{(n)}) )\Vert_{\ell^{2}}.
\end{align*}
Combining these two observations and rearranging gives
\begin{align*}
    &\Vert P_{(T^{(n+1)})^{c}}(P_{\Delta}x - x^{(n)} )\Vert_{\ell^{2}}\\
    &\quad\leq \sqrt{2} \Vert P_{T^{(n+1)} \ominus (\Delta \cup S^{(n)})}(P_{\Delta}x - x^{(n)} + A^{*}(y-Ax^{(n)})) \Vert_{\ell^{2}} \\
    &\quad\leq \sqrt{2}\Vert  P_{T^{(n+1)} \ominus (\Delta \cup S^{(n)})}(I - A^{*}A)(x^{(n)} - P_{\Delta}x) \Vert_{\ell^{2}} \\
    & \quad \qquad \qquad + \sqrt{2} \Vert  P_{T^{(n+1)} \ominus (\Delta \cup S^{(n)})}A^{*}e' \Vert_{\ell^{2}},
\end{align*}
where $\ominus$ denotes the symmetric difference, and $y = AP_{\Delta}x+e'$ is as before. Now, as $T^{(n+1)} \subset U^{(n+1)}$ and $S^{(n)} \subset U^{(n+1)}$ by the definition of CoSaMPL, we may bound the left-hand side of the above equation from below by
\begin{align*}
     \Vert P_{(T^{(n+1)})^{c}}(P_{\Delta}x - x^{(n)} )\Vert_{\ell^{2}} 
     & \geq  \Vert P_{(U^{(n+1)})^{c}}(P_{\Delta}x - x^{(n)} )\Vert_{\ell^{2}}\\
      & = \Vert P_{(U^{(n+1)})^{c}} P_{\Delta}x \Vert_{\ell^{2}} \\
     & = \Vert P_{ (U^{(n+1)})^{c} }(P_{\Delta}x - u^{(n+1)}) \Vert_{\ell^{2}}.
\end{align*}
With this lower bound in hand, we note that $\Delta,S^{(n)} \in D_{\bm{s},\bm{M}}$ and $T^{(n+1)} \in D_{2\bm{s},\bm{M}}$ so that we may apply \cref{lemma:620generalized} (ii) with $T^{(n+1)} \ominus (\Delta \cup S^{(n)}) \subset T^{(n+1)} \cup (\Delta \cup S^{(n)}) \in D_{4\bm{s},\bm{M}}$ on the term $\Vert  P_{T^{(n+1)} \ominus (\Delta \cup S^{(n)})}(I - A^{*}A)(x^{n} - P_{\Delta}x) \Vert_{\ell^{2}}$.  Combining this series of observations gives
\begin{align}
    & \Vert P_{{ (U^{(n+1)})^{c} }}(P_{\Delta}x - u^{(n+1)}) \Vert_{\ell^{2}} \label{CoSaMPLFinalStep3} \\
    &  \leq \sqrt{2} \delta_{4\bm{s},\bm{M}} \Vert x^{n} - P_{\Delta}x \Vert_{\ell^{2}} 
    + \sqrt{2} \Vert  P_{T^{(n+1)} \ominus (\Delta \cup S^{(n)})}(A^{*}e') \Vert_{\ell^{2}}. \nonumber
\end{align}

To conclude our argument, it remains to combine the three distinct results of equations \cref{CoSamPLFinalStep1}, \cref{CoSaMplFinalStep2} and \cref{CoSaMPLFinalStep3}. Again, this is near identical to the sparse case in \cite[Theorem 6.27]{FoucartRauhutCSbook}, and contains purely algebraic manipulations. This leads to the inequality
\begin{align*}
    \Vert P_{\Delta}x - x^{(n+1)} \Vert_{\ell^{2}} 
    & \leq \sqrt{\tfrac{ 2\delta^{2}(1 +3\delta^{2})}{1-\delta^{2}} } \Vert x^{(n)} - P_{\Delta}x \Vert_{\ell^{2}} \\
    & \quad + \sqrt{ \tfrac{2 (1 +3\delta^{2})}{1-\delta^{2}} } \Vert  P_{T^{(n+1)} \ominus (\Delta \cup S^{(n)})} A^{*}e' \Vert_{\ell^{2}} \\
    & \quad  + \tfrac{2}{1-\delta}\Vert P_{U^{(n+1)}}A^{*}e' \Vert_{\ell^{2}}.
\end{align*}
Now using \cref{lemma:620generalized} on the sets $T^{(n+1)} \ominus (\Delta \cup S^{(n)}) \in D_{4\bm{s},\bm{M}}$ and $U^{(n+1)} \in D_{3\bm{s},\bm{M}} \subset D_{4\bm{s},\bm{M}}$ we find
\begin{align*}
    & \Vert P_{\Delta}x - x^{(n+1)} \Vert_{\ell^{2}} \leq \sqrt{ \tfrac{2\delta^{2}(1 +3\delta^{2})}{1-\delta^{2}}} \Vert x^{(n)} - P_{\Delta}x \Vert_{\ell^{2}} \\
    & \quad \qquad \qquad + \left( \sqrt{ \tfrac{2(1+\delta)(1 +3\delta^{2})}{1-\delta^{2}}} + \tfrac{2\sqrt{1+\delta}}{1-\delta}\right) \Vert  e'\Vert_{\ell^{2}},
\end{align*}
which is exactly
\begin{equation*}
    \Vert P_{\Delta}x - x^{(n+1)} \Vert_{\ell^{2}} \leq \rho \Vert x^{(n)} - P_{\Delta}x \Vert_{\ell^{2}} + \tau \Vert AP_{\Delta^{c}}x + e \Vert_{\ell^{2}},
\end{equation*}
for suitable $\rho, \tau >0$ depending only on $\delta$. Then by a simple induction we have
\begin{equation*}
     \Vert x^{(n)} - P_{\Delta}x \Vert_{\ell^{2}} \leq \rho^{n} \Vert x^{(0)} - P_{\Delta}x \Vert_{\ell^{2}} + \tau \Vert AP_{\Delta^{c}}x + e \Vert_{\ell^{2}}.
\end{equation*}
Which is precisely the result to be shown. Notably, the constant $\rho < 1$ only if 
$
    \sqrt{ \frac{2\delta^{2}(1 +3\delta^{2})}{1-\delta^{2}}} < 1 \Leftrightarrow 6\delta^{4} + 3 \delta^{2} -1 <0,
$
which by solving this quadratic in $\delta^{2}$ for its largest root gives us that we require $
    \delta^{2} < \frac{\sqrt{\frac{11}{3}} - 1}{4}$
as was assumed.
\end{proof}

\begin{proof} (Of \cref{thm:CoSamPlMainTheorem}) Under the hypotheses of the theorem, let us denote $\Xi = L_{2\bm{s},\bm{M}}(x)$ to be the index set corresponding to the largest $(2\bm{s},\bm{M})$ entries of $x$. First, we may apply \cref{CoSaMPLIntermediateTheorem} to assert there exist $\rho \in (0,1)$ and $\tau > 0$ depending only on $\delta_{8\bm{s},\bm{M}}$ such that, for any $n\geq 0$,
\begin{equation*}
     \Vert x^{(n)} - P_{\Xi}x \Vert_{\ell^{2}} \leq \rho^{n} \Vert P_{\Xi}x \Vert_{\ell^{2}} + \tau \Vert AP_{\Xi^{c}}x + e \Vert_{\ell^{2}}.
\end{equation*}
Then, we may apply \cref{thm:WeightedTheorem} with $x' = x^{(n)}$ and $\lambda = \rho^{n} \Vert P_{\Xi}x \Vert_{\ell^{2}} \leq \rho^{n} \Vert x \Vert_{\ell^{2}}$ to give us that
\begin{align*}
    \Vert x - x^{(n)} \Vert_{\ell^{1}_{w}} 
    &\leq C\tfrac{\sqrt{\zeta}}{\sqrt{\xi}}\sigma_{\bm{s,M}}(x)_{\ell_{w}^{1}} 
    + D \sqrt{\zeta} \Vert e \Vert_{\ell^{2}} 
    + 2 \sqrt{\zeta} \rho^{n} \Vert x \Vert_{\ell^{2}}, \\
\Vert x - x^{(n)} \Vert_{\ell^{2}} 
&\leq \tfrac{E}{\sqrt{\xi}}\sigma_{\bm{s,\bm{M}}}(x)_{\ell^{1}_{w}} + \tau \Vert e \Vert_{\ell^{2}} + \rho^{n} \Vert x \Vert_{\ell^{2}}
\end{align*}
where $C,D,E >0$ depend only on $\tau,\rho,\delta_{8\bm{s},\bm{M}}$ and thus only on $\delta_{8\bm{s},\bm{M}}$. This is exactly the result that was to be shown.
\end{proof}

\subsection{Proving Theorem \ref{thm:simplerQCBPresult} on QCBP}
\label{ss:simplerQCBPresultproof}

\begin{proof}[Proof of Theorem \ref{thm:simplerQCBPresult}]
A matrix $A \in \mathbb{C}^{m \times N}$ has the weighted robust null space property in levels of order $(\bm{s},\bm{M})$ with constants $0 < \rho < 1$ and $\gamma > 0$ if
$$
\Vert P_{\Delta}x \Vert_{2} \leq \frac{\rho \Vert P_{\Delta^{c}}x \Vert_{\ell^{1}_{w}}}{\sqrt{\zeta } } + \gamma \Vert Ax \Vert_{2},
$$
for all $x \in \mathbb{C}^{n}$ and $\Delta \in D_{\bm{s},\bm{M}}$, see \cite[Defn.\ 5.1]{AAHWalshWavelet}. Modifying the proof of Theorem 5.4 of \cite{AAHWalshWavelet} in a minor way, it can be shown that if $A$ has this property, then any minimizer $\hat{x}$ of \eqref{QCBP} with $y = A x  +e$, where $x \in \mathbb{C}^N$ and $\Vert e \Vert_{\ell^2} \leq \eta$, satisfies
\begin{align*}
\Vert x - \hat{x} \Vert_{\ell^{1}_{w}} 
    &\leq C \sigma_{\bm{s,M}}(x)_{\ell_{w}^{1}} 
    + D \gamma \sqrt{\zeta} \eta, 
    \\
\Vert x - \hat{x} \Vert_{\ell^{2}} &\leq \left ( 1 + (\zeta / \xi)^{1/4} \right ) \left ( \frac{E}{\sqrt{\xi}}\sigma_{\bm{s,\bm{M}}}(x)_{\ell^{1}_{w}} + F \gamma \eta \right ),
\end{align*}
where $C,D,E,F$ depend on $\rho$ only. Therefore, it suffices to show that if $A$ has the RIPL of order $(2 \bm{s},\bm{M})$ with constant $\delta_{2 \bm{s} , \bm{M}} < \frac{1}{\sqrt{\frac{2\zeta}{\xi}} +1}$ then it also has the weighted robust null space property in levels. 

We begin by performing a very similar decomposition as in the proof of Theorem~\ref{thm:IHTIntermediateBound}. For $l = 1, \ldots r$, let $\Xi_{0,l}$ be the index set to the largest $s_{l}$ entries of $P_{M_{l}}^{M_{l-1}}x$ in absolute value. Then, define $\Xi^{(0)}= \Xi_{0,1} \cup \ldots \cup \Xi_{0,r}$. For such index set, we decompose $(\Xi^{(0)})^{c}=\{1,\ldots,N\}\setminus \Xi^{(0)}$, letting
\begin{equation*}
    \Xi_{0,l}^{c} = \Xi_{1,l} \cup \Xi_{2,l} \cup \ldots,
\end{equation*}
where $\Xi_{1,l}$ is index set of the $s_{l}$ largest entries of $P_{\Xi_{0,l}^{c}}x$, $\Xi_{2,l}$ is the index set of the largest $s_{l}$ entries of $P_{(\Xi_{0,l} \cup \Xi_{1,l})^{c}}x$, and so on, letting $\Xi_{i,l} = \emptyset$ as needed for sufficiently large $i$. (Note that $\Xi_{0,l}^c$, $(\Xi_{0,l} \cup \Xi_{1,l})^{c}$, etc. are relative complements with respect to the level $l$). Finally we define $\Xi^{(i)} = \Xi_{i,1} \cup \ldots \cup \Xi_{i,r}$ for each $i = 1,2, \ldots$. Then
\begin{equation}
    \Vert P_{\Xi^{(0)} \cup \Xi^{(1)}} x\Vert_{\ell^{2}}^{2} \leq \tfrac{1}{1- \delta_{2\bm{s},\bm{M}} } \Vert A  P_{\Xi^{(0)} \cup \Xi^{(1)}} x \Vert_{\ell^{2}}^{2}, \label{eqn:wRNSP1}
\end{equation}
as by assumption $A$ has the RIPL of order $(2\bm{s},\bm{M})$. Then expanding according to the partition and using the RIPL again we obtain
\begin{align}
    \Vert A  P_{\Xi^{(0)} \cup \Xi^{(1)}}x \Vert_{\ell^{2}}^{2}  
    & \leq \sqrt{1 + \delta_{2\bm{s},\bm{M}}}  \Vert P_{\Xi^{(0)} \cup \Xi^{(1)}}x \Vert_{\ell^{2}} \Vert Ax \Vert_{\ell^{2}} \nonumber \\
    & \quad + \sum_{i \geq 2} \vert \langle AP_{\Xi^{(0)} \cup \Xi^{(1)}}x , AP_{\Xi^{(i)}}x \rangle \vert. \label{eqn:wRNSP2}
\end{align}
Now, for $i \geq 2$, using that $\vert \langle P_{\Xi^{(0)} \cup \Xi^{(1)}}x , P_{\Xi^{(i)}}x \rangle \vert = 0$ in tandem with Lemma~\ref{lemma:616Generalized} (i) we see that
\begin{align*}
    & \vert \langle AP_{\Xi^{(0)} \cup \Xi^{(1)}}x , AP_{\Xi^{(i)}}x \rangle \vert \\
    &\qquad \qquad \leq \delta_{2\bm{s},\bm{M}} (\Vert P_{\Xi^{(0)}} x \Vert_{\ell^{2}} + \Vert P_{\Xi^{(1)}}x \Vert_{\ell^{2}} ) \Vert P_{\Xi^{(i)}}x \Vert_{\ell^{2}} \\
 &\qquad\qquad  \leq  \sqrt{2}\delta_{2\bm{s},\bm{M}}  \Vert P_{\Xi^{(0)} \cup \Xi^{(1)}}x \Vert_{\ell^{2}} \Vert P_{\Xi^{(i)}}x \Vert_{\ell^{2}}.
\end{align*}
Furthermore using bounds \eqref{eqn:wRNSP1} and \eqref{eqn:wRNSP2} and the RIPL we have 
\begin{align}
    & \Vert P_{\Xi^{(0)} \cup \Xi^{(1)}}x \Vert_{\ell^{2}} \nonumber \\
    & \quad \leq \tfrac{\sqrt{1 + \delta_{2\bm{s},\bm{M}}}}{1 - \delta_{2\bm{s},\bm{M}}} \Vert Ax \Vert_{\ell^{2}} + \sqrt{2} \tfrac{\delta_{2\bm{s},\bm{M}}}{1- \delta_{2\bm{s},\bm{M}}} \sum_{i \geq 2} \Vert P_{\Xi^{(i)}}x \Vert_{\ell^{2}}. \label{eqn:wRNSPL3}
\end{align}
Recalling our goal is the show the weighted robust null space property, we need to relate the summation in the latter term to $\Vert P_{\Delta}^{c}x \Vert_{\ell^{1}_{w}}$, where $\Delta = \Xi^{(0)}$. But by construction
\begin{align*}
    \Vert P_{\Xi_{i,l}}x \Vert_{\ell^{2}} 
    & \leq \sqrt{s_{l}} \Vert P_{\Xi_{i,l}}x \Vert_{\ell^{\infty}} 
    \leq \sqrt{s_{l}} \min_{j \in \Xi_{i-1,l}} \vert x_{j} \vert \\
    & \leq\tfrac{  \Vert P_{\Xi_{i-1,l}}x \Vert_{\ell^{1}}}{\sqrt{s_{l}}} 
    =  \tfrac{\Vert P_{\Xi_{i-1,l}}x \Vert_{\ell^{1}_{w}}}{w_{l}\sqrt{s_{l}}}.
\end{align*}
Thus overall, we have
\begin{align*}
    \Vert P_{\Xi^{(i)}} \Vert_{\ell^{2}}^{2} &= \sum_{l=1}^{r} \Vert P_{\Xi_{i,l}}x \Vert_{\ell^{2}}^{2} \leq \sum_{l=1}^{r} \left( \tfrac{\Vert P_{\Xi_{i-1,l}}x \Vert_{\ell^{1}_{w}}}{w_{l}\sqrt{s_{l}}} \right)^{2} \\
    & \leq \tfrac{1}{\xi} \Vert P_{\Xi^{(i-1)}}x \Vert_{\ell^{1}_{w}}^{2}.
\end{align*}
And hence,
\begin{equation*}
    \sum_{i\geq 2}  \Vert P_{\Xi_{i,l}}x \Vert_{\ell^{2}} 
    \leq \tfrac{1}{\sqrt\xi} \sum_{i\geq 2} \vert P_{\Xi^{(i-1)}}x \Vert_{\ell^{1}_{w}}  
    = \tfrac{1}{\sqrt{\xi}} \Vert P_{\Delta^{c}} x \Vert_{\ell^{1}_{w}}.
\end{equation*}
Combining this with \eqref{eqn:wRNSPL3} gives that
\begin{align*}
    &\Vert P_{\Delta}x \Vert_{\ell^{2}} 
    \leq \Vert P_{\Xi^{(0)} \cup \Xi^{(1)}}x \Vert_{\ell^{2}} \\
     & \quad\leq \tfrac{\sqrt{1 + \delta_{2\bm{s},\bm{M}}}}{1 - \delta_{2\bm{s},\bm{M}}} \Vert Ax \Vert_{\ell^{2}} + \sqrt{2} \tfrac{\delta_{2\bm{s},\bm{M}}}{1- \delta_{2\bm{s},\bm{M}}} \tfrac{1}{\sqrt\xi} \Vert P_{\Delta^{c}} x \Vert_{\ell^{1}_{w}} \\
    & \quad= \tfrac{\sqrt{1 + \delta_{2\bm{s},\bm{M}}}}{1 - \delta_{2\bm{s},\bm{M}}} \Vert Ax \Vert_{\ell^{2}} + \sqrt{2} \tfrac{\delta_{2\bm{s},\bm{M}}}{1- \delta_{2\bm{s},\bm{M}}} \tfrac{\sqrt{\zeta}}{\sqrt\xi} \frac{\Vert P_{\Delta^{c}} x \Vert_{\ell^{1}_{w}}}{\sqrt{\zeta}}.
\end{align*}
Hence, $A$ has the wrNSPL provided
\begin{equation*}
    \sqrt{2} \tfrac{\delta_{2\bm{s},\bm{M}}}{1- \delta_{2\bm{s},\bm{M}}} \tfrac{\sqrt{\zeta}}{\sqrt\xi}  < 1,
\end{equation*}
or namely $\delta_{2\bm{s},\bm{M}} < \frac{1}{\sqrt{\frac{2\zeta}{\xi}} +1}$, as required.
\end{proof}

\section{Open problems\label{sec:conclusion}}

We conclude this work by discussing directions of future interest and open questions. In the context of our numerical experiments, we have observed that the levels based version of normalized IHT  \cite{BlumensathDavies2010} consistently outperforms IHTL. Therefore, a natural question is whether the theoretical guarantees for normalized IHT proved in \cite{BlumensathDavies2010} can be generalized to the levels based setting.

%With IHT and CoSaMP now generalized to the sparse in levels model, it remains to be seen if improvements to these algorithms used in the sparse case can also be generalized. A natural question is if one can (deterministically or adaptively) identify a stepsize that boosts the performance of IHTL or CoSaMPL. If so, how does it depend on the local sparsities and number of levels, if at all? This is of particular importance to IHTL, where it has been observed that altering the stepsize can drastically effect performance, causing improvement or convergence issues depending on the quality of the stepsize. However, numerically CoSaMP is much more robust to changes in the stepsize \cite{adcock2019iterative}. 

Another open problem directly following this work is whether other greedy algorithms can be generalized, with OMP being a first candidate. Any algorithm that permits a result of the style of \cref{thm:IHTIntermediateBound} or \cref{CoSaMPLIntermediateTheorem} would allow application of \cref{thm:WeightedTheorem} for a stability estimate, but more sophisticated techniques may be required in general. This is of particular interest in the case of OMP, which when generalized correctly performs well numerically in some cases. However OMPL is not a uniform improvement over OMP in our experiments, contrasting with IHTL and CoSaMPL. This provokes the natural question of whether the formulation of OMPL here is the best possible - and if another variant would improve further. Of similar importance are the Matching Pursuit \cite{mallatzhang1993matching} and Subspace Pursuit algorithms \cite{dai2009subspace}, which have not yet been studied in this context, but are well studied in the sparse setting \cite{mallatzhang1993matching,PatiOrthogonalMatchingPursuit}. These have the potential to perform well when correctly generalized to the sparse in levels class, lending further foundation for practical use. 

Finally, applying these results to optimal function approximation is an open problem. Thus far, encoder-decoder pairs have been optimization programs without guaranteed computational cost \cite{BASBMKRCSwavelet}. These iterative approaches, with computational guarantees, may serve to replace optimization programs in these problems - allowing for known computational time \text{a priori}. This would be of particular use in, for example, imaging problems, where the optimization approaches already have been shown to perform well both in theory and practice~\cite{BASBMKRCSwavelet}.

% use section* for acknowledgment
\section*{Acknowledgment}

BA and MKR acknowledge the support of the PIMS CRG ``High-dimensional Data Analysis'', SFU's Big Data Initiative ``Next Big Question'' Fund and by NSERC through grant R611675. SB acknowledges NSERC through grant RGPIN-2020-06766 and the Faculty of Arts and Science of Concordia University. The authors would like to thank the anonymous reviewers whose valuable comments led to a significant improvement of the paper.

\bibliographystyle{plain}
\bibliography{references}

% biography section
% 
% If you have an EPS/PDF photo (graphicx package needed) extra braces are
% needed around the contents of the optional argument to biography to prevent
% the LaTeX parser from getting confused when it sees the complicated
% \includegraphics command within an optional argument. (You could create
% your own custom macro containing the \includegraphics command to make things
% simpler here.)
%\begin{IEEEbiography}[{\includegraphics[width=1in,height=1.25in,clip,keepaspectratio]{mshell}}]{Michael Shell}
% or if you just want to reserve a space for a photo:

%\begin{IEEEbiographynophoto}{Ben Adcock}
%Biography text here.
%\end{IEEEbiographynophoto}
%
%% if you will not have a photo at all:
%\begin{IEEEbiographynophoto}{Simone Brugiapaglia}
%Biography text here.
%\end{IEEEbiographynophoto}
%
%% insert where needed to balance the two columns on the last page with
%% biographies
%%\newpage
%
%\begin{IEEEbiographynophoto}{Matthew King-Roskamp}
%Biography text here.
%\end{IEEEbiographynophoto}

% You can push biographies down or up by placing
% a \vfill before or after them. The appropriate
% use of \vfill depends on what kind of text is
% on the last page and whether or not the columns
% are being equalized.

%\vfill

% Can be used to pull up biographies so that the bottom of the last one
% is flush with the other column.
%\enlargethispage{-5in}

%\section*{Acknowledgments}
% \simo{}{SB acknowledges the Faculty of Arts and Science of Concordia University for the financial support.}

\end{document}